\documentclass[conference, onecolumn]{IEEEtran}
\usepackage{fullpage}

\usepackage[T1]{fontenc} 
\usepackage{amsmath,amssymb,latexsym,float,amsthm}
\usepackage[cmintegrals]{newtxmath}
\usepackage{tabularx}
\usepackage{bm} 
\usepackage{float}
\usepackage{graphicx, subfigure,epstopdf}
\usepackage{epsfig}
\usepackage{multirow}
\usepackage{url}
\usepackage{algorithm}
\usepackage{algpseudocode, color,xcolor}
\usepackage{comment}
\usepackage{lineno}
\usepackage{caption}
\usepackage{array}

\newtheorem{thm}{Theorem}
\newtheorem{lem}{Lemma}

\newtheorem{prop}{Proposition}

\allowdisplaybreaks

\newcommand{\R}{\mathbb{R}}
\newcommand{\C}{\mathbb{C}}

\renewcommand{\Re}[1]{\operatorname{Re}\left\{#1\right\}}

\newcommand{\E}{\mathbb{E}}

\renewcommand{\d}[1]{d#1}

\newcommand{\vct}[1]{\boldsymbol{#1}}
\newcommand{\mtx}[1]{\boldsymbol{#1}}

\newcommand{\<}{\langle}
\renewcommand{\>}{\rangle}

\newcommand{\vctr}{\operatorname{vec}}

\newcommand{\set}[1]{\mathcal{#1}}

\newcommand{\vc}{\vct{c}}

\newcommand{\ve}{\vct{e}}
\newcommand{\vf}{\vct{f}}

\newcommand{\vh}{\vct{h}}

\newcommand{\vp}{\vct{p}}
\newcommand{\vq}{\vct{q}}
\newcommand{\vr}{\vct{r}}
\newcommand{\vs}{\vct{s}}

\newcommand{\vu}{\vct{u}}
\newcommand{\vw}{\vct{w}}
\newcommand{\vx}{\vct{x}}
\newcommand{\vy}{\vct{y}}
\newcommand{\vz}{\vct{z}}
\newcommand{\mC}{\mtx{C}}

\newcommand{\mF}{\mtx{F}}
\newcommand{\mG}{\mtx{G}}
\newcommand{\mH}{\mtx{H}}
\newcommand{\mI}{\mtx{I}}

\newcommand{\mR}{\mtx{R}}

\newcommand{\mX}{\mtx{X}}

\newcommand{\mZ}{\mtx{Z}}
\newcommand{\setA}{\set{A}}

\newcommand{\setC}{\set{C}}

\newcommand{\setG}{\set{G}}
\newcommand{\setH}{\set{H}}
\newcommand{\setI}{\set{I}}

\newcommand{\setN}{\set{N}}
\newcommand{\setO}{\set{O}}
\newcommand{\setP}{\set{P}}

\newcommand{\setW}{\set{W}}
\newcommand{\setX}{\set{X}}

\newcommand{\setZ}{\set{Z}}

\newcommand{\PP}{\mathbb{P}}

\def\tf{\tilde{F}}
\def\gtf{\nabla \tilde{F}}

\begin{document}

\title{Blind Deconvolution Demixing using Modulated Inputs}

\author{\IEEEauthorblockN{Humera Hameed}
\IEEEauthorblockA{Department of Electrical Engineering\\
Information Technology University of Punjab\\
Lahore, Pakistan 54700\\
Email: phdee17001@itu.edu.pk}
\and
\IEEEauthorblockN{Ali Ahmed}
\IEEEauthorblockA{Department of Computer Science\\
Information Technology University of Punjab\\
Lahore, Pakistan 54700\\
Email: ali.ahmed@itu.edu.pk}}

\maketitle

\begin{abstract}
 This paper focuses on solving  a challenging problem of blind deconvolution demixing involving modulated inputs. Specifically, multiple input signals $s_n(t)$, each bandlimited to $B$ Hz, are modulated with known random sequences $r_n(t)$ that alter at rate $Q$. Each modulated signal is convolved with a different M tap channel of impulse response $h_n(t)$,  and the outputs of each channel are added at a common receiver to give the observed signal $y(t)=\sum_{n=1}^N (r_n(t)\odot s_n(t))\circledast h_n(t)$, where $\odot$ is the point wise multiplication, and $\circledast$ is circular convolution. Given this observed signal $y(t)$, we are concerned with recovering $s_n(t)$ and $h_n(t)$. We employ deterministic subspace assumption for the input signal $s_n(t)$ and keep the channel impulse response $h_n(t)$ arbitrary. We show that if modulating sequence is altered at a rate $Q \geq N^2 (B+M)$ and sample complexity bound is obeyed then all the signals and the channels, $\{s_n(t),h_n(t)\}_{n=1}^N$, can be estimated from the observed mixture $y(t)$ using gradient descent algorithm. We have performed extensive simulations that show the robustness of our algorithm and used phase transitions to numerically investigate the theoretical guarantees provided by our algorithm.

\end{abstract}




\section{Introduction}\label{sec:introduction}
This paper focuses on solving the joint problems of blind deconvolution  and demixing when multiple modulated  inputs are involved. Specifically, multiple input signals $\{s_n(t)\}_{n=1}^N$, $t \in [0,1)$,  are modulated with random binary waveforms (spread spectrum) $\{r_n(t)\}_{n=1}^N$  similar to \cite{ahmed2018ModBD} and passed through different channels of impulse response $h_n(t)= \sum_{m=1}^M h_n[m] \delta (t-t_m)$ with $t_m \in T_Q = \{ 0,\frac{1}{Q},..., 1-\frac{1}{Q} \}$. The outputs of all the channels are added at a common base station to generate the observed signal 
$$y(t)= \sum_{n=1}^N (s_n(t)\odot r_n(t))\circledast h_n(t),$$
Given this observed signal $y(t)$, we aim to recover $s_n(t)$ and $h_n(t)$ when neither is known is called blind deconvolution demixing (BDD). This problem arises in multiple applications including image processing \cite{Romberg2015app,Shamir2011app}, underwater acoustics \cite{Mansoor2006app}, and wireless communication \cite{Wang1998app}.

In this paper we focus on sixth generation (6G) wireless communication when channel state information (CSI) is not available.
In 6G, using orthogonal multiple access (OMA) in downlink (DL), base stations (BS) can generate different orthogonal spreading sequences for different user equipment (UE) to make the signal separable. However, in uplink (UL), OMA is not applicable due to the multiple UE generating their own spreading sequences and synchronization issues \cite{Wang2006OMA}. 
NOMA is the key solution to the problem of multiple access in 6G. NOMA has basically two types: power domain NOMA, and code domain NOMA \cite{Dai2015TypeNoma}. 
In power domain NOMA, different powers are allocated to different UEs depending upon their channel states \cite{Liu2017NonorthogonalMA, Imari2014NonorthogonalMA}. In code domain NOMA, all the users use the same time-frequency resources, and different spread spectrum to make the signals separable at the receiver. Our proposed algorithms has application in code domain NOMA and we will use spread spectrum to easily separate the signals as shown in Figure \ref{fig:SystemModel}. In practice, CSI requires channel training which consumes extra resources. For fast-moving objects, channel is continuously changing, and even with channel training, CSI will not be accurate which could cause error in UEs signal recovery. To overcome these drawbacks, we present a gradient descent algorithm for blind deconvolution and demixing which will avoid expensive channel training procedures. 
  
  A convex approach to blind deconvolution demixing (BDD) was proposed in \cite{li2016rapid}. The authors employed random coding assumptions and gave sub-optimal theoretical bounds that were later improved to near-optimal in \cite{Stoeger2017BDD}. However, the approach in \cite{Stoeger2017BDD} was computationally expensive as it involved the lifting of variables. In \cite{li2018rapid}, author proposed a computationally efficient non-convex approach for BDD using gradient descent algorithm for random coding subspaces. After that, \cite{Strohmer2018Demix} proposed two iterative hard thresholding-based algorithms for blind demixing of low-rank matrices, but their theoretical analysis are not applicable to BDD. Authors in \cite{Dong2019BDD} proposed a Riemannian optimization based solution to BDD but they also employed i.i.d Gaussian assumption for the rows of the coding matrix in theoretical analysis. In \cite{ahmed2018ModBD}, author assumed that the channel impulse response $h_n(t)$ is arbitrary and employ deterministic subspace assumption for the input signal $s_n(t)$ for blind deconvolution problem. We use the same deterministic subsapce assumptions for BDD that is unlike previous convex and non convex approaches \cite{ahmed2012blind,ahmed2015convex,ahmed2016leveraging,ahmed2018convex,li2016rapid,li2018rapid,lee2017spectral} which assumed random subspaces.

\begin{figure}
	\centering
	\begin{tabular}{cc}
		\includegraphics[scale = 0.5, trim = 10.3cm 9cm 0cm 0cm,clip]{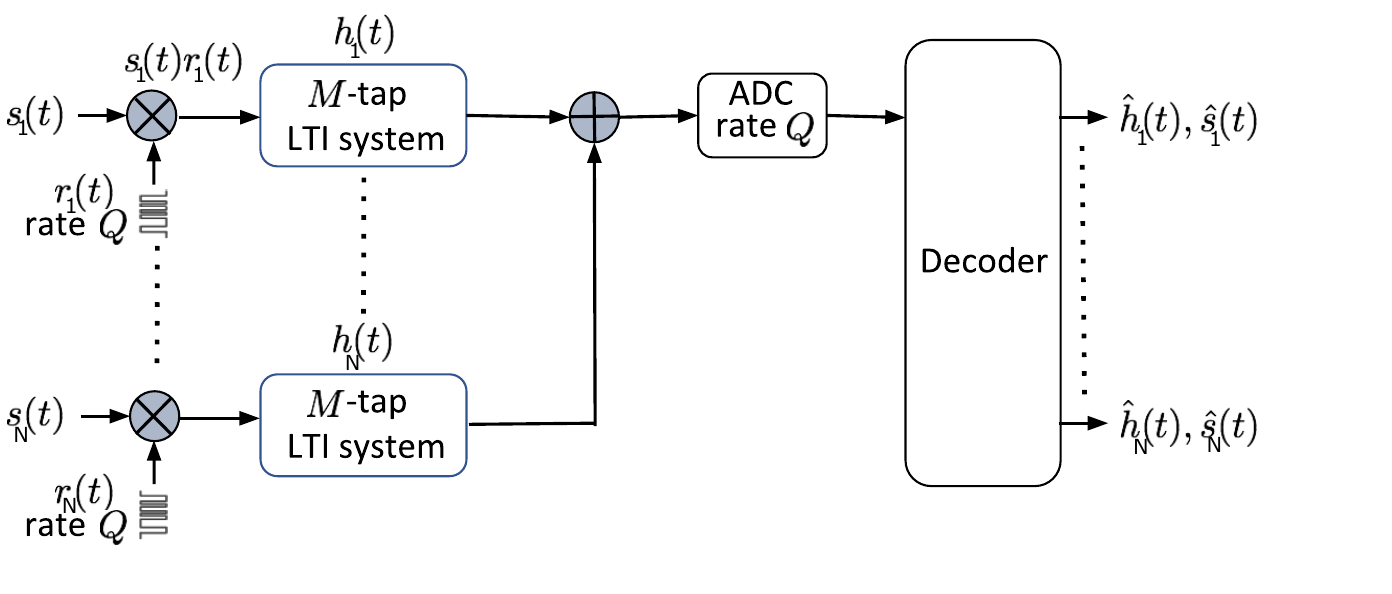}
	\end{tabular}
	\caption{\small\sl Analog implementation of uplink NOMA using single carrier OFDM for real time protection against channel interference.  Different user equipment's (UE) are transmitting continuous time signal $s_n(t)$ which is bandlimited to $B$ Hz. At each transmitter, random binary waveform $r_n(t)$, alternating at a rate $Q$, is used to modulate $s_n(t)$. The modulated signals are passed through different unknown LTI channels having an $M$-tap impulse response $h_n(t)$ and received at the common base station (BS). The received signal, $y(t)$, is a mixture of the convolution of the transmitted signals and the channels through which the signals passed. At a sampling rate $Q$, the received signal is sampled by ADC where $Q \gtrsim N^2 (B+M)$ (scale with coherences), and recover unknown signals $\{s_n(t)\}_{n=1}^N$, and channels $\{h_n(t)\}_{n=1}^N$ using algorithm \ref{algo:gradient-descent}.}
	\label{fig:SystemModel}
\end{figure}

\section{Problem Setup} \label{sec:Problem_setup}
The received signal $y(t)$ sampled at rate $L \geq Q$, in Fourier domain, can be given as
\begin{align}\label{eq:model}
\hat{\vy} = \sqrt{L} \sum_{n=1}^N \mF_Q (\vr_n \odot \vs_{n}) \odot \mF_M \vh_n,
\end{align}
where $\mF_Q$, and $\mF_M$ are formed by selecting first $Q$, and $M$ columns of a normalized $L \times L$ DFT matrix and $\vh_{n0} \in \C^M$, $\vx_{n0} \in \C^K$ are the ground truths, $\vs_{n0}=\mC_n \vx_{n0} \in \C^Q$ for $\mC_n \in \C^{Q \times K}$. Similarly, we can write the noisy measurements in Fourier domain $\hat{\vy} \in \C^L$ as
$$\hat{\vy} = \mF\vy = \sqrt{L} \sum_{n=1}^N [\mF_Q (\vr_n \odot \vs_{n0}) \odot \mF_M \vh_{n0}] + \hat{\ve}$$

\begin{align}\label{eq:measurements}
\hat{\vy} = \sqrt{L} \sum_{n=1}^N [\mF_Q \mR_n \mC_n \vx_{n0} \odot \mF_M \vh_{n0}] + \hat{\ve}
\end{align}
where $\mR_n = diag(\vr_n) \in \C^Q$  and $\hat{\ve} \in \mathbb{C}^L$ is the additive noise. 

The expression for the observed vector \eqref{eq:measurements}, although non linear in $(\vh_{n0},\vx_{n0})$, is linear in the rank-1 outer product $\vh_{n0}\bar{\vx}_{n0}^*$. Let $\vf^*_\ell \in \C^M$ be the $\ell$th row of $\mF_M$, and $\hat{\vc}_{n\ell}^* \in \C^{K}$ be the $\ell$th row of the  $L \times K$ matrix $\sqrt{L}(\mF_Q\mR_n\mC_n)$. Then, $\ell$th entry of $\hat{\vy}$ in \eqref{eq:measurements} can be easily expressed as follows,
\begin{align}\label{eq:entrywise-measurements}
\hat{y}[\ell] = \sum_{n=1}^N \vf_\ell^*\vh_{n0}\bar{\vx}_{n0}^*\hat{\vc}_{n\ell}+\hat{e}[\ell] = \sum_{n=1}^N \< \vf_\ell\hat{\vc}_{n\ell}^*,\vh_{n0}\bar{\vx}_{n0}^*\>+\hat{e}[\ell],
\end{align}
Linearity of $\hat{y}$ in the outer product $\vh_{n0}\bar{\vx}_{n0}^*$ is quite evident from  \eqref{eq:entrywise-measurements}. A linear map $\setA_n: \C^{M \times K} \rightarrow \C^{L}$ is defined that maps $\vh_{n0}\bar{\vx}_{n0}^*$ to  $\hat{\vy}$. $\setA_n$ acts on a rank-1 matrix $\vh_{n0}\bar{\vx}_{n0}^*$ to return the following output
\begin{align}\label{eq:linear-map}
&\setA_n(\vh_{n0}\bar{\vx}_{n0}^*) := \{\vf_\ell^*\vh_{n0} \bar{\vx}_{n0}^*\hat{\vc}_{n\ell}\}_{\ell}, \ (\ell)  \in [L], \notag\\
&\text{and therefore,} \ \hat{\vy} =  \sum_{n=1}^N \setA_n(\vh_{n0}\bar{\vx}_{n0}^*) + \hat{\ve},
\end{align}
where $\setA_n$ is used to express \eqref{eq:entrywise-measurements} in a more compact manner. For $N$ concatenated vectors, $\vh_0=\vctr[\vh_{n0}], \vx = \vctr[\vx_{n0}]$, block diagonal matrix is shown as $\mZ(\vh_0,\vx_0) = (\mZ_n(\vh_{n0},\vx_{n0}))^{\otimes N}$ where $\mZ_n(\vh_n,\vx_n)=\vh_n \vx_n^*$ is the outer product of two vectors. So,
We can also write $\hat{\vy}$ as
\begin{align}\label{eq:linear-map-whole}
 \hat{\vy} =  \setA (\mZ(\vh_0,\vx_0)) + \hat{\ve},
\end{align}
where $\setA (\mZ(\vh_0,\vx_0)) = \sum_{n=1}^N \setA_n (\vh_{n0}\vx_{n0}^*).$  

\subsection{Coherence Parameters}\label{sec:coherence_parameters}
For arbitrary vectors $\vh := \vctr([\vh_n])\in \C^{MN}$, $\vx := \vctr([\vx_n])\in \C^{KN}$, where $[\vx_n] \in \C^{K}$, $[\vh_n] \in \C^{M}$, we define coherences 
\begin{align}\label{eq:muh-nux}
 \mu^2 : = \max_{1 \leq n \leq N} L  \frac{\|\mF_M\vh_{n0}\|_\infty^2}{\|\vh_{n0}\|_2^2}, \text{and}\  \nu^2 := \max_{1 \leq n \leq N} Q \frac{\|\mC_{n}\vx_{n0}\|_\infty^2 }{\|\vx_{n0}\|_2^2},  
 \nu^2_{\max} := Q \|\mC_{n}^{\otimes N} \|_\infty^2
\end{align} 

In general, we assume that $ \|\vh_{n0}\|_2^2= d_{n0}$, $ \|\vx_{n0} \|_2^2 = d_{n0}$ and $ \sum_{n=1}^N \|\vh_{n0}\|_2^2 \|\vx_{n0}\|_2^2 = \sum_{n=1}^N d^2_{n0} = d^2_0$. Coherence parameter $\mu^2$ attains a minimum value if a fixed norm vector $\vh_n$ has disperse spectrum and vice versa, so $1 \leq \mu^2 \leq L$. 

Dispersion of the signals $\vs_n = \mC_n\vx_n$ in time domain is expressed by $\nu^2$, similar term appeared in \cite{ahmed2018ModBD}. Let $\vx_n$ be highly aligned with few of $\vc_{q,n}^*$ then $\nu^2 \leq Q$ where $\vc_{q,n}^*$ is the $q$-th row of $\mC_n$ for any $(q,n) \in [Q] \times [N]$. For well dispersed $\mC_n \vx_n$, we have lower bound $ 1 \leq \nu^2$, hence, $ 1 \leq \nu^2 \leq Q$.

\section{WIRTINGER GRADIENT DESCENT ALGORITHM} \label{sec:gradient_descent}  
Given the observed signal $\hat{\vy}$, we formulate the problem of recovering the ground truth signals as a minimization problem. More specifically, we minimize the following loss function with respect to  $\{\vh_n,\vx_n\}_{n=1}^N$ using a regularized gradient descent algorithm:
\begin{align}\label{eq:tF-def}
\tf(\vh,\vx) := F(\vh,\vx)+G(\vh,\vx).
\end{align}
where $\vh= \text{vec}[\vh_n]$, $\vx= \text{vec}[\vx_n]$ for $n=1,2,...,N$. In equation (\ref{eq:tF-def}) the function $F(\vh,\vx)$ deals with measurement loss and is defined as
\begin{align}\label{eq:F-def}
F(\vh,\vx) := & \bigg \| \sum_{n=1}^N \setA_n(\vh_n\bar{\vx}_n^*-\vh_{n0}\bar{\vx}_{n0}^*)-\ve\bigg \|_2^2 \notag \\
 = & \bigg \| \sum_{n=1}^N \setA_n(\vh_n\bar{\vx}_n^*-\vh_{n0}\bar{\vx}_{n0}^*) \bigg\|_2^2 + 
\|\ve\|_2^2 - 
2 \text{Re}(\<\setA^*(\ve), \mZ(\vh,\vx)- \mZ(\vh_{0},\vx_{0} \>),
\end{align}
and $G(\vh,\vx)$ accounts for restricting the coherences $\mu_h^2$, $\nu_x^2$; and norms of $\{ \vh_n,\vx_n \}_{n=1}^N$ to within a vicinity of norms of the ground truth $\{\vh_{n0},\vx_{n0}\}_{n=1}^N$, and is defined below 
\begin{align}\label{eq:G-def}
G(\vh,\vx) := \sum_{n=1}^N G_n(\vh_n,\vx_n)
\end{align}
\begin{align}\label{eq:Gi-def}
 G_n(\vh_n,\vx_n) := \rho\Bigg[ G_0\left( \frac{\|\vh_n\|_2^2}{2d_n} \right)+ G_0\left(\frac{\|\vx_n\|_2^2}{2d_n} \right) 
 +\sum_{\ell=1}^L G_0\left( \frac{L |\vf_\ell^* \vh|^2}{8d_n \mu^2}\right) + \sum_{q=1}^Q G_0\left( \frac{Q |\vc_{q,n}^* \vx_n|^2}{8d_n \nu^2}\right)\Bigg], 
\end{align}
where $G_0 (z ) = \max\{z-1,0\}^2$. To prove Theorem \ref{thm:initialization}, we set $\rho \geq d^2+\|\ve\|_2^2$, $ 0.9 d_{0} \leq d \leq 1.1 d_{0}$, and $ 0.9 d_{0n} \leq d_n \leq 1.1 d_{0n}$. 
 In each iteration of the proposed regularized gradient descent algorithm \ref{algo:gradient-descent}, an  alternating minimization strategy is employed which minimizes  the loss function with respect to $\vh_n$, and $\vx_n$, for a particular value of n,  while keeping the rest fixed.
\begin{algorithm}[H]
	\caption{Regularized Wirtinger gradient descent with a step size $\eta$}
	\label{algo:gradient-descent}
	\begin{algorithmic}\small
	    \State \textbf{Input:} Obtain $\{\vu_n^0,\boldsymbol{v}_n^0\}_{n=1}^N$ via Algorithm \ref{algo:initialization} below.
		\For{$t = 1, \ldots$}
			\For{$n = 1, \ldots, N$}
		    \State $ \vu_n^t \gets  \vu_n^{t-1} - \eta \nabla \tilde{F}_{\vh_n} (\vu_n^{t-1},\boldsymbol{v}_n^{t-1} )$
			\State $ \boldsymbol{v}_n^t \gets  \boldsymbol{v}_n^{t-1} - \eta \nabla \tilde{F}_{\vx_n} (\vu_n^{t-1},\boldsymbol{v}_n^{t-1} )$
			\EndFor
		\EndFor
		\State \textbf{Output:} $\{\hat{\vh}_{n0},\hat{\vx}_{n0}\}_{n=1}^N$
	\end{algorithmic}
\end{algorithm}
 
 The gradient descent updates require the Wirtinger gradients of the loss function with respect to $\vh_n$, and $\vx_n$ which are defined as 
\footnote{The Wirtinger gradient for a complex function $f(\vz)$ is defined as 
$ \frac{\partial f}{\partial \bar{\vz}} = \frac{1}{2}\left( \frac{\partial f}{\partial \vx} + \iota \frac{\partial f}{\partial \vy}  \right),$where $\vz = \vx + \iota \vy \in \C^L$, and $\vx, \vy \in \R^L$.}

\begin{align}\label{eq:gFh-gFx-def}
    \nabla \tilde{F}_{\vh} = \begin{bmatrix}
           \nabla \tilde{F}_{\vh_1} \\
           \nabla \tilde{F}_{\vh_2} \\
           \vdots \\
           \nabla\tilde{F}_{\vh_N}
         \end{bmatrix},
             \nabla \tilde{F}_{\vx} = \begin{bmatrix}
           \nabla \tilde{F}_{\vx_1} \\
           \nabla \tilde{F}_{\vx_2} \\
           \vdots \\
           \nabla \tilde{F}_{\vx_N}
         \end{bmatrix} 
         \text{where,} 
\nabla \tilde{F}_{\vh_n} := \frac{\partial \tilde{F}}{\partial \bar{\vh}_n} = \frac{\overline{\partial \tilde{F}}}{\partial {\vh_n}}, \ \text{and} \ \nabla \tilde{F}_{\vx_n} := \frac{\partial \tilde{F}}{\partial \bar{\vx}_n} = \frac{\overline{\partial \tilde{F}}}{\partial {\vx_n}}.
\end{align}

By linearity, $\nabla \tilde{F}_{\vh_n} = \nabla F_{\vh_n} + \nabla G_{\vh_n}$, and similarly, $\nabla \tilde{F}_{\vx_n} = \nabla F_{\vx_n} + \nabla G_{\vx_n}$ where the gradients $\nabla \tilde{F}_{\vh}$, and $\nabla \tilde{F}_{\vx}$ are defined in \eqref{eq:gFh-gFx-def}. As $F$, and $G$ are defined in \eqref{eq:F-def}, and \eqref{eq:G-def}, the gradients w.r.t. $\vh_n$, and $\vx_n$ can be written as 
\begin{align}\label{eq:gF-def}
\nabla F_{\vh_n}  &= \setA_n^*(\setA(\mZ(\vh,\vx)-\mZ(\vh_0,\vx_0))-\ve) \vx_n, \notag\\
 \nabla F_{\vx_n} &= [\setA_n^*(\setA(\mZ(\vh,\vx)-\mZ(\vh_0,\vx_0))-\ve)]^*\vh_n,
\end{align}
and
\begin{align}\label{eq:ghG-def}
\nabla G_{\vx_n} = \frac{\rho}{2d_n}\Bigg[ G_0^\prime \left( \frac{\|\vx_n\|_2^2}{2d_n}\right) \vx_n 
 + \frac{Q}{4\nu^2} \sum_{q=1}^Q G_0^\prime \left( \frac{Q |\vc_{q,n}^*\vx_n|^2}{8d_n \nu^2}\right) \vc_{q,n}\vc_{q,n}^* \vx_n \Bigg], \notag \\
\nabla G_{\vh_n} = \frac{\rho}{2d_n}\Bigg[ G_0^\prime \left( \frac{\|\vh_n\|_2^2}{2d_n}\right) \vh_n +
 \frac{L}{4\mu^2} \sum_{\ell=1}^L G_0^\prime \left( \frac{L |\vf_{\ell}^*\vh_n|^2}{8d_n \mu^2}\right) \vf_{\ell}\vf_{\ell}^* \vh_n\Bigg].
\end{align}

For Algorithm \ref{algo:gradient-descent} we obtain a good initialization $\{\vu_n^0,\boldsymbol{v}_n^0\}_{n=1}^N$ by using Algorithm \ref{algo:initialization}. In words, the projections of the left and right singular vectors of $\setA_n^*(\hat{\vy}):= \sum_{l=1}^L \hat{\vy}(l) \vf_l \hat{\vc}_{nl}^*$, corresponding to leading singular value, into the set of incoherent vectors  serve as the initializers 
$\{\vu_n^0,\boldsymbol{v}_n^0\}_{n=1}^N$ for Algorithm \ref{algo:gradient-descent}.
\begin{algorithm}[H]
	\caption{Initialization of unknown signals}\small
	\label{algo:initialization}
	\begin{algorithmic}
		\State \textbf{Input:} $\hat{\vy}$, $\setA$ 
		\For{$n = 1, \ldots, N$}
		\State Compute left and right singular vectors of $\setA_n^*(\hat{\vy})$,   $\hat{\vh}_{n0}$, and $\hat{\vx}_{n0}$, respectively corresponding to the leading singular value $d_n$. 
		\State Solve the following optimization programs 
		\State $\vu_n^0 \gets \underset{\vh_n}{\text{argmin}} \ \|\vh_n - \sqrt{d_n} \hat{\vh}_{n0}\|_2, \ \text{subject to} \  \sqrt{L}\|\mF_M\vh_n\|_\infty \leq 2 \sqrt{d_n} \mu,$\ \text{and} \ 
		\State $\boldsymbol{v}_n^0 \gets \underset{\vx_n}{\text{argmin}} \ \|\vx_n - \sqrt{d_n} \hat{\vx}_{n0}\|_2, \ \text{subject to} \ \sqrt{Q}\|\mC_n\vx_n\|_\infty \leq 2 \sqrt{d_n} \nu.$
		\EndFor
		\State \textbf{Output:} $\{\vu_n^0,\boldsymbol{v}_n^0\}_{n=1}^N$. 
	\end{algorithmic}
\end{algorithm}

\subsection{Neighborhood Sets}\label{sec:neighborhood_sets} 
Stable and robust recovery of the ground truth is made possible by ensuring that the iterates of the gradient descent algorithm \ref{algo:gradient-descent} remain within a region of incoherence and in close vicinity to the ground truth. To formalize these notions, we define the following sets of neighboring points of $\{(\vh_n,\vx_n)\}_{n=1}^N$

\begin{align}
&\setN_{d_0} := \{\{(\vh_n,\vx_n)\}_{n=1}^N | \|\vh_n\|_2 \leq 2\sqrt{d_{n0}}, \ \|\vx_n\|_2\leq 2 \sqrt{d_{n0}} \},\label{eq:setNd-def}\\
&\setN_\mu := \{ \{(\vh_n,\vx_n)\}_{n=1}^N | \sqrt{L}\|\mF_M\vh_n\|_\infty \leq 4\mu\sqrt{d_{n0}} \},\label{eq:setNmu-def}\\
&\setN_\nu : = \{\{(\vh_n,\vx_n)\}_{n=1}^N|  \sqrt{Q}\|\mC_n\vx_n\|_\infty \leq 4\nu\sqrt{d_{n0}} \},\label{eq:setNnu-def}\\
&\setN_{\varepsilon} := \{\{(\vh_n,\vx_n)\}_{n=1}^N | \|\vh_n\bar{\vx}_n^*-\vh_{n0}\bar{\vx}_{n0}^*\|_{F} \leq \varepsilon d_{n0} \}\label{eq:setNe-def}.
\end{align}

\section{Main Results}
Our main result on blind deconvolution demixing using deterministic subspaces \eqref{eq:measurements} is given below.

\begin{thm}\label{thm:convergence}
	 Fix $0 < \varepsilon \leq 1/15$. Let $\mC_n \in \R^{Q \times K}$ be a tall basis matrix, and set $\vs_{n0} = \mC_n\vx_{n0}$; and $\vx_{n0} \in \C^K$, $\vh_{n0} \in \C^M$ be arbitrary vectors for every $n = 1,2,3, \ldots, N$. Let the coherence parameters of $\mC_n$, $(\vh_{n0},\vx_{n0})$ be as given in \eqref{eq:muh-nux}, and $\kappa = \frac{\max d_{n0}}{\min d_{n0}}$.  Let each $[\vr_n]$ be an independent $Q$-length vector with standard iid Rademacher entries. The $L$-point noisy sum of circular convolutions of the unknown channels $\vh_{n0}$ with random sign vectors $\vr_n\odot\vs_{n0}$ for all $n = 1,2,3, \ldots, N$, is observed as defined in \eqref{eq:measurements}. Let the initial guess $\{(\vu_{n0},\boldsymbol{v}_{n0})\}_{n=1}^N$ of $\{(\vh_{n0}, \vx_{n0})\}_{n=1}^N$ belongs to $\tfrac{1}{\sqrt{3}}\setN_{d_0} \cap \tfrac{1}{\sqrt{3}}\setN_{\mu}\cap \tfrac{1}{\sqrt{3}}\setN_{\nu} \cap \setN_{\frac{2\varepsilon}{5\sqrt{N}\kappa}},$ and that
	 \begin{align}\label{eq:sample-complexity-main-thm}
	 Q \geq c\frac{\kappa^2 N^2}{\xi^2 \delta^2_t} ( \mu^2 \nu^2_{\max} K + \nu^2 M)\log^4(LN),
	 \end{align}
	with $L \geq Q$, then Algorithm \ref{algo:gradient-descent} will create a sequence $\{(\vu_{n}^t,\boldsymbol{v}_n^t)\}_{n=1}^N \in \setN_{d_0}\cap \setN_{\mu} \cap \setN_{\nu} \cap \setN_{\varepsilon}$, which converges linearly to $\{(\vh_{n0},\vx_{n0})\}_{n=1}^N$ with probability at least 
	\begin{align}\label{eq:probability-main-thm}
	1-2\exp\left(-c \xi^2 \delta_t^2QN/\mu^2 \nu^2 \kappa^2 \right),
	\end{align}
	and there holds 
	\begin{align}\label{eq:stable-recovery-bound}
	\|\mZ(\vu^t,\boldsymbol{v}^t)-\mZ(\vh_0,\vx_0)\|_F \leq 
	 \frac{\epsilon d_0}{\sqrt{2N\kappa^2}} (1-\eta\omega)^{t/2}
	+ 60 \sqrt{N} \|\setA^*(\ve)\|_{2\rightarrow 2},
	\end{align}
	where $\eta$ represents the fixed step size, $\delta_t = \|\mZ(\vu^t,\boldsymbol{v}^t) - \mZ(\vh_{0},\vx_{0})\|_F/d_0$, and $\omega > 0$. For fixed $\alpha \geq 1$ and additive noise $\ve \sim \text{Normal}(\mathbf{0},\frac{\sigma^2 d_0^2}{2L}\mI_{L}) + \iota \text{Normal}(\mathbf{0},\frac{\sigma^2 d_0^2}{2L}\mI_{L})$, $\|\setA^*(\ve)\|_{2\rightarrow 2} \leq \frac{2\varepsilon}{50N \kappa} d_0$ with probability at least $1-\setO(L^{-\alpha})$ whenever 
	\begin{align}\label{eq:sample-complexity-LN}
	L \geq  c_\alpha\frac{(\mu_h^2  + \sigma^2)}{\varepsilon^2} \kappa^4 N^2 \max (M,K\log(L))\log(L).
	\end{align}
\end{thm}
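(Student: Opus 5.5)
The proof will follow the standard regularized Wirtinger-flow template of Li--Ling--Strohmer--Wei for blind deconvolution, as adapted to blind deconvolution demixing in \cite{li2018rapid}. The one structural change is that the random-coding subspace is replaced by the deterministic subspace $\mC_n$ modulated by Rademacher sequences, as in \cite{ahmed2018ModBD}. The argument rests on four key conditions, each established on the intersection $\setN_{d_0}\cap\setN_\mu\cap\setN_\nu\cap\setN_\varepsilon$:
\begin{enumerate}
\item local restricted isometry of $\setA$;
\item local regularity, meaning a lower bound on $\|\gtf\|^2$ in terms of $\tf$;
\item a local smoothness (Lipschitz gradient) bound;
\item robustness against the noise term $\setA^*(\ve)$.
\end{enumerate}
Combining them by the usual induction shows that the iterates never leave the neighborhood and that $\tf$ contracts geometrically. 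This yields \eqref{eq:stable-recovery-bound}.

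\textbf{Step 1: restricted isometry and incoherence between users.} For every matrix $\mZ-\mZ_0$ with blocks $\vh_n\bar{\vx}_n^*-\vh_{n0}\bar{\vx}_{n0}^*$ and iterates in the incoherent neighborhood, we will show
\[
\Bigl|\,\bigl\|\setA(\mZ-\mZ_0)\bigr\|_2^2-\|\mZ-\mZ_0\|_F^2\,\Bigr|\le \xi\,\|\mZ-\mZ_0\|_F^2 .
\]
The plan is to split $\|\sum_n\setA_n(\cdot)\|_2^2$ into diagonal terms $\|\setA_n(\cdot)\|^2$ and cross terms $\langle\setA_n(\cdot),\setA_{n'}(\cdot)\rangle$ with $n\neq n'$.

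For the diagonal terms, the rows $\hat{\vc}_{n\ell}$ are not independent across $\ell$: they are Fourier transforms of the single Rademacher vector $\vr_n$. So we cannot use a matrix Bernstein bound over $\ell$. Instead we follow \cite{ahmed2018ModBD}. We write $\setA_n^*\setA_n-\mathcal{I}$ on the relevant tangent-type set as a Rademacher chaos of order two in $\vr_n$. We then control its supremum with the Krahmer--Mendelson--Rauhut chaos bound, where Dudley entropy integrals are estimated using the coherences $\mu$, $\nu$, $\nu_{\max}$.

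For the cross terms, independence of $\vr_n$ and $\vr_{n'}$ makes each term a decoupled bilinear Rademacher form. Its deviation is controlled by the same chaos machinery, at scale $\sqrt{\mu^2\nu_{\max}^2K+\nu^2M}/\sqrt{Q}$ up to logarithms. A union over the $N^2$ pairs, with each pair required to contribute at most $\xi\delta_t/(N\kappa)$, is what produces the factor $N^2\kappa^2/(\xi^2\delta_t^2)$ in \eqref{eq:sample-complexity-main-thm} and the tail \eqref{eq:probability-main-thm}. I expect this step to be the main obstacle. The chaos must be bounded uniformly over a non-convex set of rank-one differences, and the $\delta_t$-dependence must be tracked so that the bound is local.

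\textbf{Step 2: regularity, smoothness and noise.} With Step 1 in hand, the remaining conditions are essentially deterministic.

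\emph{Regularity.} Following the lemma structure of \cite{li2018rapid} blockwise, we will show
\[
\|\gtf_{\vh}\|^2+\|\gtf_{\vx}\|^2\ \ge\ \omega\,\bigl[\tf-c\,\|\setA^*(\ve)\|^2\bigr]_+ .
\]
The argument decomposes $\vh_n-\alpha_n\vh_{n0}$ and $\vx_n-\bar{\alpha}_n^{-1}\vx_{n0}$, which resolves the scaling ambiguity per user. It uses $\mathrm{Re}\langle\nabla F_{\vh_n},\Delta\vh_n\rangle+\mathrm{Re}\langle\nabla F_{\vx_n},\Delta\vx_n\rangle\gtrsim\|\mZ_n-\mZ_{n0}\|_F^2$ together with the fact that $\nabla G$ is positively correlated with $\Delta$.

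\emph{Smoothness.} A bound $\|\gtf(\vz+t\Delta\vz)-\gtf(\vz)\|\le C_L\|\Delta\vz\|$ follows from $\|\setA\|\lesssim\sqrt{N\log L}$ and the explicit form of $\nabla G$ in \eqref{eq:ghG-def}.

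\emph{Noise.} The noise bound treats $\setA^*(\ve)=\sum_\ell \hat e[\ell]\,\vf_\ell\hat{\vc}_{n\ell}^*$ as a Gaussian series conditioned on the $\vr_n$. Matrix Bernstein then gives $\|\setA^*(\ve)\|\le \tfrac{2\varepsilon}{50N\kappa}d_0$ under \eqref{eq:sample-complexity-LN}.

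\textbf{Step 3: induction.} The initial point lies in $\tfrac1{\sqrt3}\setN_{d_0}\cap\tfrac1{\sqrt3}\setN_\mu\cap\tfrac1{\sqrt3}\setN_\nu\cap\setN_{2\varepsilon/(5\sqrt N\kappa)}$. Because the regularizer vanishes there and $\rho\ge d^2+\|\ve\|^2$, this gives $\tf(\vu^0,\boldsymbol v^0)$ small. A sublevel-set argument then shows that every point with $\tf$ below this level lies in $\setN_{d_0}\cap\setN_\mu\cap\setN_\nu$. The smoothness bound together with $\eta\le 1/C_L$ ensures that each gradient step stays in this sublevel set and in $\setN_\varepsilon$. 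Regularity then gives
\[
\tf(\vz^{t+1})\le(1-\eta\omega)\,\tf(\vz^t)+\eta\, c\,\|\setA^*(\ve)\|^2 .
\]
Unrolling this recursion and converting $\tf$ back into $\|\mZ-\mZ_0\|_F$ via the isometry of Step 1 yields the linear rate $(1-\eta\omega)^{t/2}$ and the noise floor $60\sqrt N\|\setA^*(\ve)\|$.
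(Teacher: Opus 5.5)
Your architecture and Steps 2--3 follow the paper's proof. The paper also reduces Theorem~\ref{thm:convergence} to the four conditions of \cite{li2018rapid}, notes that $G(\vu^0,\boldsymbol{v}^0)=0$ on the $\tfrac{1}{\sqrt3}$-shrunk neighborhoods, and copies the descent and sublevel-set induction of Theorem~3.3 of \cite{li2018rapid} (Lemma~\ref{lem:main-lemma}). It bounds the noise by a matrix Gaussian series (Proposition~\ref{prop:conc_ineq}) after controlling $\max_{\ell,n}\|\hat{\vc}_{\ell,n}\|_2$.

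\textbf{Step 1 is where you genuinely differ.} Write $\mG_n=\mH_{\vh_n}\mX_{\vx_n}-\mH_{\vh_{n0}}\mX_{\vx_{n0}}$. The paper never separates diagonal and cross-user terms. It applies Theorem~\ref{thm:Mendelson} once to the stacked sign vector $(\vr_1,\dots,\vr_N)$, with a single Dudley integral over $B_2^{MN}\times B_2^{KN}$ in the norms $\|\cdot\|_f,\|\cdot\|_c$. Its $N$-dependence therefore comes from the dimension factors $MN,KN$ and the choice $\xi=\varepsilon/(50\sqrt N\kappa)$, not from a union bound over pairs. A single chaos is shorter and gives $\E=\sum_n\|\mG_n\|_F^2$ for free.

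As written, however, the paper's family is the block-diagonal $(\cdot)^{\otimes N}$ with operator norm $\max_n$. For a block-diagonal matrix the chaos is $\sum_n\|\mG_n\vr_n\|_2^2$, which is exactly your diagonal terms. The interference $\mathrm{Re}\langle\mG_n\vr_n,\mG_m\vr_m\rangle$ is captured only by the horizontal concatenation $[\mG_1,\dots,\mG_N]$, whose operator-norm diameter can be up to $\sqrt N$ times larger. Your explicit pairwise treatment is thus the more careful of the two. It costs $N^2$ extra chaos estimates but yields a reusable inter-user incoherence bound. Neither version removes the $1/\delta_t^2$, because the $\gamma_2^2$ part of $W$ does not shrink with $\delta$.

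\textbf{Two statements in Steps 2--3 must be corrected.}

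First, the lower bound for $F$ cannot be proved user by user. The gradient $\nabla F_{\vh_n}=\setA_n^*(\setA(\mZ-\mZ_0)-\ve)\vx_n$ contains $\setA_n^*\setA_m(\mZ_m-\mZ_{m0})\vx_n$ for $m\neq n$. This term can dominate $\|\mZ_n-\mZ_{n0}\|_F^2$ when user $n$ is nearly exact and user $m$ is not. Only the summed inequality of Lemma~\ref{lem:local-regularity-F} holds, and it comes from the joint isometries of Lemmas~\ref{lem:local-RIP} and~\ref{lem:local-Delh-Delx-RIP}. Only the $G$-part decouples over $n$ (Lemma~\ref{lem:local-regularity-G}).

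Second, the offset must contain $\|\ve\|_2^2$; the paper uses $c=\|\ve\|_2^2+2000\|\setA^*(\ve)\|_{2\rightarrow2}^2$. Since $F$ contains $\|\ve\|_2^2$, your inequality $\|\gtf\|_2^2\ge\omega[\tf-c\|\setA^*(\ve)\|^2]_+$ fails at the ground truth. There $\nabla G=0$ and $\tf=\|\ve\|_2^2\approx\sigma^2d_0^2$, whereas $\|\gtf\|_2^2\le 2\sqrt N d_0\|\setA^*(\ve)\|_{2\rightarrow2}^2$. Relative to $\omega\|\ve\|_2^2$ this is smaller by a factor of order $\sqrt N\max(M,K\log L)\log(L)/L$. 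The correct recursion is $[\tf(\vz^{t+1})-c]_+\le(1-\eta\omega)[\tf(\vz^t)-c]_+$.

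Converting back to $\delta_t$ then uses the two-sided bound on $F$ stated after Proposition~\ref{prop:conc_ineq}, not the isometry alone. Its noise cross term $2\sqrt{2N}\,\delta d_0\|\setA^*(\ve)\|_{2\rightarrow2}$ relies on the block difference having rank at most $2N$. That term is the source of the $\sqrt N$ in $60\sqrt N\|\setA^*(\ve)\|_{2\rightarrow2}$.

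A minor slip: $\|\setA\|_{2\rightarrow2}\lesssim\sqrt{NK\log L}$, not $\sqrt{N\log L}$, since $\|\hat{\vc}_{n\ell}\|_2\approx\sqrt K$. This only affects $C_L$.
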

 
\begin{thm}\label{thm:initialization}
	 Algorithm \ref{algo:initialization} provides the initialization such that $
	\{(\vu_{n0},\boldsymbol{v}_{n0})\}_{n=1}^{N} \in\frac{1}{\sqrt{3}}\setN_{d_0} \cap \frac{1}{\sqrt{3}}\setN_{\mu}  \cap \frac{1}{\sqrt{3}}\setN_{\nu}\cap \setN_{\frac{2\varepsilon}{5\sqrt{N}\kappa}},$ and $0.9 d_{n0} \leq d_n \leq 1.1 d_{n0}$, $0.9 d_0 \leq d \leq 1.1 d_0$ with probability at least $1-2\exp\left(-c\varepsilon^2 \delta_t^2Q
	/\mu^2\nu^2 \kappa^4  \right)$ whenever 
	\[
	Q \geq c \frac{ \kappa^4 N^2}{\varepsilon^2 \delta_t^2} \left( \mu^2 \nu_{\max}^2 K + \nu^2 M\right) \log^4(LN).
	\]
\end{thm}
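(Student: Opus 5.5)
The plan is the standard spectral-initialization argument of Li--Ling--Strohmer--Wei, adapted to the modulated deterministic-subspace model of \cite{ahmed2018ModBD}, with an extra treatment of the cross-user interference terms.

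\textbf{Step 1: expectation of the spectral matrix.} Since the $\vr_n$ are independent Rademacher vectors and $\mF_Q$, $\mF_M$ come from a normalized DFT, I first show
\[
\E\,\setA_n^*\setA_n(\vh_{n0}\bar{\vx}_{n0}^*)=\vh_{n0}\bar{\vx}_{n0}^*
\qquad\text{and}\qquad
\E\,\setA_n^*\setA_{n'}(\cdot)=0 \ \text{ for } n\neq n'.
\]
The first identity uses $\E[\mR_n \mC_n \vx\,\vx^*\mC_n^*\mR_n]=\operatorname{diag}(\cdot)$ together with the column structure of $\mF_Q$. The second uses $\E[\vr_n\vr_{n'}^*]=0$.

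\textbf{Step 2: concentration.} I then decompose
\[
\setA_n^*(\hat{\vy})-\vh_{n0}\bar{\vx}_{n0}^*
=\big(\setA_n^*\setA_n-\E\big)(\vh_{n0}\bar{\vx}_{n0}^*)
+\sum_{n'\neq n}\setA_n^*\setA_{n'}(\vh_{n'0}\bar{\vx}_{n'0}^*)
+\setA_n^*(\hat{\ve}).
\]
Each term is a second-order Rademacher chaos in $(\vr_n,\vr_{n'})$, or first order when $n'\neq n$, which I decouple. I bound their operator norms with the matrix Bernstein inequality, or with the chaos / suprema-of-chaos bound of Krahmer--Mendelson--Rauhut as used in \cite{ahmed2018ModBD}. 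The variance and uniform ($\psi$-norm) parameters are controlled by $\mu^2$ (the spectral flatness of $\mF_M\vh_{n0}$) and by $\nu^2$, $\nu_{\max}^2$ (the flatness of $\mC_n$ and $\mC_n\vx_{n0}$). This yields, for each $n$,
\[
\big\|\setA_n^*(\hat{\vy})-\vh_{n0}\bar{\vx}_{n0}^*\big\|\le \xi\, d_{n0}
\]
with $\xi\lesssim \varepsilon/(\sqrt N\kappa)$. This holds with probability $1-2\exp(-c\varepsilon^2\delta_t^2Q/\mu^2\nu^2\kappa^4)$ once $Q$ meets the stated bound. The $N^2$ factor arises as follows: the interference sum has $N-1$ terms of size up to $\kappa d_{n0}$, which must total below $\varepsilon d_{n0}/(\sqrt N\kappa)$. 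The $\log^4(LN)$ factor comes from the chaos bound combined with a union bound over $n\in[N]$. The noise term is handled as in Theorem~\ref{thm:convergence}.

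\textbf{Step 3: perturbation and projection.} Weyl's inequality gives $|d_n-d_{n0}|\le\xi d_{n0}$, hence $0.9d_{n0}\le d_n\le1.1d_{n0}$. Summing over $n$ then gives the bound on $d$. Wedin's $\sin\Theta$ theorem gives
\[
\big\|d_n\hat{\vh}_{n0}\hat{\vx}_{n0}^*-\vh_{n0}\bar{\vx}_{n0}^*\big\|_F\lesssim \xi d_{n0}.
\]
Next I use the convex projection lemma of Li et al. The constraint sets $\{\sqrt L\|\mF_M\vh\|_\infty\le2\sqrt{d_n}\mu\}$ and $\{\sqrt Q\|\mC_n\vx\|_\infty\le 2\sqrt{d_n}\nu\}$ are convex and contain (a scaled copy of) the truth, since $d_n\ge0.9d_{n0}$. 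By nonexpansiveness toward feasible points, the projections $\vu_n^0,\boldsymbol{v}_n^0$ are no farther from $\sqrt{d_{n0}}$-scaled truth than the unprojected vectors. They also satisfy $\|\cdot\|_2\le\sqrt{d_n}\le 2\sqrt{d_{n0}}/\sqrt3$ and the $\mu$, $\nu$ constraints with $2\sqrt{1.1}\le4/\sqrt3$. This places them in $\tfrac1{\sqrt3}\setN_{d_0}\cap\tfrac1{\sqrt3}\setN_\mu\cap\tfrac1{\sqrt3}\setN_\nu$. The product error stays within $\tfrac{2\varepsilon}{5\sqrt N\kappa}d_{n0}$.

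\textbf{Main obstacle.} The hardest step is the concentration of the cross terms $\setA_n^*\setA_{n'}$ and of the diagonal chaos under the deterministic subspace $\mC_n$. The randomness enters only through $Q$ signs, not $L$ independent rows. I would first try to reproduce the Rademacher-chaos estimate of \cite{ahmed2018ModBD}, conditioning on $\vr_{n'}$ for the cross terms, and then take a union bound over the $N^2$ pairs.
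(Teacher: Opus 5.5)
Your architecture is the paper's. You bound $\|\setA_n^*(\hat{\vy})-\vh_{n0}\vx_{n0}^*\|_{2\rightarrow2}\le\gamma d_{n0}$ with $\gamma\asymp\varepsilon/(\sqrt N\kappa)$, use Weyl to get $0.9d_{n0}\le d_n\le 1.1d_{n0}$, and then apply Lemma~\ref{lem:convex-set-projection} with the feasible comparison points $\beta_0\vx_{n0}$ and $\alpha_0\vh_{n0}$ to place $(\vu_{n0},\boldsymbol{v}_{n0})$ in the scaled neighborhoods. Wedin is not needed there: the triangle inequality together with $\sigma_2(\setA_n^*(\hat{\vy}))\le\gamma d_{n0}$ already gives $\|d_n\hat{\vh}_{n0}\hat{\vx}_{n0}^*-\vh_{n0}\vx_{n0}^*\|_{2\rightarrow2}\le 2\gamma d_{n0}$.

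The genuine difference is how you obtain the key bound.
\begin{itemize}
\item The paper derives no new concentration result here. It applies Lemma~\ref{lem:local-RIP} at $(\vh,\vx)=(\mathbf 0,\mathbf 0)$ with $\delta=1$ and restates it as an operator-norm bound on $(\setA_n^*\setA_n-\setI)(\vh_{n0}\vx_{n0}^*)$. It then lumps the interference $\sum_{m\neq n}\setA_m(\vh_{m0}\vx_{m0}^*)$ together with $\ve$ as noise for Lemma~\ref{lem:noise-stability}, paying with an extra condition on $L$ involving $\mu_h^2+\sigma^2$.
\item That route is shorter, but a RIP statement at one point only controls the single scalar $\langle(\setA_n^*\setA_n-\setI)(\vh_{n0}\vx_{n0}^*),\vh_{n0}\vx_{n0}^*\rangle$. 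An operator-norm bound needs every rank-one test direction $\vu\vv^*$ with arbitrary, non-incoherent $\vu,\vv$, which the local RIP does not cover. Also, the interference is neither Gaussian nor white.
\item Your plan is a direct operator-norm bound on the matrix-valued chaos, plus a Rademacher matrix series after conditioning for the cross terms. This is what actually delivers the estimate, at the cost of a computation you only sketch.
\end{itemize}
One simplification helps. Since $r_n[q]^2=1$, the $q=q'$ terms of $\setA_n^*\setA_n(\vh_{n0}\vx_{n0}^*)$ reproduce $\vh_{n0}\vx_{n0}^*$ exactly. The own-user deviation is therefore a purely off-diagonal, mean-zero chaos, to which an $\epsilon$-net over the unit spheres with Hanson--Wright, or the Krahmer--Mendelson--Rauhut bound, applies. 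Be aware that the coherence combination you get this way need not come out as exactly $\mu^2\nu_{\max}^2K+\nu^2M$.

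Your $N$-accounting needs a correction. Suppose you bound the $N-1$ cross terms pair by pair (your union bound over pairs) and add them by the triangle inequality. Their fluctuations then add like $\sum_{n'\neq n}d_{n'0}\le N\kappa d_{n0}$. Forcing the total below $\varepsilon d_{n0}/(\sqrt N\kappa)$ then yields $Q\gtrsim N^3\kappa^4(\cdots)$, not the stated $N^2$.

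To reach $N^2\kappa^4$ you must exploit independence across users. For example, condition on $\vr_n$ and treat $\sum_{n'\neq n}\setA_n^*\setA_{n'}(\vh_{n'0}\vx_{n'0}^*)$ as a single Rademacher matrix series in the concatenated signs $(\vr_{n'})_{n'\neq n}$. Its variance then scales with $\sum_{n'\neq n}d_{n'0}^2\le d_0^2\le N\kappa^2 d_{n0}^2$.

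Finally, the failure probability should account for the union bound over $n\in[N]$ and the $\setO(L^{-\alpha})$ noise event. The paper's proof omits these as well.
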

Theorem \ref{thm:convergence} shows that good enough initialization for the Algorithm \ref{algo:gradient-descent} ensures linear convergence to the true solution.
Theorem \ref{thm:initialization} guarantees that the Algorithm \ref{algo:initialization} supplies the required initialization: $\{(\vu_{n0},\boldsymbol{v}_{n0})\}_{n=1}^{N} \in \frac{1}{\sqrt{3}}\setN_{d_0} \cap \frac{1}{\sqrt{3}}\setN_{\mu}  \cap \frac{1}{\sqrt{3}}\setN_{\nu}\cap \setN_{\frac{2\varepsilon}{5\sqrt{N}\kappa}}$.  
Dependence of sample complexity on $\frac{1}{\delta^2_t}$ in \eqref{eq:sample-complexity-main-thm} shows that we have approximate result for some fixed value of $\delta_t$. For exact recovery $\delta_t =0$ which means we need infinitely many samples, so we left this problem for future work. Proofs of Theorem \ref{thm:convergence}, and Theorem \ref{thm:initialization} are given in Supplementary Material Section D and E, respectively.
\begin{figure*}
	\centering
	\begin{tabular}{ccc}
		& \includegraphics[scale = 0.32, trim = 0.36cm 0cm 0cm 0cm,clip]{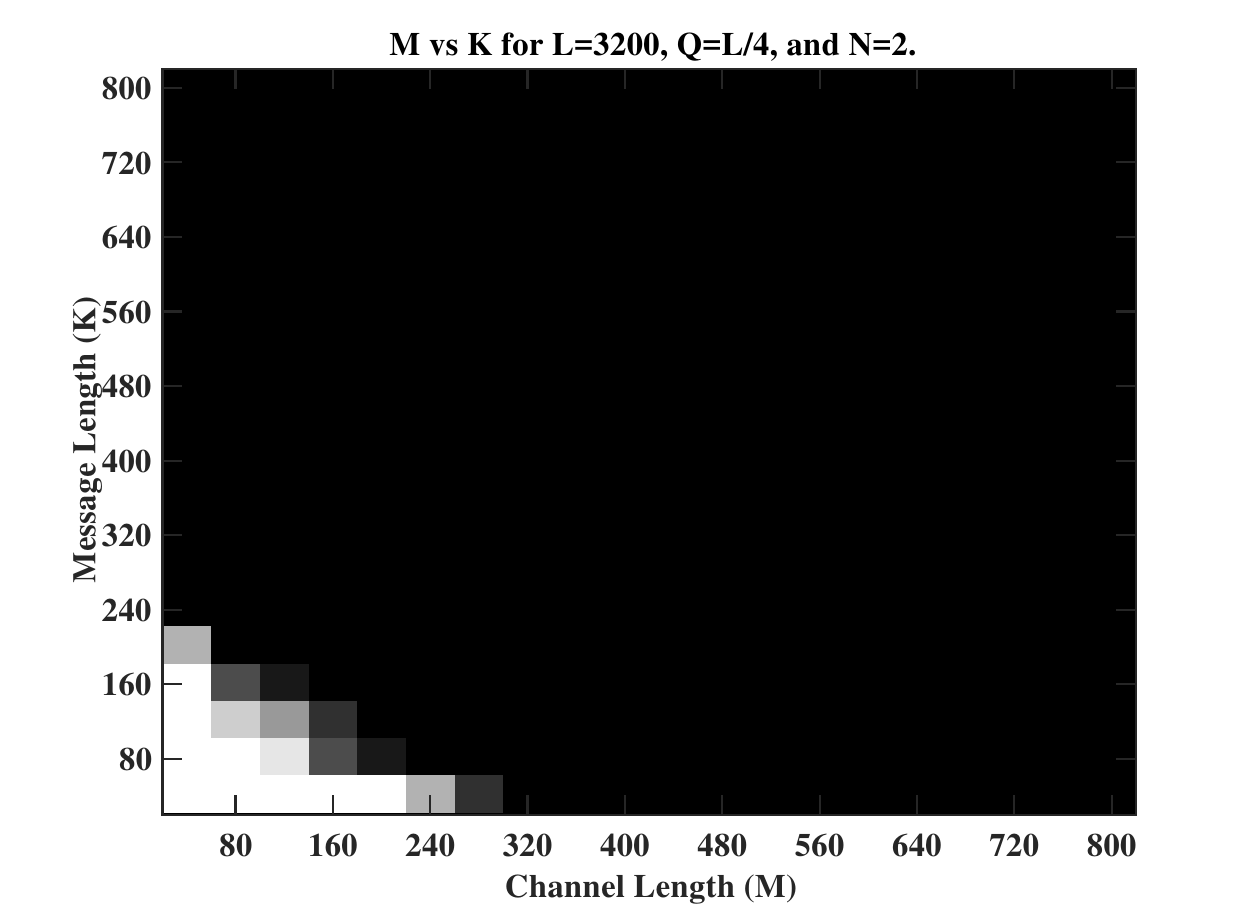}
		&\includegraphics[scale = 0.32]{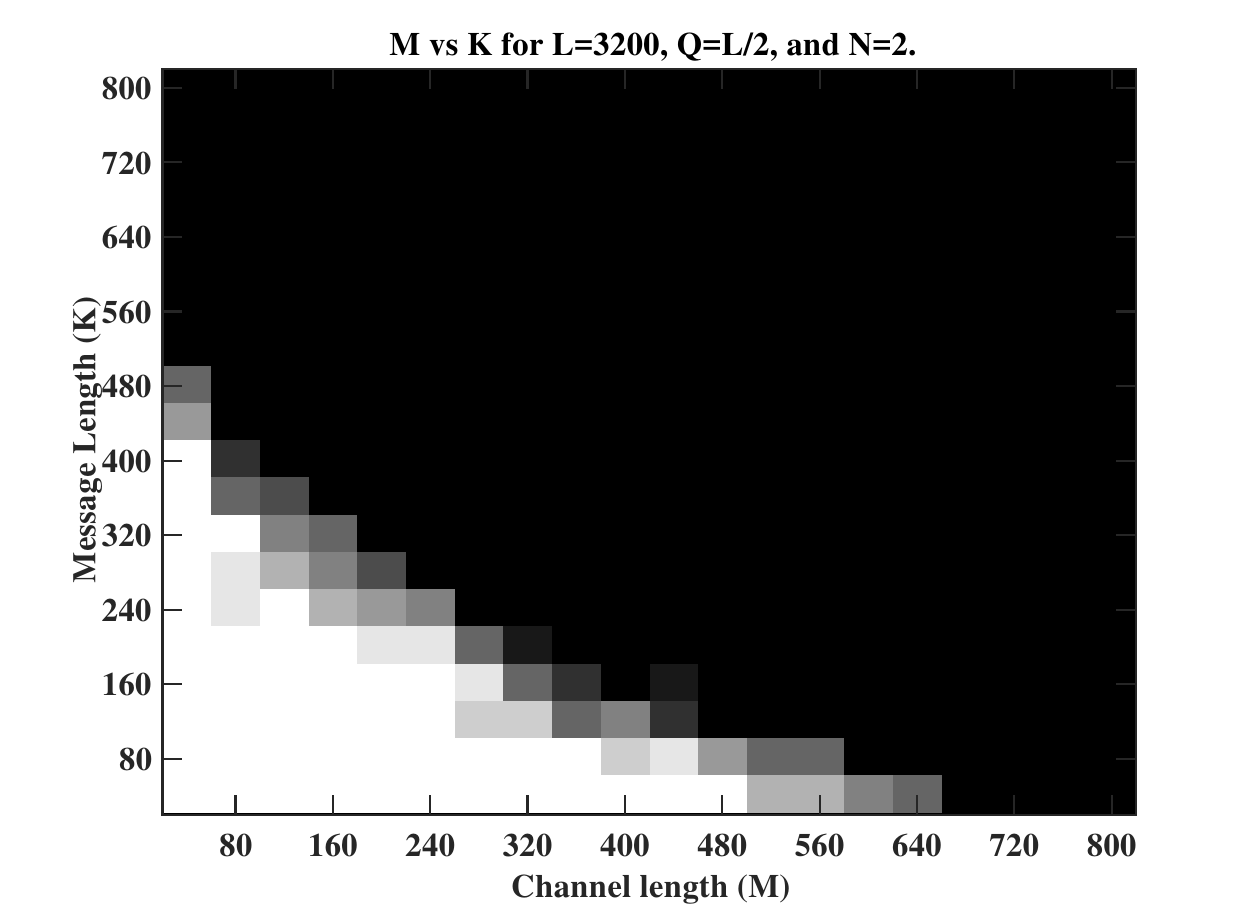}\\
		&\includegraphics[scale = 0.32, trim = 0cm 0cm 0.3cm 0cm,clip]{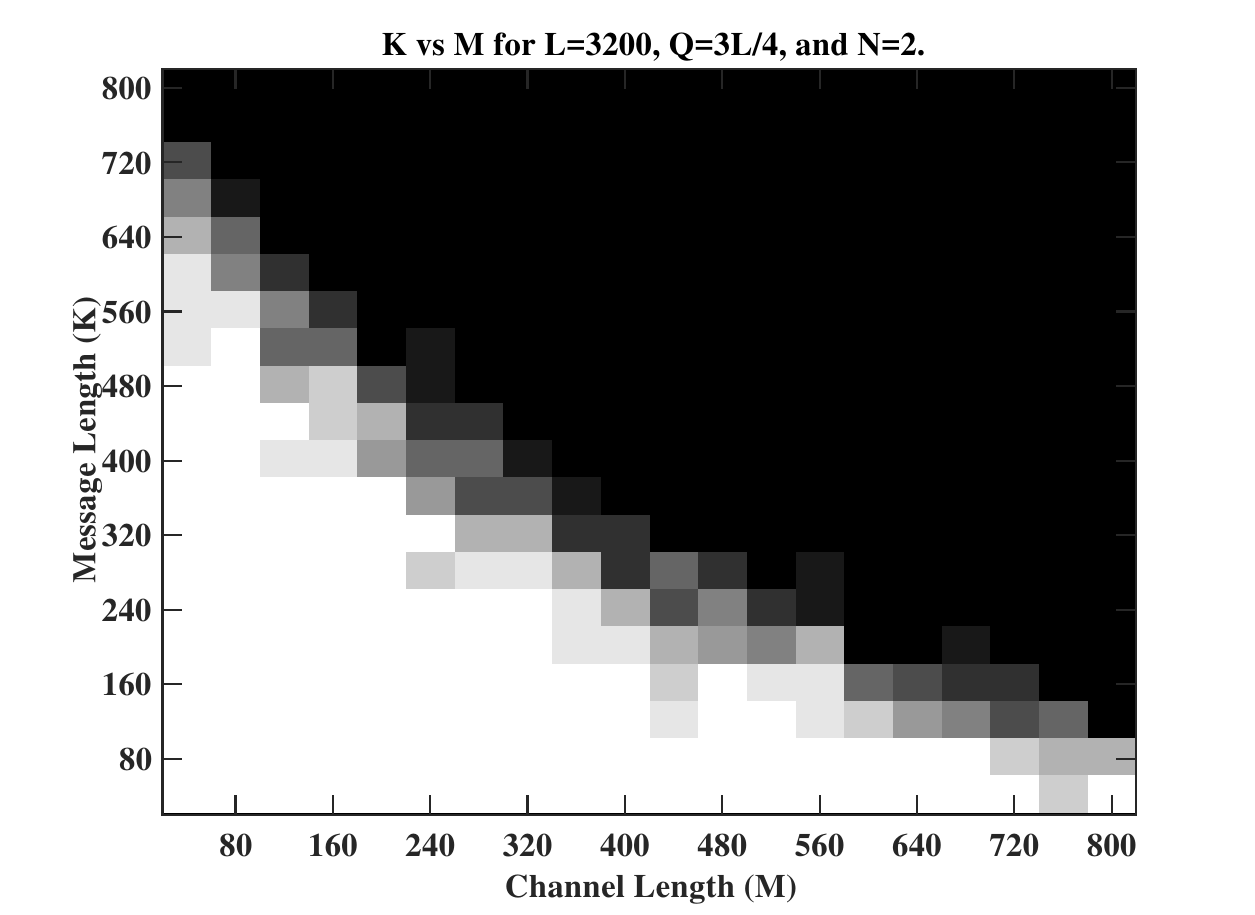}
		&\includegraphics[scale = 0.32, trim = 0.36cm 0cm 0cm 0cm,clip]{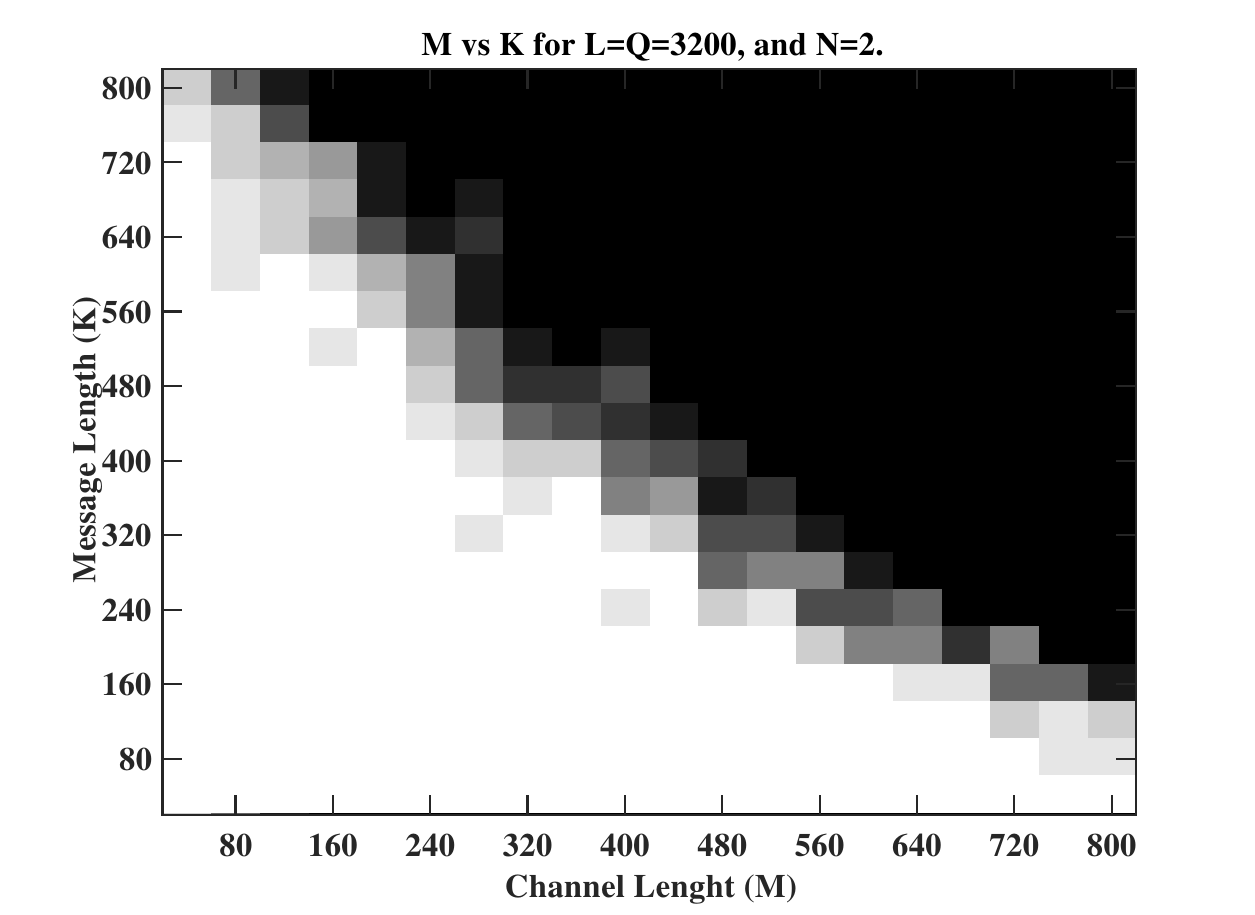}
        	\end{tabular}
	\caption{\small\sl For fixed $L$, and $N$, phase diagrams of $K$ vs. $M$ for different $Q$.  Phase transitions show that larger modulated inputs, larger $Q$, allow recovery with larger values of $K$, and $M$. }
	\label{fig:phase-transitions}
\end{figure*} 

\begin{figure*}[ht]
\centering
		\includegraphics[scale = 0.32, trim = 0.3cm 0cm 0cm 0cm,clip]{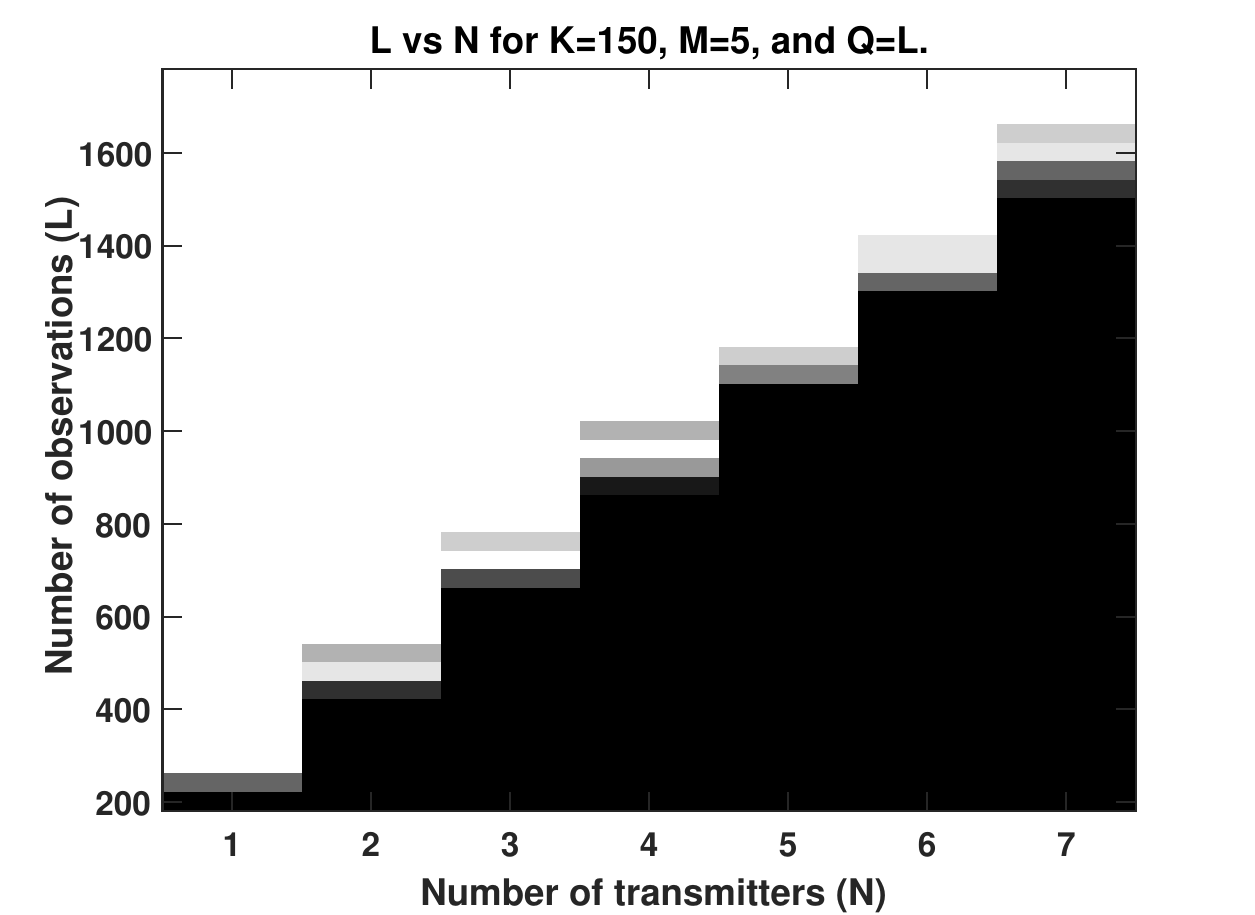}
       \caption{\small\sl Number of transmitters  vs. the number of observations.}
       \label{fig:NoOfTransmitters}
        \end{figure*}

\begin{figure*} [ht]  
\centering
		\includegraphics[scale = 0.32, trim = 0.3cm 0cm 0.3cm 0cm,clip]{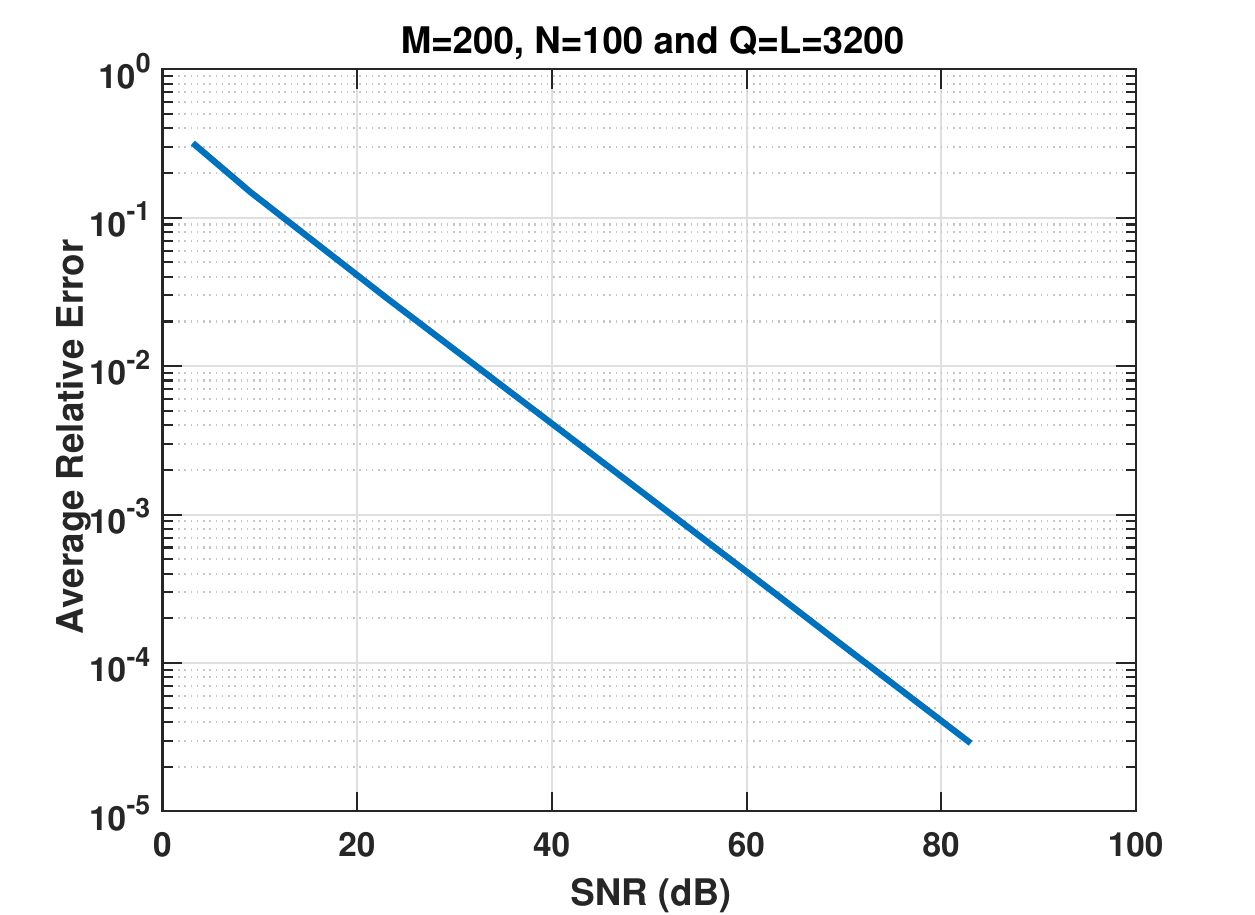}
        \caption{\small\sl SNR (dB) vs. average relative error for $N=2$.}
        \label{fig:snr}
        \end{figure*} 
        
\section{Numerical Simulations}\label{sec:numerics}
In this section, we provide extensive simulations, using phase transitions, to verify our sample complexity bounds. We have also done a stability analysis of our algorithm under a noisy environment. 

To numerically verify the bounds on the dimensions $Q$, $M$, $K$, and $L$ for fixed $N$ as given in \eqref{eq:sample-complexity-main-thm}, and \eqref{eq:sample-complexity-LN}, we generate phase transitions using Algorithm \ref{algo:gradient-descent}. Black region represents the probability of failure and white region represents the probability of success over $10$ independent experiments. We generate $\vh_{n0}$, and $\vx_{n0}$ as random Gaussian vectors, and set coding matrix $\mC_n$ as a subset of columns of the DCT matrix. Observed signal is obtained by using the model \eqref{eq:model}. Initialization is obtained via Algorithm \ref{algo:initialization} for Algorithm \ref{algo:gradient-descent} to converge. An experiment is labeled as success  if the relative error is below $10^{-2}$ where 
\begin{align}\label{eq:relative-error}
\text{Relative Error}: =\sqrt{ \frac{\sum_{n=1}^N \|\hat{\vh}_n \hat{\bar{\vx}}_n^*-\vh_{n0}\bar{\vx}_{n0}^*\|_F^2}{\sum_{n=1}^N\|\vh_{n0}\bar{\vx}_{n0}^*\|_F^2}}
\end{align} 
In Figure \ref{fig:phase-transitions}, phase diagrams for different lengths of modulated signals $Q$ are shown by fixing $L=3200$, and $N=2$. Message length $K$ and channel length $M$ are varying in each phase diagram by setting $Q$ to $L/4$, $L/2$, $3L/4$, and $L$ respectively. As $Q$ increases, the white region/probability of success increases. In first (top left) phase transition, when the number of measurements is $10$-times more than the number of unknowns, $L \approx 10N(K+M)$, almost always successful recovery occurs. From second to fourth phase diagram, this factor reduces to $5$, $3.5$ and $2$, respectively. 

Figure \ref{fig:NoOfTransmitters} shows the number of observations vs. number of transmitters $(N)$ assuming noise-free environment. We observe that number of observations required for exact recovery increases linearly with the increasing number of transmitters.

In Figure \ref{fig:snr}, the algorithm's performance at different noise levels is shown. We synthetically generate Gaussian noise vector $\hat{\ve}$ to generate observed noisy signal as in \eqref{eq:measurements}. The plot shows a deceasing trend for the relative error(log scale), averaged over ten independent experiments, with increasing $\text{SNR} := 10 \log_{10} \left(\|\vy\|_2^2/\|\ve\|_2^2\right)$.

\section{Conclusion and Future Work}
We have established fundamental limits and recovery guarantees for blind deconvolution demixing using modulated inputs that have implementation potential. The quadratic scaling in the number of components highlights the inherent difficulty of this problem compared to the multichannel case. In future, we can reduce quadratic to linear scaling through refined analysis and exploit additional structure like sparsity.

\appendix
In this appendix, we provide the proof of Theorems stated in the paper that depends on four key conditions similar to \cite{li2018rapid}. However, the main difference is in noise robustness and RIP conditions due to the realistic subspace assumptions i.e., spanned by spread code rather than idealistic random Gaussian matrices. Our work becomes exactly same as \cite{ahmed2018ModBD} for $N=1$ but we have solved a more difficult problem of blind deconvolution demixing. At different points, we will refer the reader to \cite{ahmed2018ModBD,li2018rapid} to avoid overlap and we try our best to use same notations for reader ease. Before stating these four conditions, we go through some preliminaries.
\section{Preliminaries} 
\subsection{Neighborhood Sets}\label{sec:neighborhood_sets} 
Stable and robust recovery of the ground truth is made possible by ensuring that the iterates of the gradient descent algorithm 1 remain within basin of attraction means a region of incoherence and in close vicinity to the ground truth. To formalize these notions, we define the following sets of neighboring points of $\{(\vh_n,\vx_n)\}_{n=1}^N$

\begin{align}
&\setN_{d_0} := \{\{(\vh_n,\vx_n)\}_{n=1}^N | \|\vh_n\|_2 \leq 2\sqrt{d_{n0}}, \ \|\vx_n\|_2\leq 2 \sqrt{d_{n0}} \},\label{eq:setNd-def}\\
&\setN_\mu := \{ \{(\vh_n,\vx_n)\}_{n=1}^N | \sqrt{L}\|\mF_M\vh_n\|_\infty \leq 4\mu\sqrt{d_{n0}} \},\label{eq:setNmu-def}\\
&\setN_\nu : = \{\{(\vh_n,\vx_n)\}_{n=1}^N|  \sqrt{Q}\|\mC_n\vx_n\|_\infty \leq 4\nu\sqrt{d_{n0}} \},\label{eq:setNnu-def}\\
&\setN_{\varepsilon} := \{\{(\vh_n,\vx_n)\}_{n=1}^N | \|\vh_n\bar{\vx}_n^*-\vh_{n0}\bar{\vx}_{n0}^*\|_{F} \leq \varepsilon d_{n0} \}\label{eq:setNe-def}.
\end{align}


Suppose that $\tilde{\vh}_n \perp \vh_{n0}$, and $\tilde{\vx}_n \perp \vx_{n0}$, so we can write $\vh_n$, and $\vx_n$ as $\vh_n = \alpha_{n1}\vh_{n0} + \tilde{\vh}_n$, and $\vx_n = \alpha_{n2} \vx_{n0} + \tilde{\vx}_n$, where 
$\alpha_{n1} = \frac{\vh_{n0}^*\vh_n}{d_{n0}},\ \text{and}\  \alpha_{n2} = \frac{\vx_{n0}^*\vx_n}{d_{n0}}.$ We define $\Delta\vh_n$, and $\Delta\vx_n$ as 

\begin{align}\label{eq:Deltah-Deltax}
\Delta \vh_n = \vh_n - \alpha_n \vh_{n0}, \ \text{and} \ \Delta \vx_n = \vx_n - \bar{\alpha}_n^{-1} \vx_{n0}.
\end{align}
To proof the Lemma \ref{lem:local-regularity-G}, we choose $\alpha_n$ as below:
\begin{align}
 \alpha_n(\vh_n,\vx_n) = \begin{cases}
(1-\delta_0)\alpha_{n1}, & \text{if}~ \|\vh_n\|_2 \geq \|\vx_n\|_2\\
\frac{1}{(1-\delta_0)\bar{\alpha}_{n2}}, & \text{if} ~ \|\vh_n\|_2 < \|\vx_n\|_2\notag
\end{cases}
\end{align}
where $\delta_0 := \tfrac{\delta}{10}$, and $\delta \leq \frac{\epsilon}{\sqrt{N}\kappa}$. We can write the difference of the outer product of the estimated and original variables in terms of the orthogonal decomposition of $\vh_n\vx_n^*$ as below.  
\begin{align}\label{eq:difference-expansion}
\vh_n\vx_n^*-\vh_{n0}\vx_{n0}^* = (\alpha_{n1}\bar{\alpha}_{n2}-1) \vh_{n0}\vx_{n0}^*+\bar{\alpha}_{n2} \tilde{\vh}_n\vx_{n0}^* + 
\alpha_{n1} \vh_{n0} \tilde{\vx}_n^* +\tilde{\vh}_n\tilde{\vx}_n^*.
\end{align}

To see the dependence of $\Delta \vh_n$, $\Delta \vx_n$, $\alpha_{n1}$, and $\alpha_{n2}$ on $\delta_n$ the lemma is stated below. 

\begin{lem}\label{lem:local-regulaity-Del-norm-bounds}
	Recall that $\|\vh_{n0}\|_2 = \|\vx_{n0}\|_2 = \sqrt{d_{n0}}$. If $\delta_n: = \frac{\|\vh_n\vx_n^*-\vh_{n0}\vx_{n0}^*\|_F}{d_{n0}} < 1$ then for all $(\vh_n,\vx_n) \in \setN_{d_{n0}}$. The following are helpful bounds that we have, $|\alpha_{n1}| < 2$, $|\alpha_{n2}| < 2$, and $|\alpha_{n1}\bar{\alpha}_{n2}-1| \leq \delta_n$.  For all $(\vh_n,\vx_n) \in \setN_{d_{n0}} \cap \setN_{\varepsilon}$ with $\varepsilon \leq 1/15$, there holds $\|\Delta \vh_n\|_2^2 \leq (4.6\delta_n^2 + 4\delta_0^2)d_{n0}$, $\|\Delta \vx_n\|_2^2 \leq (7.5\delta_n^2 + 2.88\delta_0^2)d_{n0}$, and $\|\Delta \vh_n\|_2^2 \|\Delta \vx_n\|_2^2 \leq \frac{1}{26} (\delta_n^2 + \delta_0^2) d_{n0}^2$. Moreover, if we assume $(\vh_n,\vx_n) \in \setN_{\mu} \cap \setN_{\nu}$, we have  $\sqrt{L}\|\mF_{M}\Delta\vh_n\|_\infty \leq 6 \mu \sqrt{d_{n0}}$, and $\sqrt{Q}\|\mC_n \Delta \vx_n \|_\infty \leq 6\nu \sqrt{d_{n0}}$. 
\end{lem}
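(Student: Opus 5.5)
This lemma is the per-component analogue of the corresponding single-pair lemma in \cite{li2018rapid} (and in \cite{ahmed2018ModBD}). The constraints of $\setN_{d_0}$, $\setN_\mu$, $\setN_\nu$, $\setN_\varepsilon$ act on each index $n$ separately, so I fix $n$ and simply repeat that argument with $(\vh_0,\vx_0,d_0,\delta)$ replaced by $(\vh_{n0},\vx_{n0},d_{n0},\delta_n)$. No mixing across $n$ enters.

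\textbf{Step 1: the bounds on $\alpha_{n1},\alpha_{n2}$.} Start from the orthogonal expansion \eqref{eq:difference-expansion}. Its four terms are mutually orthogonal in the Frobenius inner product, so
\[
\delta_n^2 d_{n0}^2 = |\alpha_{n1}\bar{\alpha}_{n2}-1|^2 d_{n0}^2 + |\alpha_{n2}|^2\|\tilde{\vh}_n\|_2^2 d_{n0} + |\alpha_{n1}|^2 \|\tilde{\vx}_n\|_2^2 d_{n0} + \|\tilde{\vh}_n\|_2^2\|\tilde{\vx}_n\|_2^2 .
\]
Reading off the first term gives $|\alpha_{n1}\bar{\alpha}_{n2}-1|\le\delta_n$. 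For the other bounds, Cauchy--Schwarz gives $|\alpha_{n1}|\le \|\vh_n\|_2/\sqrt{d_{n0}}\le 2$, with the strict inequality from the definition of $\setN_{d_0}$ as in the cited proof; the same holds for $\alpha_{n2}$.

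\textbf{Step 2: the norm bounds on $\Delta\vh_n,\Delta\vx_n$.} Split into the two cases in the definition of $\alpha_n$; by symmetry I treat $\|\vh_n\|_2\ge\|\vx_n\|_2$.
\begin{itemize}
\item Here $\Delta\vh_n=\delta_0\alpha_{n1}\vh_{n0}+\tilde{\vh}_n$, so $\|\Delta\vh_n\|_2^2=\delta_0^2|\alpha_{n1}|^2 d_{n0}+\|\tilde{\vh}_n\|_2^2$.
\item Write $\Delta\vx_n=(\alpha_{n2}-\tfrac{1}{(1-\delta_0)\bar\alpha_{n1}})\vx_{n0}+\tilde{\vx}_n$. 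Bound the coefficient using $|\alpha_{n1}\bar\alpha_{n2}-1|\le\delta_n$ together with a lower bound on $|\alpha_{n1}|$.
\item That lower bound is the crux. The case assumption gives $\|\vh_n\|_2^2\ge\|\vh_n\|_2\|\vx_n\|_2\ge\|\vh_n\vx_n^*\|_F\ge(1-\varepsilon)d_{n0}$. Combined with $\|\tilde{\vh}_n\|_2$ small, this yields $|\alpha_{n1}|\gtrsim\sqrt{1-\varepsilon}$, which is why $\varepsilon\le1/15$ is needed.
\item The orthogonal identity bounds $\|\tilde{\vh}_n\|_2^2\le\delta_n^2d_{n0}/|\alpha_{n2}|^2$ and $\|\tilde{\vx}_n\|_2^2\le\delta_n^2d_{n0}/|\alpha_{n1}|^2$. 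The lower bounds on $|\alpha_{n1}|,|\alpha_{n2}|$ then give the explicit constants $4.6,4,7.5,2.88$.
\item The product bound $\tfrac{1}{26}(\delta_n^2+\delta_0^2)d_{n0}^2$ follows by multiplying the two bounds and using $\delta_n\le\varepsilon\le 1/15$ and $\delta_0\le\varepsilon/10$ to absorb one factor into the constant.
\end{itemize}

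\textbf{Step 3: the incoherence bounds.} For $(\vh_n,\vx_n)\in\setN_\mu\cap\setN_\nu$, apply the triangle inequality:
\[
\sqrt{L}\|\mF_M\Delta\vh_n\|_\infty\le\sqrt{L}\|\mF_M\vh_n\|_\infty+|\alpha_n|\sqrt{L}\|\mF_M\vh_{n0}\|_\infty\le 4\mu\sqrt{d_{n0}}+|\alpha_n|\mu\sqrt{d_{n0}}.
\]
The second step uses the definition of $\mu$ in \eqref{eq:muh-nux}. Then $|\alpha_n|\le2$ gives $6\mu\sqrt{d_{n0}}$. The same argument with $\mC_n$, $\nu$ and the bound $|\bar\alpha_n^{-1}|\le 2$ gives the $\nu$ estimate. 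Note that $|\bar\alpha_n^{-1}|\le 2$ is not immediate from Step 1; it again needs the lower bound on $|\alpha_{n1}|$ or $|\alpha_{n2}|$ from Step 2, so that bound must be proved first.

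\textbf{Main obstacle.} The hardest part is tracking the explicit numerical constants in Step 2, especially the lower bounds on $|\alpha_{n1}|$ and $|\alpha_{n2}|$ in each case. For these I will follow the computations of \cite{li2018rapid} line by line with $d_{n0}$ in place of $d_0$.
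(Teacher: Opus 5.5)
Your proposal is correct and has the same structure as the paper's proof: Cauchy--Schwarz and the orthogonal expansion \eqref{eq:difference-expansion} for the $\alpha$-bounds, the $\Delta$-norm constants deferred to the single-pair lemma of \cite{li2018rapid}, and the triangle inequality with $|\alpha_n|,|\bar{\alpha}_n^{-1}|\le 2$ for the incoherence bounds (Cauchy--Schwarz gives only $|\alpha_{n1}|,|\alpha_{n2}|\le 2$, which is also all the paper proves; $\setN_{d_0}$ does not give the strict inequality, since $\vh_n=2\vh_{n0}$, $\vx_n=\vx_{n0}/2$ attains equality). The one difference is in Case 1: the paper gets $\frac{1}{(1-\delta_0)|\alpha_{n1}|}\le 2$ without touching $\|\tilde{\vh}_n\|_2$, by combining $|\alpha_{n2}|\le\sqrt{1+\delta_n}$ (from $\|\vx_n\|_2^2\le\|\vh_n\|_2\|\vx_n\|_2\le(1+\delta_n)d_{n0}$) with $|\alpha_{n1}\bar{\alpha}_{n2}|\ge 1-\delta_n$, whereas your route through $\|\vh_n\|_2^2\ge(1-\delta_n)d_{n0}$ also works but needs the extra bound $|\alpha_{n2}|\ge(1-\delta_n)/2$ to control $\|\tilde{\vh}_n\|_2$.
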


Before stating the main lemma \ref{lem:main-lemma}, we define a neighborhood set as
\begin{align}\label{eq:setNF}
\setN_{F} : = \bigg\{(\vh,\vx) \ |  F(\vh,\vx) \leq \frac{1}{3N\kappa^2} \varepsilon^2d_0^2 + \|\ve\|_2^2\bigg\}
\end{align}
That neighborhood set is the sub-level of the objective function. We show that proposed gradient descent algorithm 1 makes the loss function to decrease; if the current iterate  $\vw^t \in \setN_{\varepsilon}\cap\setN_{F}$, then loss function goes on decreasing and the next iterate $\vw^{t+1}\in \setN_{\varepsilon}\cap\setN_{F}$\footnote{In literature, estimate at each iterate $t$, $\vw^t$ is represented by $\vz^t$ and vice versa}.  

\textbf{Proof of Lemma \ref{lem:local-regulaity-Del-norm-bounds}}\label{sec:local-regulaity-Del-norm-bounds}

We know that $\alpha_{n1} = \frac{\vh_n^* \vh_{n0}}{d_{n0}}$. By applying the Cauchy-Schwarz inequality and noting that $\vh_n \in \setN_{d_{n0}}$, we obtain $|\alpha_{n1}| \leq \frac{\|\vh_n\| \|\vh_{n0}\|}{d_{n0}} \leq 2$. Similarly, we can demonstrate that $|\alpha_{n2}| \leq 2$.
 Expand $\|\vh_n\vx_n^*-\vh_{n0}\vx_{n0}^*\|_F^2 = \delta_n^2d_{n0}^2$ using \eqref{eq:difference-expansion} to obtain
	\begin{align}
		\delta_n^2d_{n0}^2 = (\alpha_{n1}\bar{\alpha}_{n2}-1)^2d_{n0}^2 + |\bar{\alpha}_{n2}|^2 \|\tilde{\vh}_n\|_2^2 d_{n0} + |\alpha_{n1}|^2\|\tilde{\vx}_n\|_2^2 d_{n0} 
 +\|\tilde{\vh}_n\|_2^2\|\tilde{\vx}_n\|_2^2\notag,
	\end{align}
	which implies $|\alpha_{n1}\bar{\alpha}_{n2}-1| \leq \delta_n$. 
	 
	 The identities $\|\Delta \vh_n\|_2^2 \leq (4.6\delta^2_n + 4\delta^2_0)d_{n0}$, $\|\Delta \vx_n\|_2^2 \leq (7.5\delta^2_n + 2.88\delta^2_0)d_{n0}$, $\|\Delta \vh_n\|_2^2 \|\Delta \vx_n\|_2^2 \leq \frac{1}{26}(\delta^2_n + \delta^2_0)d_{n0}$, and $\sqrt{L}\|\mF_M\Delta\vh_n\|_\infty \leq 6 \mu \sqrt{d_{0}}$ are proved in Lemma 6.9 in \cite{li2018rapid}. We now prove that $\sqrt{Q} \|\mC \Delta \vx_n \|_\infty \leq 6\nu \sqrt{d_{0}}$.
	
	\textbf{Case 1}: $\|\vh_n\|_2 \geq \|\vx_n\|_2$, and $\alpha_n = (1-\delta_0)\alpha_{n1}$. Observe that in this case 
	\begin{align*}
	|\alpha_{n2}| &\leq \frac{\|\vx_n\|_2\|\vx_{n0}\|_2}{d_{n0}} \leq \frac{1}{\sqrt{d_{n0}}}\sqrt{\|\vh_n\|_2\|\vx_n\|_2} \\
	&\leq \frac{1}{\sqrt{d_{n0}}} \sqrt{\|\vh_n\vx_n^* - \vh_{n0}\vx_{n0}^*\|_F + \|\vh_{n0}\vx_{n0}^*\|_F} = \sqrt{1+\delta_n},
	\end{align*}
	where we used the fact that $\|\vh_n\vx_n^*-\vh_{n0}\vx_{n0}^*\|_F = \delta_n d_{n0}$, and $\|\vh_{n0}\|_2 = \|\vx_{n0}\|_2 = \sqrt{d_{n0}}$. Therefore, $\tfrac{1}{|(1-\delta_0)\alpha_{n1}|} = \tfrac{|\alpha_{n2}|}{|(1-\delta_0)\bar{\alpha}_{n2}\alpha_{n1}|} \leq \tfrac{\sqrt{1+\delta_n}}{|1-\delta_0||1-\delta_n|} \leq 2$, where the last inequality follows using our choice $\delta_n \leq \varepsilon \leq 1/15$, and $\delta_0 = \delta/10$. This gives us
	\begin{align*}
	\sqrt{Q} \|\mC \Delta\vx_n\|_\infty 
	&\leq \sqrt{Q}\|\mC \vx_n\|_\infty + \tfrac{1}{(1-\delta_0)|\alpha_{n1}|} \sqrt{Q} \|\mC_n\vx_{n0}\|_\infty\\
	& \leq 4\nu\sqrt{d_{0}}+2\nu \sqrt{d_{0}} \\
   & \leq 6\nu \sqrt{d_{0}}.
	\end{align*}
	
	\textbf{Case 2}: $\|\vh_n\|_2 < \|\vx_n\|_2$, and $\alpha_n = \frac{1}{(1-\delta_0)\bar{\alpha}_{n2}}$. Since $|\alpha_{n2}| \leq 2$, we have
	\begin{align*}
	\sqrt{Q}\|\mC\Delta\vx_n\|_\infty 
	&\leq \sqrt{Q}\|\mC \vx_n\|_\infty + (1-\delta_0)|\bar{\alpha}_{n2}| \sqrt{Q} \|\mC\vx_{n0}\|_\infty\\
	&\leq 4\nu\sqrt{d_{0}}+2(1-\delta_0)\nu \sqrt{d_{0}} \leq 6\nu \sqrt{d_{0}}.
	\end{align*}
	The proof is now complete. 

\begin{lem}[In \cite{li2018rapid}:Lemma 5.9]\label{lem:main-lemma}
	Suppose $\eta \leq 1/C_L$ is the step size , $\vw^t := (\vu^t,\mathbf{v}^t) \in \C^{N(M+K)}$, and Lipschitz constant, of $\nabla F (\vw^t)$ over $\setN_{d} \cap \setN_{\mu} \cap \setN_{\varepsilon}$, is $C_L$. If $\vw^t \in \setN_{\varepsilon} \cap \setN_{F}$, then $\vw^{t+1} \in \setN_{\varepsilon} \cap \setN_{F}$, and 
	\begin{align*}
	\tf(\vw^{t+1}) \leq \tf(\vw^t) - \eta \| \nabla F (\vw^t)\|_2^2. 
	\end{align*}
in which $\vw^{t+1} = \vw^{t} - \eta \nabla F (\vw^t).$
\end{lem}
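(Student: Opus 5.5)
We follow the standard descent-lemma argument of \cite{li2018rapid} (Lemma 5.9), adapted to the mixed objective $\tf = F + G$. The plan has three steps: show that a short segment from $\vw^t$ stays inside the region where the Lipschitz bound holds, apply the descent inequality, and then convert the decrease in $\tf$ into membership of $\vw^{t+1}$ in $\setN_F \cap \setN_\varepsilon$.

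\textbf{Step 1: the segment stays in the Lipschitz region.} Start from $\vw^t \in \setN_\varepsilon \cap \setN_F$. Because $\tf(\vw^t) \le \tf(\vw^0)$ along the iterates and $\rho \ge d^2 + \|\ve\|_2^2$, we have $G(\vw^t) \le \tf(\vw^t) \lesssim \varepsilon^2 d_0^2/N\kappa^2 + \|\ve\|_2^2$. Each $G_0$ term is quadratic in its excess, so this forces every argument of $G_0$ to be at most $1+\setO(1)$, i.e.\ $\|\vh_n\|_2^2 \le 2d_n(1+\cdot)$ and similarly for the coherence terms. With $d_n \le 1.1 d_{n0}$ this places $\vw^t$ in $\setN_{d_0} \cap \setN_\mu \cap \setN_\nu$ (the constants $2$ and $4$ have slack). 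Next take the segment $\vw(\lambda) = \vw^t - \lambda \nabla \tf(\vw^t)$, $\lambda \in [0,\eta]$. Let $\lambda_0$ be the first time it leaves $\setN_{d_0} \cap \setN_\mu \cap \setN_\nu \cap \setN_\varepsilon$. On $[0,\lambda_0]$ the gradient is $C_L$-Lipschitz by hypothesis, so the standard descent inequality gives
\[
\tf(\vw(\lambda)) \le \tf(\vw^t) - \lambda\Bigl(1 - \tfrac{C_L\lambda}{2}\Bigr)\|\nabla \tf(\vw^t)\|_2^2 \le \tf(\vw^t).
\]
Hence $\tf$ stays below the $\setN_F$ level on the segment. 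Repeating the argument of the first paragraph, every point of the segment lies strictly inside $\setN_{d_0} \cap \setN_\mu \cap \setN_\nu$.

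\textbf{Step 2: containment in $\setN_\varepsilon$.} We show $F(\vw) \le \tfrac{1}{3N\kappa^2}\varepsilon^2 d_0^2 + \|\ve\|_2^2$ implies $\|\mZ(\vw) - \mZ(\vh_0,\vx_0)\|_F < \varepsilon d_{n0}$ blockwise. To do this, expand $F$ as in \eqref{eq:F-def}. Lower bound the quadratic term by $\tfrac{2}{3}\|\mZ - \mZ_0\|_F^2$ using the restricted isometry / joint incoherence property of $\setA$ on the set of $N$-block rank-2 matrices. That property holds under \eqref{eq:sample-complexity-main-thm}, with cross terms $\setA_n^*\setA_m$ small, by Rademacher modulation as in \cite{ahmed2018ModBD}. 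Upper bound the noise cross term by $2\sqrt{N}\|\setA^*(\ve)\|_{2\to2}\,\|\mZ - \mZ_0\|_F$. Combined with the assumed noise bound $\|\setA^*(\ve)\|_{2\to2} \le \tfrac{2\varepsilon}{50N\kappa} d_0$, solving the resulting quadratic inequality in $\delta$ gives $\delta < \varepsilon/(\sqrt{N}\kappa)$. This gives each block below $\varepsilon d_{n0}$ via $d_0 \le \sqrt{N}\kappa\, d_{n0}$, and so the segment never reaches the boundary of $\setN_\varepsilon$. By continuity, $\lambda_0 \ge \eta$.

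\textbf{Step 3: conclusion.} Setting $\lambda = \eta \le 1/C_L$ gives $\tf(\vw^{t+1}) \le \tf(\vw^t) - \tfrac{\eta}{2}\|\nabla\tf(\vw^t)\|_2^2$. The stated constant $\eta\|\cdot\|^2$ follows after adjusting $C_L$ by a factor of 2, as in \cite{li2018rapid}. Finally, $\vw^{t+1} \in \setN_F$ since $F \le \tf$ and the start lies in the sublevel set.

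\textbf{Main obstacle.} The hardest part is Step 2: the restricted isometry of the summed operator $\setA$ over the local incoherent set. It requires controlling the cross-user terms $\setA_n^*\setA_m$ for the deterministic subspaces $\mC_n$ using only the Rademacher $\vr_n$. I would first try to obtain it from a chaining / matrix Bernstein bound of the type in \cite{ahmed2018ModBD}, union-bounded over the $N^2$ pairs. This is where the $N^2$ factor in $Q$ arises.
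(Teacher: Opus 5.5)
Your architecture matches Lemma 5.9 of \cite{li2018rapid}, and the paper's own proof here is only a deferral to that lemma. The ingredients are the same: control of $G$ puts the $\tf$-sublevel set inside $\setN_{d_0}\cap\setN_\mu\cap\setN_\nu$; local RIP (Lemma~\ref{lem:local-RIP} with $\xi=\tfrac14$) and Lemma~\ref{lem:noise-stability} put sublevel points strictly inside $\setN_\varepsilon$; and a continuity argument along $\vw^t-\lambda\gtf(\vw^t)$ is combined with a descent inequality. Your Step~2 arithmetic is fine: the quadratic gives $\delta\le 0.8\,\varepsilon/(\sqrt{N}\kappa)$ even with the correct factor $2\sqrt{2N}$ in the noise term, and $d_0\le\sqrt{N}\kappa\,d_{n0}$ converts this blockwise. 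Using the first exit time instead of Li et al.'s largest $\lambda_{\max}$ with $\tf(\vw^t-\lambda\gtf(\vw^t))\le\tf(\vw^t)$ is an inessential variant.

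\textbf{The gap is in Step~3.} Your inequality $\tf(\vw(\lambda))\le\tf(\vw^t)-\lambda\bigl(1-\tfrac{C_L\lambda}{2}\bigr)\|\gtf(\vw^t)\|_2^2$ yields only $\tfrac{\eta}{2}\|\gtf(\vw^t)\|_2^2$. No adjustment of $C_L$ can recover the stated $\eta\|\gtf(\vw^t)\|_2^2$, because $\lambda\bigl(1-\tfrac{C_L\lambda}{2}\bigr)<\lambda$ for every $C_L,\lambda>0$. For an ordinary real gradient the stated bound is in fact false: take $f(x)=\tfrac{C}{2}x^2$ and $\eta=1/C$.

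The missing idea is the Wirtinger normalization in \eqref{eq:gFh-gFx-def}. Since $\gtf=\partial\tf/\partial\bar{\vw}$ is half the real gradient, you have $\tf(\vw+\vz)-\tf(\vw)=\int_0^1 2\Re{\<\gtf(\vw+s\vz),\vz\>}\,ds$. Together with $C_L$-Lipschitzness along the segment, this gives $\tf(\vw+\vz)\le\tf(\vw)+2\Re{\<\gtf(\vw),\vz\>}+C_L\|\vz\|_2^2$. With $\vz=-\lambda\gtf(\vw^t)$ this becomes $\tf(\vw(\lambda))\le\tf(\vw^t)-(2\lambda-C_L\lambda^2)\|\gtf(\vw^t)\|_2^2$, and $2\eta-C_L\eta^2\ge\eta$ holds precisely because $\eta\le 1/C_L$. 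Use this inequality in place of yours, including in Step~1, and drop the remark about adjusting $C_L$.

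\textbf{Smaller points.}
\begin{itemize}
\item $\setN_F$ must be read as the sublevel set of $\tf$, as in \cite{li2018rapid}. An $F$-sublevel set is invariant under $(\vh_n,\vx_n)\mapsto(c\,\vh_n,\vx_n/\bar{c})$, so it cannot lie in $\setN_{d_0}$.
\item With that reading, both $G(\vw^t)\le\tf(\vw^t)\le\tfrac{\varepsilon^2d_0^2}{3N\kappa^2}+\|\ve\|_2^2$ and $\vw^{t+1}\in\setN_F$ follow directly from the hypothesis. Your appeal to $\tf(\vw^t)\le\tf(\vw^0)$ ``along the iterates'' imports the induction of Theorem~\ref{thm:convergence} and is not available inside the lemma.
\item For real slack in Step~1 you need $\rho\ge d^2+2\|\ve\|_2^2$, as in Lemmas~\ref{lem:local-regularity-G} and~\ref{lem:smoothness-CL}. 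With $\rho\ge d^2+\|\ve\|_2^2$, the ratio of the level to $\rho$ can approach $1$ when noise dominates. That only gives $\|\vh_n\|_2^2\le 4d_n\le 4.4\,d_{n0}$, which is not enough for $\setN_{d_0}$.
\item The local RIP is not the obstacle for this lemma; it is an input here, namely Lemma~\ref{lem:local-RIP}. The paper proves it with a single chaos bound over the whole block-diagonal set, not by pairwise control of $\setA_n^*\setA_m$.
\end{itemize}
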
 
\begin{proof}
	It's proof is identical to the Lemma 5.9 proof in \cite{li2018rapid} that utilizes the smoothness condition as stated in Lemma \ref{lem:smoothness-CL}.
\end{proof}

\section{Key conditions} \label{sec:Key-conditions}
Four key conditions required to prove the theorems are stated below.

\subsection{Local regularity}\label{sec:local_regularity}

In Lemma \ref{lem:main-lemma}, a lower bound on $\|\gtf(\vw^t)\|^2_2$ is obtained by using the following lemma. 
\begin{lem}[Lemma 6.12 in \cite{li2018rapid}]\label{lem:local-regularity}
	Let $F(\vh,\vx)$ be as defined in \ref{eq:tF-def} and $\nabla F(\vh,\vx) : = (\nabla F_{\vh},\nabla F_{\vx}) \in \C^{N(M+K)}$. Then there exists a regularity constant $\omega = d_0/7000 >0$ such that 
	\begin{align*}
	\|\nabla F(\vh,\vx)\|_2^2 \geq \omega \left[ F(\vh,\vx) -c\right]_+
	\end{align*}
	for any $(\vh,\vx) \in \setN_{d_0} \cap \setN_{\mu} \cap \setN_{\nu} \cap \setN_{\varepsilon}$, where $c= \|\ve\|_2^2+2000 \|\setA^*(\ve)\|_{2\rightarrow 2}^2$, and $\rho \geq d^2+\|\ve\|_2^2$. 
\end{lem}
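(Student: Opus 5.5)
Following Lemma 6.12 of \cite{li2018rapid}, the plan is to lower bound $\|\nabla F\|_2^2$ through inner products with the error directions. For each $n$ we decompose the iterate as in \eqref{eq:Deltah-Deltax} and use the key inequality
\[
\|\nabla F\|_2^2 \geq \frac{|\<\nabla F_{\vh},\Delta\vh\> + \<\nabla F_{\vx},\Delta\vx\>|^2}{\|\Delta\vh\|_2^2+\|\Delta\vx\|_2^2}.
\]
Here $\Delta\vh=\vctr[\Delta\vh_n]$ and $\Delta\vx=\vctr[\Delta\vx_n]$. The denominator is controlled by Lemma \ref{lem:local-regulaity-Del-norm-bounds}: summing its bounds gives $\|\Delta\vh\|_2^2+\|\Delta\vx\|_2^2 \lesssim \sum_n(\delta_n^2+\delta_0^2)d_{n0} \lesssim \delta^2 d_0$, with $\delta:=\|\mZ(\vh,\vx)-\mZ(\vh_0,\vx_0)\|_F/d_0$. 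It therefore suffices to show
\[
\mathrm{Re}\left(\<\nabla F_{\vh},\Delta\vh\>+\<\nabla F_{\vx},\Delta\vx\>\right) \gtrsim \delta^2 d_0^2 - c'\delta d_0\|\setA^*(\ve)\|_{2\to 2}.
\]

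To prove this, write $\mX=\mZ(\vh,\vx)-\mZ(\vh_0,\vx_0)$. Using \eqref{eq:gF-def}, the sum equals
\[
\<\setA(\mX)-\ve,\ \setA(\mathcal{U})\>, \qquad \mathcal{U}:=\mathrm{blkdiag}(\Delta\vh_n\vx_n^*+\vh_n\Delta\vx_n^*).
\]
A direct computation shows $\mathcal{U}=\mX+\mathcal{R}$, where $\mathcal{R}$ collects the second-order terms $\Delta\vh_n\Delta\vx_n^*$ and terms of size $\delta_0$. Then:

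\begin{itemize}
\item \textbf{Main term.} The restricted isometry property of $\setA$ on the tangent spaces $T_n$ at $\vh_{n0}\vx_{n0}^*$ gives $\|\setA(\mX)\|_2^2 \geq (1-\xi)\|\mX\|_F^2$. Here the deterministic-subspace/random-sign RIP, holding under \eqref{eq:sample-complexity-main-thm} with the stated probability, replaces the Gaussian RIP of \cite{li2018rapid}.
\item \textbf{Cross-talk between users.} The terms $\<\setA_n(\cdot),\setA_m(\cdot)\>$ for $n\neq m$ must be small. This incoherence is what produces the $N^2$ factor in $Q$.
\item \textbf{Second-order remainder.} The terms $\setA(\Delta\vh_n\Delta\vx_n^*)$ are bounded using the incoherence bounds $\sqrt{L}\|\mF_M\Delta\vh_n\|_\infty\le 6\mu\sqrt{d_{n0}}$ and $\sqrt{Q}\|\mC_n\Delta\vx_n\|_\infty\le 6\nu\sqrt{d_{n0}}$ from Lemma \ref{lem:local-regulaity-Del-norm-bounds}. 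Together with a uniform operator-norm bound on $\setA_n$ restricted to incoherent rank-one matrices, this shows the remainder is a small multiple of $\delta^2d_0^2$, since $\|\Delta\vh_n\|\|\Delta\vx_n\|\le \tfrac{1}{\sqrt{26}}(\delta_n^2+\delta_0^2)^{1/2}d_{n0}$.
\item \textbf{Noise.} The noise term is bounded by $|\<\setA^*(\ve),\mathcal{U}\>|\le \sqrt{2N}\|\setA^*(\ve)\|_{2\to2}\|\mathcal{U}\|_F$.
\end{itemize}

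To conclude, relate $\delta^2 d_0^2$ to $F$. From \eqref{eq:F-def} and the upper RIP, $F-\|\ve\|_2^2 \le (1+\xi)\delta^2d_0^2 + 2\sqrt{2N}\|\setA^*(\ve)\|\delta d_0$. Squaring the inner-product lower bound and dividing by $\delta^2 d_0$ gives
\[
\|\nabla F\|^2 \gtrsim d_0\big(\delta^2 d_0^2 - C\|\setA^*(\ve)\|^2\big).
\]
Combining this with the bound on $F$ yields $\|\nabla F\|_2^2\geq \omega[F-c]_+$ with $c=\|\ve\|_2^2+2000\|\setA^*(\ve)\|^2$. The choice $\omega=d_0/7000$ absorbs all the numeric constants once $\varepsilon\le 1/15$. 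When $F<c$ the inequality is trivial.

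\textbf{Expected obstacle.} The main difficulty is the uniform control of $\setA$ on the incoherent remainder and cross terms, i.e.\ the local RIP and cross-incoherence under random Rademacher modulation with deterministic $\mC_n$. I would take these from the RIP/concentration lemmas established for this model, adapting Ahmed's modulated blind deconvolution chaos-process bounds \cite{ahmed2018ModBD} to a union over $N$ users.
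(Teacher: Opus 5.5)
Your argument only handles the data-fidelity term of \eqref{eq:F-def}, and that is a genuine gap. In the statement, $F$ is the regularized objective of \eqref{eq:tF-def}, i.e.\ $\tf=F+G$. That is why the hypothesis $\rho\ge d^2+\|\ve\|_2^2$ appears, and it is $\tf$ whose decrease Lemma~\ref{lem:main-lemma} tracks.

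You use only the gradients in \eqref{eq:gF-def} and conclude $\|\nabla F\|_2^2\ge\omega[F-c]_+$ for the unpenalized $F$. This does not imply the claim. Inside $\setN_{d_0}\cap\setN_\mu\cap\setN_\nu\cap\setN_\varepsilon$ the penalty can be active, so $[\tf-c]_+$ can exceed $[F-c]_+$, and $\nabla G$ could a priori cancel $\nabla F$. Concretely, set $\ve=\mathbf{0}$ and take $\vh_n=\sqrt{3}\,\vh_{n0}$, $\vx_n=\vx_{n0}/\sqrt{3}$. This point lies in all four neighborhoods, with $F=0$ and $\nabla F=\mathbf{0}$. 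Yet $G_0(\|\vh_n\|_2^2/2d_n)>0$, since $3d_{n0}>2.2\,d_{n0}\ge 2d_n$. Your bound reads $0\ge 0$ there, while the lemma asserts $\|\nabla G\|_2^2\ge\omega G>0$, a statement purely about the penalty.

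The missing ingredient is the second condition the paper's proof combines with Lemma~\ref{lem:local-regularity-F}, namely Lemma~\ref{lem:local-regularity-G}:
\[
\operatorname{Re}\big(\langle\nabla G_{\vh},\Delta\vh\rangle+\langle\nabla G_{\vx},\Delta\vx\rangle\big)\ge\tfrac{\delta}{5}\sqrt{\rho\,G(\vh,\vx)}.
\]
This is what the shrinkage factor $(1-\delta_0)$ in $\alpha_n$ is for:
\begin{itemize}
\item When $\|\vh_n\|_2\ge\|\vx_n\|_2$ it gives $\operatorname{Re}\langle\vh_n,\Delta\vh_n\rangle\ge\delta_0\|\vh_n\|_2^2$; otherwise it gives $\operatorname{Re}\langle\vx_n,\Delta\vx_n\rangle\ge\delta_0\|\vx_n\|_2^2$. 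In each case the other norm is below the $2d_n$ threshold, so both norm penalties are handled.
\item On the active coherence terms one has $\operatorname{Re}\langle\vf_\ell\vf_\ell^*\vh_n,\Delta\vh_n\rangle\gtrsim d_n\mu^2/L$ and $\operatorname{Re}\langle\vc_{q,n}\vc_{q,n}^*\vx_n,\Delta\vx_n\rangle\gtrsim d_n\nu^2/Q$.
\end{itemize}
Together these make every piece of $\nabla G$ positively aligned with $(\Delta\vh,\Delta\vx)$. You adopt this $\alpha_n$ but never use it for that purpose. Your remainder in fact needs no $\delta_0$ terms, since $\mathcal{U}=\mX+\mathrm{blkdiag}(\Delta\vh_n\Delta\vx_n^*)$ exactly.

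With the $G$-inequality in hand, the combination is that of Lemma~6.12 in \cite{li2018rapid}. Add it to your $F$-bound before dividing by $\|\Delta\vh\|_2^2+\|\Delta\vx\|_2^2$, and split using $(a+b)^2\ge a^2+b^2$ for $a,b\ge 0$. Then absorb the resulting $\rho G/(\mathrm{const}\cdot d_0)$ into $\omega G$ using $\rho\ge d^2\ge 0.81\,d_0^2$. In the noise-dominated regime the same inequality bounds $G$ by a multiple of $\|\setA^*(\ve)\|_{2\rightarrow 2}^2$, which is what that part of $c$ is there to absorb.

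Two smaller points. First, for the $F$-part your route differs from the paper's. You split into a tangent-space RIP term, cross-user terms, and a remainder controlled by an operator-norm bound. Lemma~\ref{lem:local-regularity-F} instead applies the two local RIPs, Lemmas~\ref{lem:local-RIP} and~\ref{lem:local-Delh-Delx-RIP}, directly to the sums $\sum_n\setA_n(\cdot)$ over the neighborhoods. Both come from Theorem~\ref{thm:Mendelson} applied to the block-diagonal family, so no separate cross-talk estimate is needed. The $N^2$ in $Q$ comes from $\xi=\varepsilon/(50\sqrt{N}\kappa)$ together with the $N$-block covering entropy, not from cross-talk. Your route could work, but the cross-incoherence bound it relies on is left unproved.

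Second, summing the per-user bounds of Lemma~\ref{lem:local-regulaity-Del-norm-bounds} does not give $\|\Delta\vh\|_2^2+\|\Delta\vx\|_2^2\lesssim\delta^2 d_0$ with an absolute constant. One only has $\sum_n\delta_n^2 d_{n0}\le\delta^2 d_0^2/\min_n d_{n0}\le\sqrt{N}\kappa\,\delta^2 d_0$, and $\delta_0^2\sum_n d_{n0}\le\sqrt{N}\,\delta_0^2 d_0$. So the claim that an $N$- and $\kappa$-free $\omega=d_0/7000$ absorbs all constants needs this factor tracked; the paper is equally terse here.
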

\begin{proof}
	If the following two conditions, that are stated in Lemma \ref{lem:local-regularity-F}, and \ref{lem:local-regularity-G}, 
	\begin{align*}
	&\Re{\<\nabla \tilde{F}_{\vh},\Delta\vh\>+\<\nabla \tilde{F}_{\vx},\Delta \vx\>} \geq \frac{\delta^2d_0^2}{8}-2\delta\sqrt{N} d_0 \|\setA^*(\ve)\|_{2\rightarrow 2},\notag \\ &\text{and} \notag\\
    &\Re{\<\nabla \tilde{G}_{\vh},\Delta\vh\>+\<\nabla \tilde{G}_{\vx},\Delta\vx\>} \geq \frac{\delta}{5}\sqrt{\rho G_0(\vh,\vx)};
	\end{align*}
are satisfied then the proof simplifies to the proof of Lemma 6.12 in \cite{li2018rapid}.
\end{proof}

\begin{lem}\label{lem:local-regularity-F}
	For any $(\vh,\vx) \in \setN_{d_0} \cap \setN_{\mu} \cap \setN_{\nu}\cap\setN_{\varepsilon}$ with $\varepsilon \leq \frac{1}{15}$:
	\[
	\Re{\< \Delta \tilde{F}_{\vh},\Delta\vh\>+\<\Delta \tilde{F}_{\vx},\Delta\vx\>} \geq \frac{\delta^2d_0^2}{8}-2\delta\sqrt{N} d_0 \|\setA^*(\ve)\|_{2 \rightarrow 2},
	\]
	with probability at least
	\begin{align}\label{eq:probability-regularity}
	1-2\exp\left(-c \epsilon^2 \delta_t^2\frac{Q}{\mu^2\nu^2 \kappa^4}\right)
	\end{align}
	provided 
	\begin{align}\label{eq:sample-complexity-regularity}
	Q \geq  \frac{c\kappa^4 N^2}{\epsilon^2 \delta_t^2}(\mu^2 \nu^2_{\max} K+\nu^2 M)\log^4(LN).
	\end{align}
\end{lem}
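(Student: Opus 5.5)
We follow the template of the local regularity argument for $F$ in \cite{li2018rapid} (their Lemma 6.10/6.11 chain). The two structural inputs that change are the restricted isometry of $\setA_n$ on the tangent spaces, and the near-orthogonality of different $\setA_n$. Both have to be re-derived for the Rademacher-modulated deterministic subspaces $\hat{\vc}_{n\ell}$.

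\textbf{Step 1: the gradient pairing.} Write $\delta$ for the joint error $\|\mZ(\vh,\vx)-\mZ(\vh_0,\vx_0)\|_F/d_0$. Using \eqref{eq:gF-def} and the decomposition \eqref{eq:Deltah-Deltax}, expand
\[
\<\nabla F_{\vh_n},\Delta\vh_n\>+\<\nabla F_{\vx_n},\Delta\vx_n\>.
\]
Define the first-order part and the remainder by
\[
\mathcal{U}_n:=\Delta\vh_n\vx_n^*+\vh_n\Delta\vx_n^* ,\qquad \vh_n\vx_n^*-\vh_{n0}\vx_{n0}^* = \mathcal{U}_n - \Delta\vh_n\Delta\vx_n^* .
\]
The pairing then becomes
\[
\Re\<\setA(\mZ-\mZ_0)-\ve,\ \textstyle\sum_n \setA_n(\vh_n\vx_n^*-\vh_{n0}\vx_{n0}^*+\Delta\vh_n\Delta\vx_n^*)\>.
\]
This reduces the claim to three estimates.

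(a) A lower bound $\|\setA(\mZ-\mZ_0)\|_2^2\gtrsim \tfrac{2}{3}\delta^2 d_0^2$.

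(b) An upper bound on the cross term $\Re\<\setA(\mZ-\mZ_0),\sum_n\setA_n(\Delta\vh_n\Delta\vx_n^*)\>$. This uses the bounds on $\|\Delta\vh_n\|_2\|\Delta\vx_n\|_2$ and on the incoherence of $\Delta\vh_n,\Delta\vx_n$ from Lemma \ref{lem:local-regulaity-Del-norm-bounds}.

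(c) The noise term, which is bounded by
\[
|\<\setA^*(\ve),\cdot\>|\le \|\setA^*(\ve)\|_{2\to2}\sum_n\|\cdot\|_*\le 2\delta\sqrt{N}d_0\|\setA^*(\ve)\|_{2\to2}.
\]
The nuclear norms are rank at most $2$, so $\|\cdot\|_*\le\sqrt2\|\cdot\|_F$, and Cauchy--Schwarz over $n$ supplies the factor $\sqrt N$.

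\textbf{Step 2: probabilistic ingredients.} We need two statements, both with high probability over $\{\vr_n\}$.

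(i) A local RIP: for all $n$ and all $\mZ_n$ in the tangent space $T_n$ at $\vh_{n0}\vx_{n0}^*$,
\[
\|\setP_{T_n}\setA_n^*\setA_n\setP_{T_n}-\setP_{T_n}\|\le\xi .
\]

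(ii) Incoherence between users: $\|\setP_{T_n}\setA_n^*\setA_k\setP_{T_k}\|\le \xi/N$ for $n\ne k$. This is what lets the $N$-term sum behave like a block-diagonal isometry, so that (a) holds with the factor $\tfrac{2}{3}$ after absorbing $\sum_{n\ne k}$.

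Both (i) and (ii) are a second-order Rademacher chaos in $\vr_n$ (or a decoupled bilinear form in $(\vr_n,\vr_k)$). The plan is to write $\setA_n^*\setA_n-\mathcal I=\sum_{q\ne q'}r_n[q]r_n[q']\mathbf{X}_{qq'}$ with deterministic matrices built from $\vf_\ell$ and $\vc_{q,n}$. For (ii), we first decouple. We then apply a matrix Bernstein/chaos tail bound (Hanson--Wright type for operator norms, or the Krahmer--Mendelson--Rauhut suprema-of-chaos bound) restricted to the incoherent sets $\setN_\mu\cap\setN_\nu$. The variance parameters come out as $\mu^2\nu_{\max}^2K/Q$ and $\nu^2M/Q$, giving the sample complexity \eqref{eq:sample-complexity-regularity}. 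The $\log^4(LN)$ factor comes from a union bound over $N^2$ pairs and the chaos bound.

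Beyond (i)--(ii), we need a uniform bound on $\setA_n(\Delta\vh_n\Delta\vx_n^*)$ for arbitrary incoherent rank-one matrices, not just those in $T_n$. We obtain it from $\max_\ell|\vf_\ell^*\Delta\vh_n|\,|\hat{\vc}_{n\ell}^*\Delta\vx_n|$ together with the $\infty$-norm bounds $6\mu\sqrt{d_{n0}}$ and $6\nu\sqrt{d_{n0}}$ of Lemma \ref{lem:local-regulaity-Del-norm-bounds}. A Rademacher concentration for $\hat{\vc}_{n\ell}^*\vx$ uniformly over $\ell$ completes this bound.

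\textbf{Step 3: assembly.} Combining (a)--(c) as in \cite{li2018rapid}, the choice $\varepsilon\le 1/15$ with $\delta_0=\delta/10$ makes the quadratic remainder at most a small multiple of $\delta^2d_0^2$. This yields the bound $\delta^2d_0^2/8$ minus the noise term.

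\textbf{Main obstacle.} The hard part is (ii), together with the uniform (non-tangent) bound on $\setA_n(\Delta\vh_n\Delta\vx_n^*)$. With deterministic $\mC_n$ the only randomness is the signs, so the cross operators are genuine chaos whose expectation vanishes only after decoupling. Controlling them uniformly, while keeping the dependence on $N$ at $N^2$ rather than worse, is what forces the $\kappa^4N^2/(\varepsilon^2\delta_t^2)$ scaling. I would attempt it first by decoupling and then conditioning on $\vr_k$ to reduce to a linear Rademacher sum, applying matrix Bernstein in $\vr_n$.
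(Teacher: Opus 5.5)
\textbf{Verdict: genuine gap.} Your Step~1 identity $\vh_n\vx_n^*-\vh_{n0}\vx_{n0}^*=\mathcal U_n-\Delta\vh_n\Delta\vx_n^*$ and your noise estimate (c) are correct and agree with the paper. The gap is the control of components that do not lie in the tangent spaces.

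\textbf{Where the remainder is needed.} You need it twice.
\begin{itemize}
\item In (a): $\vh_n\vx_n^*-\vh_{n0}\vx_{n0}^*$ contains $\tilde{\vh}_n\tilde{\vx}_n^*\perp T_n$, so (i)--(ii) alone do not bound $\|\setA(\mZ-\mZ_0)\|_2$ from below.
\item In (b): $\Delta\vh_n\Delta\vx_n^*\notin T_n$.
\end{itemize}
Lemma~\ref{lem:local-regulaity-Del-norm-bounds} only gives $\|\Delta\vh_n\|_2\|\Delta\vx_n\|_2\le\sqrt{(\delta_n^2+\delta_0^2)/26}\,d_{n0}$, which is first order in $\delta$. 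Your assembly therefore closes only if $\|\sum_n\setA_n(\Delta\vh_n\Delta\vx_n^*)\|_2\le c\,\delta d_0$ with an absolute $c$ below about $0.66$, so that $\tfrac23-\sqrt{2/3}\,c\ge\tfrac18$.

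\textbf{Why your proposed device fails.}
\begin{itemize}
\item Pulling out $\max_\ell|\vf_\ell^*\Delta\vh_n|$ and using $\sum_\ell\hat{\vc}_{n\ell}\hat{\vc}_{n\ell}^*=L\mI_K$ gives only $\|\setA_n(\Delta\vh_n\Delta\vx_n^*)\|_2\le 6\mu\sqrt{d_{n0}}\|\Delta\vx_n\|_2$. This is of order $16\mu\,\delta d_{n0}$, far above the threshold.
\item Pulling out $\max_\ell|\hat{\vc}_{n\ell}^*\Delta\vx_n|$ instead costs a factor $\sqrt{\log L}$ even for a fixed vector. Worse, $\Delta\vx_n$ depends on the iterate, hence on the $\vr_n$, so you would need the bound uniformly over the incoherent set. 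There the only available bound is $\max_\ell\|\hat{\vc}_{n\ell}\|_2\lesssim\sqrt{K\log L}$.
\item Summing the $N$ remainders by the triangle inequality loses a further $\sqrt N$.
\item Your (ii) covers only tangent-space components, so it says nothing about $\<\setA_n(\Delta\vh_n\Delta\vx_n^*),\setA_k(\cdot)\>$ for $k\neq n$.
\end{itemize}
What is missing is an isometry-type bound that holds uniformly over the incoherent neighborhood itself, not over $T_n$, and for the full superposition $\sum_n\setA_n$.

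\textbf{How the paper gets this.} The paper proves exactly such a bound, which makes your (i)--(ii) unnecessary.
\begin{itemize}
\item It never passes through tangent spaces. It writes $\setA(\mZ-\mZ_0)=\sum_n(\mH_{\vh_n}\mX_{\vx_n}-\mH_{\vh_{n0}}\mX_{\vx_{n0}})\vr_n$ as a single chaos in the stacked Rademacher vector.
\item Because the $\vr_n$ are independent, the mean is exactly $\sum_n\|\vh_n\vx_n^*-\vh_{n0}\vx_{n0}^*\|_F^2$. Cross-user terms vanish in expectation, so no decoupling or pairwise union bound is needed.
\item It bounds the supremum over all of $\setN_{d_0}\cap\setN_\mu\cap\setN_\nu$ with the Krahmer--Mendelson--Rauhut theorem (Theorem~\ref{thm:Mendelson}).
\item Incoherence enters only through the operator-norm diameter $d_{2\to2}(\setG)$ and through Dudley's integral with covering numbers in the norms $\|\cdot\|_f$, $\|\cdot\|_c$. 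That is also the source of the $\log^4(LN)$ factor. It never enters through a pointwise maximum over $\ell$.
\item Running the same argument on the set of $\Delta\mH_{\vh_n}\mX_{\vx_n}+\mH_{\vh_n}\Delta\mX_{\vx_n}$ gives Lemma~\ref{lem:local-Delh-Delx-RIP}. There the $6\mu$, $6\nu$ bounds of Lemma~\ref{lem:local-regulaity-Del-norm-bounds} are used only for the diameter and the cover.
\end{itemize}

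The paper then combines the lower bounds on $\|\setA(\mZ-\mZ_0)\|_2$ and $\|\setA(\mZ(\Delta\vh,\vx)+\mZ(\vh,\Delta\vx))\|_2$ without comment. Turning them into an inner-product bound via $2\Re{\<a,b\>}=\|a\|_2^2+\|b\|_2^2-\|a-b\|_2^2$ also requires $\|\setA(\mZ(\Delta\vh,\Delta\vx))\|_2$ to be small. In that framework this follows from the same uniform chaos bound applied to the incoherent matrices $\Delta\vh_n\Delta\vx_n^*$.

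\textbf{Repair.} Replace the max-over-$\ell$ device with such a uniform chaos bound over the incoherent set. Once you have it, the tangent-space machinery is redundant.
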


 \textbf{Proof of Lemma \ref{lem:local-regularity-F}}\label{sec:local-regularity-F}

	Note that $\Re{\<\nabla \tilde{F}_{\vh}, \Delta \vh\> + \<\nabla \tilde{F}_{\vx}, \Delta \vx\>} = \Re{\<\nabla \tilde{F}_{\vh}, \Delta \vh\> + \overline{\<\nabla \tilde{F}_{\vx}, \Delta \vx\>}}$. By utilizing the gradients previously derived in the paper in section III, we obtain
	\begin{align*} 
	&\<\nabla \tilde{F}_{\vh}, \Delta \vh\> = \< \setA^*(\setA(\mZ(\vh,\vx)-\mZ(\vh_0,\vx_0))-\ve),\mZ( \Delta\vh,\vx)\>,\\ & \text{and} \  \\
	& \overline{\<\nabla \tilde{F}_{\vx}, \Delta \vx\>} = \< \setA^*(\setA(\mZ(\vh,\vx)-\mZ(\vh_0,\vx_0))-\ve), \mZ(\vh,\Delta \vx)\>,
	\end{align*}
	and hence
	\begin{align}\label{eq:local-reg-F-expansion}
	&\<\nabla \tilde{F}_{\vh}, \Delta \vh\> + \overline{\< \nabla \tilde{F}_{\vx}, \Delta \vx\>} = - \<\setA^*(\ve), \mZ(\Delta\vh, \vx)+\mZ(\vh,\Delta\vx)\> \notag\\
	&\qquad + \<\setA(\mZ(\vh,\vx)-\mZ(\vh_0,\vx_0)),\setA(\mZ(\Delta\vh,\vx)+\mZ(\vh,\Delta\vx))\>.
	\end{align}
	By setting $\xi = \frac{1}{4}$ and $\delta \leq \varepsilon$ in Lemma \ref{lem:local-RIP}, and applying Lemma \ref{lem:local-Delh-Delx-RIP} below, we obtain the following findings:

	\begin{align*}
	\|\setA(\mZ(\vh,\vx)-\mZ(\vh_0,\vx_0))\|_2 & \geq \sqrt{\tfrac{3}{4}} \|\mZ(\vh,\vx)-\mZ(\vh_0,\vx_0)\|_F \\ &= \sqrt{\tfrac{3}{4}} \delta d_0, ~ \text{and} ~ \\
	\|\setA(\mZ(\Delta \vh,\vx)+ \mZ(\vh,\Delta\vx))\|_2 & \geq \sqrt{\tfrac{3}{4}} \|\mZ(\Delta \vh,\vx)+\mZ(\vh,\Delta\vx)\|_{F},
	\end{align*}
	each hold with minimum probability \eqref{eq:probability-regularity} given the complexity bound \eqref{eq:sample-complexity-regularity} is satisfied.  Utilizing triangle inequality, $\|\mZ(\Delta \vh,\vx)+ \mZ(\vh,\Delta\vx)\|_F \geq \|\mZ(\vh,\vx)-\mZ(\vh_0,\vx_0)\|_F - \|\mZ(\Delta\vh,\Delta\vx)\|_F$. Using Lemma \ref{lem:local-regulaity-Del-norm-bounds} it is easy to show $\|\mZ(\Delta \vh,\Delta \vx) \|_F \leq 0.2 \delta d_0$ when $\delta \leq \varepsilon \leq  1/15$. This implies that $\|\mZ(\Delta \vh,\vx)+ \mZ(\vh,\Delta\vx)\|_F  \geq \delta d_0 - 0.2 \delta d_0 \geq 0.8 \delta d_0.$
	Similarly, it is easy to show that $\|\mZ(\Delta \vh,\vx)+ \mZ(\vh,\Delta\vx)\|_F \leq 1.2 \delta d_0$. Additionally, we also have 
	\begin{align*}
	\<\setA^*(\ve), \mZ(\Delta\vh,\vx)&+\mZ(\vh,\Delta\vx)\> \\& \leq \|\setA^*(\ve)\|_{2 \rightarrow 2}\|\mZ(\Delta\vh,\vx)+ \mZ(\vh,\Delta\vx)\|_*\\
&\leq \sqrt{2N} \|\setA^*(\ve)\|_{2 \rightarrow 2}\|\mZ(\Delta\vh,\vx)+\mZ(\vh,\Delta\vx)\|_F\\
	& \leq  2\delta\sqrt{N} d_0 \|\setA^*(\ve)\|_{2\rightarrow 2}. 
	\end{align*}
	We acquire the intended bound by using the above bounds in \eqref{eq:local-reg-F-expansion}.

\begin{lem}\label{lem:local-regularity-G}
	Given any $(\vh,\vx) \in \setN_{\mu} \cap \setN_{\nu} \cap \setN_{d_0} \cap \setN_{\varepsilon}$ with $\varepsilon \leq 1/15$,  $0.9d_{n0} \leq d_n \leq 1.1d_{n0}$, and $0.9d_0 \leq d \leq 1.1d_0$, the next inequality applies uniformly $\Re{\<\nabla \tilde{G}_{\vh_n}, \Delta \vh_n \>+\<\nabla \tilde{G}_{\vx_n}, \Delta \vx_n \>}  \geq \frac{\delta}{5} \sqrt{\rho G_{n}(\vh_n,\vx_n)},$ where $\rho \geq d^2+ 2\|\ve\|_2^2$. Straight away, we can write $\Re{\<\nabla \tilde{G}_{\vh}, \Delta \vh \>+\<\nabla \tilde{G}_{\vx}, \Delta \vx \>}= \sum_{n=1}^N (\Re{\<\nabla \tilde{G}_{\vh_n}, \Delta \vh_n \>+\<\nabla \tilde{G}_{\vx_n}, \Delta \vx_n \>} )  \geq \frac{\delta}{5} \sqrt{\rho \tilde{G}(\vh,\vx)},$
\end{lem}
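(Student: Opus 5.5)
The plan is to follow the proof of Lemma 6.12/5.13 in \cite{li2018rapid} (and its single-channel version in \cite{ahmed2018ModBD}) block by block. Since $G=\sum_n G_n$ and $G_n$ depends only on $(\vh_n,\vx_n)$, the inner product splits as a sum over $n$. It therefore suffices to prove the per-block inequality $\Re{\<\nabla G_{\vh_n},\Delta\vh_n\>+\<\nabla G_{\vx_n},\Delta\vx_n\>}\geq \frac{\delta}{5}\sqrt{\rho G_n(\vh_n,\vx_n)}$.

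The aggregate statement then follows by summing over $n$. For this we use $\sum_n\sqrt{G_n}\geq\sqrt{\sum_n G_n}$, which holds since all $G_n\ge 0$. We must also make sure the $\delta$ used for each block is the global one. Each block is handled with its own $\delta_n\le\varepsilon$ and the common $\delta_0=\delta/10$, so we will either work with $\max_n\delta_n$ or note that the per-block bounds of Lemma \ref{lem:local-regulaity-Del-norm-bounds} only require $\delta_n\le\varepsilon$ and $\delta_0\le\varepsilon/10$.

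For a fixed $n$, $\nabla G_{\vh_n}$ and $\nabla G_{\vx_n}$ in \eqref{eq:ghG-def} are sums of four kinds of terms: the norm penalty on $\vh_n$, the norm penalty on $\vx_n$, the $L$ incoherence penalties $\vf_\ell^*\vh_n$, and the $Q$ incoherence penalties $\vc_{q,n}^*\vx_n$. We show each corresponding inner product is nonnegative. Moreover, whenever a penalty is active (argument of $G_0$ larger than 1), the inner product is at least a constant times $\delta\sqrt{d_n}$ times $\sqrt{G_0(\cdot)}$.

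For the norm terms we use the two cases defining $\alpha_n$ in \eqref{eq:Deltah-Deltax}.

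\emph{Case $\|\vh_n\|\ge\|\vx_n\|$.} An active $\vh$-norm penalty means $\|\vh_n\|_2^2>2d_n\ge 1.8d_{n0}$. Then $\Re{\<\vh_n,\Delta\vh_n\>}=\|\vh_n\|^2-(1-\delta_0)|\alpha_{n1}|^2 d_{n0}\ge \delta_0\|\vh_n\|^2$, using $|\alpha_{n1}|^2 d_{n0}\le\|\vh_n\|^2$. For the $\vx$-norm penalty we use that $\|\vx_n\|^2\le\|\vh_n\|\|\vx_n\|\le(1+\delta_n)d_{n0}<2d_n$, so this penalty is inactive.

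\emph{Case $\|\vh_n\|<\|\vx_n\|$.} This is symmetric, with $\Delta\vx_n=\vx_n-(1-\delta_0)\alpha_{n2}\vx_{n0}$.

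For the incoherence term in $\vh_n$, activity means $\sqrt{L}|\vf_\ell^*\vh_n|>2\sqrt{2d_n}\mu$. Write $\Re{(\vf_\ell^*\vh_n)\overline{\vf_\ell^*\Delta\vh_n}}\geq|\vf_\ell^*\vh_n|(|\vf_\ell^*\vh_n|-|\alpha_n||\vf_\ell^*\vh_{n0}|)$ and use $|\alpha_n|\le 2$ together with $\sqrt{L}|\vf_\ell^*\vh_{n0}|\le\mu\sqrt{d_{n0}}$. This gives a positive lower bound proportional to $|\vf_\ell^*\vh_n|^2$ (the $\frac{1}{8}$ scaling in $G_n$ is chosen exactly for this).

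The new piece relative to \cite{li2018rapid} is the $\vc_{q,n}$ incoherence term. It is handled identically, now using $\sqrt{Q}|\vc_{q,n}^*\vx_{n0}|\le\nu\sqrt{d_{n0}}$ from \eqref{eq:muh-nux} and $|\bar\alpha_n^{-1}|\le 2$, which was established in the proof of Lemma \ref{lem:local-regulaity-Del-norm-bounds}.

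Finally, each active term has the form $\frac{\rho}{2d_n}G_0'(z)\cdot(\text{something}\ge c\,\delta_0\, d_n z)$ with $G_0'(z)=2\sqrt{G_0(z)}$. Summing these gives $\ge c\rho\delta\sum\sqrt{G_0(z)}\ge c\,\delta\sqrt{\rho}\sqrt{\rho\sum G_0}$. Since $\rho\ge d^2\ge 0.81 d_0^2$ is large, the constant can be absorbed to give $\frac{\delta}{5}\sqrt{\rho G_n}$.

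The main obstacle is the bookkeeping of constants. The norm-penalty lower bound only produces a factor $\delta_0=\delta/10$. To convert $\rho\,\delta_0\sqrt{G_0}$ into $\frac{\delta}{5}\sqrt{\rho G_n}$ we must use $\sqrt{\rho}\ge d\ge 0.9d_0$ and $d_n\le 1.1 d_{n0}$. In the multi-user setting the scaling $1/d_n$ is per block while $\rho$ is global, so the ratio $d_0/d_n$ (bounded via $\kappa$ or simply $\ge 1$) must be checked to work in the favorable direction. I would first verify this inequality on the norm term, since the incoherence terms give larger constants.
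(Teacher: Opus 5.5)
Your route is the paper's: reduce to one block $n$, split on $\|\vh_n\|_2\ge\|\vx_n\|_2$ versus $\|\vh_n\|_2<\|\vx_n\|_2$, and show $\Re{\<\vh_n,\Delta\vh_n\>}\ge\delta_0\|\vh_n\|_2^2$ (resp.\ the $\vx_n$ analogue) while the other norm penalty is inactive. Then lower-bound the active incoherence inner products via $|\alpha_n|\le 2$ and $|\bar\alpha_n^{-1}|\le 2$, substitute into the expanded gradient inner product, and sum over $n$ with $\sum_n\sqrt{G_n}\ge(\sum_n G_n)^{1/2}$. Your per-block estimates are all correct.

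\textbf{The gap is in your last step.} What you obtain there is already of the form $c\,\delta\sqrt{\rho G_n}$. Both sides of the target inequality are linear in $\rho$: $G_n$ carries the factor $\rho$, so $\sqrt{\rho G_n}=\rho\,(\sum G_0(\cdot))^{1/2}$. Hence a large $\rho$ cannot absorb any constant. You must show $c\ge 1/5$ directly, and you leave this unchecked. The $d_0/d_n$ ratio you flag as the main obstacle also never arises, since $d_0$ does not appear in $G_n$ or its gradient.

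\textbf{The fix: the constant comes out exactly.} For the active $\vh_n$-norm penalty in Case 1, take $z=\|\vh_n\|_2^2/(2d_n)>1$ and use $G_0'(z)=2\sqrt{G_0(z)}$:
\[
\frac{\rho}{2d_n}\,G_0'(z)\,\Re{\<\vh_n,\Delta\vh_n\>}\;\ge\;\frac{\rho}{2d_n}\cdot 2\sqrt{G_0(z)}\cdot\frac{\delta}{10}\cdot 2d_n\;=\;\frac{\delta\rho}{5}\sqrt{G_0(z)}.
\]
This works because the activation threshold $\|\vh_n\|_2^2>2d_n$ uses the same $d_n$ as the prefactor $\rho/(2d_n)$, so $d_n$ cancels. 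The $\vx_n$-norm term in Case 2 is identical.

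For an active incoherence term, your estimate gives $\Re{\<\vf_\ell\vf_\ell^*\vh_n,\Delta\vh_n\>}\ge d_n\mu^2/L$. The term is then at least $\frac{\rho}{2d_n}\cdot\frac{L}{4\mu^2}\cdot\frac{d_n\mu^2}{L}\cdot 2\sqrt{G_0(z)}=\frac{\rho}{4}\sqrt{G_0(z)}\ge\frac{\delta\rho}{5}\sqrt{G_0(z)}$. The $\vc_{q,n}$ terms work the same way with $d_n\nu^2/Q$.

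Summing and using $\sum_i\sqrt{a_i}\ge(\sum_i a_i)^{1/2}$ gives $\frac{\delta\rho}{5}(\sum G_0(\cdot))^{1/2}=\frac{\delta}{5}\sqrt{\rho G_n}$. In particular, the hypothesis $\rho\ge d^2+2\|\ve\|_2^2$ plays no role in this lemma. It matters only when this bound is combined with Lemma \ref{lem:local-regularity-F}.
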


\textbf{Proof of Lemma \ref{lem:local-regularity-G}}\label{sec:local-regularity-G}

We know that $
	\Re{\<\nabla \tilde{G}_{\vh}, \Delta \vh \>+\<\nabla \tilde{G}_{\vx}, \Delta \vx \>} = \sum_{n=1}^N \Re{\<\nabla \tilde{G}_{\vh_n}, \Delta \vh_n \>+\<\nabla \tilde{G}_{\vx_n}, \Delta \vx_n \>} \geq \frac{\delta}{5} \sqrt{\rho G_0(\vh,\vx)}$, where each $G_n (\vh_n,\vx_n)$ is dependent on $(\vh_n,\vx_n)$ so it becomes the proof of Lemma 5.17 in \cite{li2016rapid}.

\textbf{Case 1}: $\|\vh_n\|_2 \geq \|\vx_n\|_2$, and $\alpha_n = (1-\delta_0) \alpha_{n1}.$ Given that $\delta_n \leq \varepsilon \leq 1/15$, we can derive the following identities, which can be easily verified (as shown in Lemma 5.18 of \cite{li2016rapid}): $\<\vh_n, \Delta \vh_n\> \geq \delta_0 \|\vh_n\|_2^2$, and $\|\vx_n\|_2^2 < 2d_n$. Additionally, we have 
\begin{align}\label{eq:inner-prod-bound-case1}
&\Re{\<\vf_{\ell}\vf_{\ell}^*\vh_n,\Delta \vh_n\>} \geq \frac{2d_n\mu^2}{L}   \ \text{when}  \ L \frac{|\vf_{\ell}^*\vh_n|^2}{8d_n\mu^2} > 1, \notag\\
& \Re{\<\vc_{q,n}\vc_{q,n}^*\vx_n,\Delta \vx_n\>} \geq \frac{d_n\nu^2}{Q} \ \text{when} \ Q \frac{|\vc_{q,n}^*\vx_n|^2}{8d_n\nu^2} > 1.
\end{align}
For instance, the last identity is easily proven as 
\begin{align*}
&\Re{\<\vc_{q,n}\vc_{q,n}^*\vx_n, \vx_n-\bar{\alpha}_n^{-1}\vx_{n0}\>}\\
&\geq |\vc_{q,n}^*\vx_n|^2 - \frac{1}{(1-\delta_0)|\alpha_{n1}|}|\vc_{q,n}^*\vx_n||\vc_{q,n}^*\vx_{n0}|\\
&= |\vc_{q,n}^*\vx_n|^2 - \frac{|\alpha_{n2}|}{(1-\delta_0)|\alpha_{n1}\bar{\alpha}_{n2}|}|\vc_{q,n}^*\vx_n||\vc_{q,n}^*\vx_{n0}|.
\end{align*}
By applying Lemma \ref{lem:local-regulaity-Del-norm-bounds}, we obtain the following: $|\alpha_{n2}| \leq 2$, $|\alpha_{n1}\bar{\alpha}_{n2} - 1| \leq \delta_n$, and the condition that $(\vh_n, \vx_n) \in \setN_\mu \cap \setN_\nu \cap \setN_{d_0}$, we also acquire 
\begin{align*}
&\text{Re}(\<\vc_{q,n}\vc_{q,n}^*\vx_n, \vx_n-\bar{\alpha}_n^{-1}\vx_{n0}\>)\\
&\geq |\vc_{q,n}^*\vx_n|^2  - \frac{2}{(1-\delta_n)(1-\delta_0)}|\vc_{q,n}^*\vx_n||\vc_{q,n}^*\vx_{n0}|\\
& \geq \frac{8d_n\nu^2}{Q} - \frac{2}{(1-\delta_n)(1-\delta_0)}\sqrt{\frac{8d_n\nu^2}{Q}}\cdot\sqrt{\frac{10d_n\nu^2 }{9Q}}\geq \frac{d_n\nu^2}{Q},
\end{align*}
where $|\vc_{q,n}^*\vx_{n0}| = \tfrac{\nu \sqrt{d_{n0}}}{\sqrt{Q}}$, and $0.9d_{n0} \leq d_n \leq 1.1 d_{n0}$ is used to derive the final inequality.

\textbf{Case 2}: $\|\vh_n\|_2 < \|\vx_n\|_2$, $\alpha_n = \tfrac{1}{(1-\delta_0)\bar{\alpha}_{n2}}$. For $\delta \leq \varepsilon \leq 1/15$, it can be demonstrated (see Lemma 5.17 in \cite{li2016rapid}) that
 $\<\vx_n,\Delta \vx_n\> \geq \delta_0 \|\vx_n\|_2^2, ~ \|\vh_n\|_2^2 < 2d_n$, and also 
\begin{align}\label{eq:inner-prod-bound-case2}
&\text{Re}(\<\vf_{\ell}\vf_{\ell}^*\vh_n,\Delta \vh_n\>) \geq \frac{d_n\mu^2}{L}  \ \text{when} \ L \frac{|\vf_{\ell}^*\vh_n|^2}{8d_n\mu^2} > 1,\notag\\
 &\Re{\<\vc_{q,n}\vc_{q,n}^*\vx_n,\Delta \vx_n\>}\geq \frac{2d_n\nu^2}{Q} \ \text{when} \ Q\frac{ |\vc_{q,n}^*\vx_n|^2}{8d_n\nu^2} > 1.
\end{align}
Expanding gradients make it clear that 
\begin{align}\label{eq:gGh-gGx-inner-prod}
&\Re{\<\nabla \tilde{G}_{\vh_n}, \Delta \vh_n \> + \<\nabla \tilde{G}_{\vx_n}, \Delta \vx_n \>} = \notag \\& \frac{\rho}{d_n} \Bigg( G_0^\prime\Big(\frac{\|\vh_n\|_2^2}{2d_n}\Big) \Re{\< \vh_n,\Delta \vh_n\>}+ G_0^\prime\Big(\frac{\|\vx_n\|_2^2}{2d_n}\Big) \Re{\< \vx_n,\Delta \vx_n\>}  \notag\\
&+ G_0^\prime \Big( \frac{L|\vf_{\ell}^*\vh_n|^2}{8d_n\mu^2} \Big) \frac{L}{4\mu^2} \Re{\<\vf_{\ell}\vf_{\ell}^*\vh_n, \Delta \vh_n\>}\notag\\
& + G_0^\prime \Big( \frac{Q|\vc_{q,n}^*\vx_n|^2}{8d_n\nu^2} \Big)\frac{Q}{4\nu^2} \Re{\<\vc_{q,n}\vc_{q,n}^*\vx_n, \Delta \vx_n\>} \Bigg). 
\end{align}
Now we figure out the subsequent inequality for the two aforementioned cases,
\begin{align*}
G_0^\prime \left(\frac{\|\vh_n\|_2^2}{2d_n}\right) \< \vh_n,\Delta \vh_n\> \geq \frac{\delta d_n}{5}G_0^\prime \left(\frac{\|\vh_n\|_2^2}{2d_n}\right).
\end{align*}
To demonstrate this, observe that it is obviously true when $\|\vh_n\|_2^2 < 2d_n$. In the opposite case, where $\|\vh_n\|_2^2 \geq 2d_n$, Case-2 cannot occur, and in Case-1 we have $\<\vh_n, \Delta\vh_n\> \geq \delta_0 \|\vh_n\|_2^2$. Thus, $\<\vh_n, \Delta\vh_n\> \geq \delta d_n / 5$ confirms that the inequality is valid. Likewise, we can also deduce that
\begin{align*}
 ~ G_0^\prime \left(\frac{\|\vx_n\|_2^2}{2d_n}\right) \< \vx_n,\Delta \vx_n\> \geq \frac{\delta d_n}{5}G_0^\prime \left(\frac{\|\vx_n\|_2^2}{2d_n}\right).
\end{align*}
Moreover, the following inequalities 
\begin{align}
 G_0^\prime \left( \frac{L|\vf_{\ell}^*\vh_n|^2}{8d_n\mu^2} \right) \frac{L}{4\mu^2} \text{Re}(\<\vf_{\ell}\vf_{\ell}^*\vh_n, \Delta \vh_n\> )\geq \frac{d_n}{4} G_0^\prime \left( \frac{L|\vf_{\ell}^*\vh_n|^2}{8d_n\mu^2} \right), \label{eq:G0-h-coherence-bound}\\
G_0^\prime \left( \frac{Q|\vc_{q,n}^*\vx_n|^2}{8d_n\nu^2} \right) \frac{Q}{4\nu^2} \text{Re}(\<\vc_{q,n}\vc_{q,n}^*\vx_n, \Delta \vx_n\> ) \geq \frac{d_n}{4} G_0^\prime \left( \frac{Q|\vc_{q,n}^*\vx_n|^2}{8d_n\nu^2} \right)\label{eq:G0-x-coherence-bound}
\end{align}
holds. To illustrate this point, observe that both inequalities hold trivially when $Q|\vc_{q,n}^*\vx_n|^2 > 8d_n\nu^2$ and $L |\vf_\ell^*\vh_n|^2 > 8d_n\mu^2$. Conversely, if these conditions are not met, using the bounds from \eqref{eq:inner-prod-bound-case1} and \eqref{eq:inner-prod-bound-case2}, we can deduce that \eqref{eq:G0-h-coherence-bound} and \eqref{eq:G0-x-coherence-bound} are satisfied in Cases 1 and 2. Substituting these outcomes into \eqref{eq:gGh-gGx-inner-prod} establishes the lemma.


\subsection{Local smoothness}\label{sec:local_smoothness}
In Lemma \ref{lem:main-lemma}, step size $\eta$ depends on the constant $C_L$. Following lemma qualify the $C_L$.
\begin{lem}\label{lem:smoothness-CL}
Considering any $\vw := (\vh, \vx)$ and $\vz := (\vu, \mathbf{v})$  in order to have $\vw$ and $\vw + \vz$ elements of $\setN_{\varepsilon} \cap \setN_{F}$, we have
\begin{align*}
&\|\nabla F(\vw + \vz) - \nabla F(\vw)\|_2 \leq C_L \|\vz\|_2 \quad \text{with} \\
& C_L \leq \sqrt{2}d_0 \left[ 10 \|\setA\|_{2 \rightarrow 2}^2 + \frac{\rho}{\min_n d_n^2} \left(5 + \frac{3L}{\mu^2} + \frac{3Q}{2\nu^2}\right) \right],
\end{align*}
where $\rho \geq d^2 + 2\|\ve\|_2^2$, and $\|\setA\|_{2 \rightarrow 2} \leq c_\alpha \sqrt{NK \log (L)}$ with minimum probability $1 - \setO(L^{-\alpha})$. Specifically, 
$Q = \setO(N^2(\mu^2 \nu^2_{\max} K + \nu^2 M))\log^4(LN)$, and $\|\ve\|^2 = \setO(\sigma^2 d_0^2)$. 
Consequently, $C_L$ can be reduced to
\begin{align}\label{eq:CL-def}
C_L = \setO\left(d_0 N^2(1 + \sigma^2) \big(\mu^2 \nu^2_{\max} K + \nu^2 M\big) \log^4(LN)\right)
\end{align}
by selecting $\rho \approx d^2 + 2 \|\ve\|_2^2$.

\end{lem}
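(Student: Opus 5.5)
We split $\nabla\tilde F=\nabla F+\nabla G$ and bound the Lipschitz constant of each piece separately, following the template of the smoothness lemma in \cite{li2018rapid} (Lemma 5.19 there) but with the block structure $\mZ(\vh,\vx)=(\vh_n\vx_n^*)^{\otimes N}$ and the deterministic-subspace operator $\setA$.

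\textbf{The $\nabla F$ part.} Write $\vw=(\vh,\vx)$ and $\vz=(\vu,\mathbf v)$. By \eqref{eq:gF-def},
\[
\nabla F_{\vh_n}(\vw+\vz)-\nabla F_{\vh_n}(\vw)=\setA_n^*\big(\setA(\mZ(\vh+\vu,\vx+\mathbf v)-\mZ(\vh,\vx))\big)(\vx_n+\mathbf v_n)+\setA_n^*\big(\setA(\mZ(\vh,\vx)-\mZ(\vh_0,\vx_0))-\ve\big)\mathbf v_n .
\]
The $\nabla F_{\vx_n}$ difference is handled symmetrically. We then use three ingredients.
\begin{itemize}
\item Since $\vw,\vw+\vz\in\setN_F$ (and hence, via Lemma \ref{lem:main-lemma}'s setting, in $\setN_{d_0}$), we have $\|\vx_n+\mathbf v_n\|_2,\|\vh_n\|_2\le 2\sqrt{d_{n0}}$.
\item The identity
\[
\mZ(\vh+\vu,\vx+\mathbf v)-\mZ(\vh,\vx)=\mZ(\vu,\vx+\mathbf v)+\mZ(\vh,\mathbf v)
\]
gives a Frobenius norm at most $2\sqrt{d_0}\,\|\vz\|_2$ up to constants.
\item Membership in $\setN_F$ gives $\|\setA(\mZ(\vh,\vx)-\mZ(\vh_0,\vx_0))-\ve\|_2\le\sqrt{F}\lesssim \varepsilon d_0+\|\ve\|_2$, with the $\sqrt{2}$ coming from $\tfrac13\varepsilon^2d_0^2/(N\kappa^2)+\|\ve\|_2^2$.
\end{itemize}
Bounding $\|\setA_n^*\|\le\|\setA\|_{2\to2}$ and summing the $N$ blocks in $\ell_2$ yields the $10\sqrt2\,d_0\|\setA\|_{2\to2}^2$ term. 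The noise and residual contribution is absorbed there because $\|\ve\|_2\lesssim d_0\|\setA\|$ is dominated, as in \cite{li2018rapid}.

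\textbf{The $\nabla G$ part.} By \eqref{eq:ghG-def}, each $\nabla G_{\vh_n}$ is a sum of terms $\frac{\rho}{2d_n}G_0'(\phi(\vh_n))\,\mathbf{M}\vh_n$ with $\mathbf M\in\{\mI,\frac{L}{4\mu^2}\vf_\ell\vf_\ell^*\}$, and $\nabla G_{\vx_n}$ is analogous with $\frac{Q}{4\nu^2}\vc_{q,n}\vc_{q,n}^*$. The map $\vh\mapsto G_0'(\|\vh\|^2/2d)\vh$ has Jacobian bounded by $\frac{1}{d}\big(2(\tfrac{\|\vh\|^2}{2d}-1)_++\tfrac{\|\vh\|^2}{d}\big)$ on balls of radius $2\sqrt{d_{n0}}$. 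This gives the constant $5$ after using $0.9d_{n0}\le d_n\le 1.1d_{n0}$. For the coherence terms, $G_0'$ is active only on few indices. We bound the Lipschitz constant by $\frac{L}{4\mu^2}\max_\ell|\cdot|$ times the sum $\sum_\ell\vf_\ell\vf_\ell^*=\mF_M^*\mF_M=\mI$ (and $\sum_q\vc_{q,n}\vc_{q,n}^*=\mC_n^*\mC_n=\mI$ for the tall orthonormal basis). This produces $3L/\mu^2$ and $3Q/(2\nu^2)$ respectively, times $\rho/\min_n d_n^2$ and the $\sqrt2 d_0$ prefactor from $\|\vh_n\|,\|\vx_n\|\le 2\sqrt{d_{n0}}$.

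\textbf{Remaining steps and main obstacle.} The bound $\|\setA\|_{2\to2}\le c_\alpha\sqrt{NK\log L}$ is the main obstacle, since $\setA$ is structured (partial DFT times Rademacher diagonal times deterministic $\mC_n$) rather than Gaussian. We write $\setA^*(\vy)=\sum_n\sum_\ell y_\ell\vf_\ell\hat\vc_{n\ell}^*$. Since $\|\setA\|_{2\to2}^2=\|\setA\setA^*\|$, it suffices to bound the $L\times L$ matrix with entries $\sum_n(\vf_\ell^*\vf_{\ell'})(\hat\vc_{n\ell'}^*\hat\vc_{n\ell})$. Its diagonal is at most $\frac{M}{L}\sum_n\|\hat\vc_{n\ell}\|^2$, and $\|\hat\vc_{n\ell}\|^2$ concentrates around $K$ by Hanson--Wright/Rademacher chaos over $\vr_n$, giving $K\log L$ with probability $1-L^{-\alpha}$ after a union bound. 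The off-diagonal part we would control with matrix Bernstein over the independent $\vr_n$, following \cite{ahmed2018ModBD}, which treats exactly the $N=1$ case. The $N$ independent blocks then contribute the factor $N$.

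Finally, we substitute $\rho\approx d^2+2\|\ve\|^2=\setO(d_0^2(1+\sigma^2))$ and $L\gtrsim Q$, together with the sample complexity $Q=\setO(N^2(\mu^2\nu_{\max}^2K+\nu^2M)\log^4(LN))$, to obtain \eqref{eq:CL-def}. The $L/\mu^2$ term dominates, with $\mu^2\ge1$.
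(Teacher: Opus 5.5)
The gap is in the step you yourself flag as the main obstacle, $\|\setA\|_{2\to2}\lesssim\sqrt{NK\log L}$. Passing to the $L\times L$ matrix $\setA\setA^*$ puts you on the hard side. Its diagonal $\tfrac{M}{L}\sum_n\|\hat{\vc}_{n\ell}\|_2^2$ does not control its norm, so everything rests on the off-diagonal estimate, which you leave open. The tool you name does not deliver it. Because $\hat{\vc}_{n\ell}=\sqrt{L}\sum_q f_\ell[q]r_n[q]\vc_{q,n}$ is linear in $\vr_n$, each block $\setA_n\setA_n^*$ is a matrix-valued second-order Rademacher chaos in the entries of $\vr_n$. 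It is not a sum of independent random matrices, so matrix Bernstein does not apply without a decoupling argument. Independence across $n$ is beside the point: the triangle inequality already handles the sum over blocks, and the difficulty sits inside a single block (as the case $N=1$ shows).

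The step is elementary on the $\setA^*\setA$ side, and this is the reduction behind the paper's Lemma~\ref{lem:noise-stability}. The paper phrases it through an orthogonality of the rank-one pieces $\vf_\ell\hat{\vc}_{\ell,n}^*$ that fails for $M<L$; the Cauchy--Schwarz form below is what is actually true. Using $\sum_\ell\vf_\ell\vf_\ell^*=\mF_M^*\mF_M=\mI$,
\begin{align*}
\|\setA(\mZ)\|_2^2=\sum_{\ell}\Big|\sum_{n}\vf_\ell^*\mZ_n\hat{\vc}_{n\ell}\Big|^2\leq\Big(\max_{\ell}\sum_{n}\|\hat{\vc}_{n\ell}\|_2^2\Big)\sum_{n}\|\mF_M\mZ_n\|_F^2\leq N\max_{\ell,n}\|\hat{\vc}_{n\ell}\|_2^2\,\|\mZ\|_F^2
\end{align*}
for block-diagonal $\mZ$ with blocks $\mZ_n$. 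So only the linear Rademacher sums $\hat{\vc}_{n\ell}$ matter. The bound you already sketched for them finishes the step, together with a union bound over $(\ell,n)$. That bound can come from Hanson--Wright, or from Proposition~\ref{prop:conc_ineq} with variance at most $K+1$, since $\sqrt{L}|f_\ell[q]|=1$ and $\mC_n^*\mC_n=\mI$.

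Outside this step your argument follows the paper's. The paper uses the mirror split of the $\nabla F$ difference (increment times $\vx$, residual at $\vw+\vz$ times $\mathbf{v}$). It obtains the same $4d_0\|\setA\|_{2\to2}^2(\|\vu\|_2+\|\mathbf{v}\|_2)$ bound on the increment. For $\nabla G$ it imports per-block bounds from \cite{ahmed2018ModBD}; your direct estimate via $\mF_M^*\mF_M=\mI$ and $\mC_n^*\mC_n=\mI$ is a more self-contained derivation of the $L/\mu^2$ and $Q/\nu^2$ terms. Three smaller fixes are needed.

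(i) Your $\nabla G$ estimate needs both endpoints, hence by convexity the segment joining them, in $\setN_\mu\cap\setN_\nu$. This keeps the arguments of $G_0'$ below $2d_{n0}/d_n$. You must use the inclusion $\setN_F\subset\setN_{d_0}\cap\setN_\mu\cap\setN_\nu$, not only $\setN_{d_0}$.

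(ii) Bounding the noise part of the residual by $\|\setA\|_{2\to2}\|\ve\|_2$ gives a term of order $\sigma d_0\|\setA\|_{2\to2}$. This fits under $10\sqrt{2}\,d_0\|\setA\|_{2\to2}^2$ only under an unstated restriction on $\sigma$. It is harmless for the $\setO(\cdot)$ form, which carries $1+\sigma^2$. The paper instead controls $\setA(\mZ(\vh+\vu,\vx+\mathbf{v})-\mZ(\vh_0,\vx_0))$ through $\setN_\varepsilon$, and the noise through $\|\setA^*(\ve)\|_{2\to2}\leq\varepsilon d_0$ from Lemma~\ref{lem:noise-stability}.

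(iii) In the final substitution you need $L\lesssim Q$ (so $L\asymp Q$, given $L\geq Q$), not $L\gtrsim Q$, because $C_L$ grows with $L$.
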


\textbf{Proof of Lemma \ref{lem:smoothness-CL}}\label{sec:smoothness-CL}

Considering any $\vw= (\vh,\vx), \vz= (\vu,\mathbf{v})$, given the lemma below, and $\ \vw+ \vz= (\vh+\vu, \vx+\mathbf{v}) \in \setN_{F} \cap \setN_{\varepsilon}$, yields $\vw+\vz\in \setN_{d_0}\cap \setN_{\mu}\cap \setN_{\nu}$.  
	\begin{lem}
		Under local-RIP lemma \ref{lem:local-RIP} and noise robustness lemma \ref{lem:noise-stability}, there holds $\setN_{F} \subset \setN_{d_0} \cap \setN_{\mu}\cap \setN_{\nu}$. 
	\end{lem}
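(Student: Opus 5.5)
The plan follows the argument of Lemma 5.3 / Lemma 6.13 in \cite{li2018rapid}. Fix $(\vh,\vx)\in\setN_F$ and show separately that it lies in $\setN_{d_0}$, in $\setN_\mu$, and in $\setN_\nu$. The only probabilistic inputs are local-RIP (Lemma \ref{lem:local-RIP}), which gives $\|\setA(\mZ)\|_2^2\geq \tfrac{3}{4}\|\mZ\|_F^2$ on the relevant set of differences, and noise robustness (Lemma \ref{lem:noise-stability}), which gives $\|\setA^*(\ve)\|_{2\rightarrow2}\leq \frac{2\varepsilon}{50N\kappa}d_0$.

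\textbf{Step 1: the coherence and norm constraints, via the penalty $G$.} In the gradient-descent setting, membership in $\setN_F$ comes together with $\tf(\vh,\vx)\leq \frac{1}{3N\kappa^2}\varepsilon^2d_0^2+\|\ve\|_2^2$ (Lemma \ref{lem:main-lemma} tracks $\tf$). Since $F\geq 0$ up to the noise cross term, this gives $G(\vh,\vx)\leq \frac{\varepsilon^2 d_0^2}{3N\kappa^2}+\|\ve\|_2^2$. Each summand of $G_n$ is nonnegative, so each is bounded by the same quantity. Using $\rho\geq d^2+2\|\ve\|_2^2$ and $d\geq 0.9d_0$, we get
\[
G_0\!\left(\tfrac{Q|\vc_{q,n}^*\vx_n|^2}{8d_n\nu^2}\right)\leq \tfrac{1}{\rho}\Big(\tfrac{\varepsilon^2d_0^2}{3N\kappa^2}+\|\ve\|_2^2\Big)\leq \tfrac{1}{2},
\]
hence $\frac{Q|\vc_{q,n}^*\vx_n|^2}{8d_n\nu^2}\leq 1+1/\sqrt2$. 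With $d_n\leq 1.1d_{n0}$ this yields $\sqrt Q|\vc_{q,n}^*\vx_n|\leq \sqrt{8\cdot1.1\cdot1.71}\,\nu\sqrt{d_{n0}}<4\nu\sqrt{d_{n0}}$, which is exactly $\setN_\nu$. The terms in $\vf_\ell^*\vh_n$ give $\setN_\mu$ in the same way, and the norm terms $G_0(\|\vh_n\|_2^2/2d_n)\leq 1/2$ give $\|\vh_n\|_2^2\leq 2(1.71)(1.1)d_{n0}<4d_{n0}$, which is $\setN_{d_0}$.

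\textbf{Step 1, alternative route via the data term.} If the set is read literally with $F$ alone, I would instead argue from the measurement term. Expanding \eqref{eq:F-def} gives
\[
\|\setA(\mZ(\vh,\vx)-\mZ(\vh_0,\vx_0))\|_2^2\leq \tfrac{\varepsilon^2d_0^2}{3N\kappa^2}+2|\<\setA^*(\ve),\mZ-\mZ_0\>|,
\]
and the cross term is at most $\sqrt{2N}\|\setA^*(\ve)\|\,\|\mZ-\mZ_0\|_F$. Local RIP then gives a quadratic inequality in $\|\mZ-\mZ_0\|_F$, which puts $(\vh,\vx)$ in $\setN_\varepsilon$ and bounds each block by $\|\vh_n\vx_n^*-\vh_{n0}\vx_{n0}^*\|_F\leq\varepsilon d_{n0}$ (this is where $\kappa$ enters). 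However, this only controls the product $\vh_n\vx_n^*$; it does not control $\|\vh_n\|$ and $\|\vx_n\|$ individually, nor the coherences. So the $G$ route is the one to use.

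\textbf{Main obstacle.} The hardest point is this last gap: $\setN_F$ as defined constrains only $F$, while the scaling and incoherence of $(\vh_n,\vx_n)$ come only from $G$. I expect to resolve it as \cite{li2018rapid} does, by interpreting $\setN_F$ through the bound on $\tf$ that holds along the iterates. The first thing to check is that the noise cross term in $F$ cannot make $F$ negative enough to spoil the bound on $G$. That follows from the quadratic inequality above together with Lemma \ref{lem:noise-stability}, since the cross term is dominated by $\|\ve\|_2^2+$ a small multiple of $\varepsilon^2d_0^2$, which $\rho$ absorbs.
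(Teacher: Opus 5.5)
Your $G$-based argument is correct and is essentially the paper's proof, which just defers to Lemma 5.5 of \cite{li2018rapid}, where the set is the sublevel set of $\tilde{F}$. Your reinterpretation is therefore the right one: read literally with $F$ alone the inclusion fails, e.g.\ $(t\vh_0,\vx_0/t)$ with $t>2$ has $F=\|\ve\|_2^2$ yet leaves $\setN_{d_0}$. Bounding each term $\rho G_0(\cdot)\le G\le \tilde{F}$, using $\rho\ge d^2+2\|\ve\|_2^2$ and $0.9d_{n0}\le d_n\le 1.1d_{n0}$, then gives all three memberships.

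The one misstep is your worry about the noise cross term: $F$ is by definition a squared norm, hence exactly nonnegative, so nothing needs checking. Your proposed fix via the quadratic inequality would also be circular, since Lemma \ref{lem:local-RIP} is only available on $\setN_{d_0}\cap\setN_{\mu}\cap\setN_{\nu}$, the very set you are trying to reach.
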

\noindent Proof of the above stated lemma follows exactly the same steps as the proof of Lemma 5.5 in \cite{li2018rapid}.  
	
	Upper bound of $\|\nabla \tilde{F}_{\vh}(\vw+\vz)-\nabla \tilde{F}_{\vh} (\vw)\|_2$ can be estimated by using gradient $\nabla \tilde{F}_{\vh}$ expansion. It is easy to see that 
		\begin{align*}
		\nabla  \tilde{F}_{\vh}(\vw+\vz)-  \nabla \tilde{F}_{\vh} (\vw) & = \setA^*\setA(\mZ (\vu, \vx)+ \mZ (\vh,\mathbf{v}) + \mZ (\vu,\mathbf{v}))\vx\\
		&  + \setA^*\setA(\mZ (\vh+\vu, \vx+\mathbf{v})-\mZ (\vh_0,\vx_0))\mathbf{v}-\setA^*(\ve)\mathbf{v}.
		\end{align*}
		
From Lemma \ref{lem:noise-stability}, we have $\|\setA^*(\ve)\|_{2 \rightarrow 2}\leq \varepsilon d_0$. Moreover, $\vw,\vw+\vz\in \setN_{d_0}$ implies $\|\mZ (\vu,\vx)+\mZ (\vh,\mathbf{v})+\mZ (\vu,\mathbf{v})\|_{F} \leq \|\vu\|_2\|\vx\|_2+\|\vh+\vu\|_2\|\mathbf{v}\|_2 \leq 2 \sqrt{d_0}(\|\vu\|_2+\|\mathbf{v}\|_2),$ where $\|\vh+\vu\|_2 \leq 2 \sqrt{d_0}$, and $\vw+\vz\in \setN_{\varepsilon}$ implies $\|\mZ (\vh+\vu,\vx+\mathbf{v})-\mZ (\vh_0,\vx_0)\|_F \leq \varepsilon d_0.$ Using these inequalities with $\|\vx\|_2 \leq 2 \sqrt{d_0}$ gives	
		
		\begin{align}\label{eq:grad-Fh-zw-z}
		\|\nabla \tilde{F}_{\vh}(\vw+\vz)-\nabla \tilde{F}_{\vh} (\vw)\|_2 & \leq 4d_0\|\setA\|^2_{2 \rightarrow 2} (\|\vu\|_2+\|\mathbf{v}\|_2) + 
		 \varepsilon d_0 \|\setA\|_{2 \rightarrow 2}^2 \|\mathbf{v}\|_2 + \varepsilon d_0 \|\mathbf{v}\|_2 \\ & \leq 5d_0\|\setA\|_{2 \rightarrow 2}^2 (\|\vu\|_2+\|\mathbf{v}\|_2).
		\end{align}
		Following inequality also hold due to symmetry between $\nabla \tilde{F}_{\vx}$ and $\nabla \tilde{F}_{\vh}$, 
		\begin{align}\label{eq:grad-Fx-zw-z}
		\|\nabla \tilde{F}_{\vx}(\vw+\vz)-\nabla \tilde{F}_{\vx} (\vw)\|_2\leq 5d_0 \|\setA\|_{2 \rightarrow 2}^2 (\|\vu\|_2+\|\mathbf{v}\|_2).
		\end{align}

To calculate the upper bound of $\| \nabla \tilde{G} (\vw+\vz) - \nabla \tilde{G}(\vw)\|_2$, we use the following equality
		\begin{multline}\label{eq:grad-main-Gh-zw-z}
		\| \nabla \tilde{G} (\vw+\vz) - \nabla \tilde{G}(\vw)\|_2 = \biggl\{ \sum_{n=1}^N \| \nabla \tilde{G}_{\vh_n} (\vw+\vz) - \nabla \tilde{G}_{\vh_n}(\vw)\|^2_2 + 
 \| \nabla \tilde{G}_{\vx_n} (\vw+\vz) - \nabla \tilde{G}_{\vx_n}(\vw)\|^2_2\biggl\}^{1/2}
		\end{multline}
where using \cite{ahmed2018ModBD} results for single modulated signal, gives
		\begin{align}\label{eq:grad-Gh-zw-z}
		\| \nabla \tilde{G}_{\vh_n} (\vw+\vz) - \nabla \tilde{G}_{\vh_n}(\vw)\|_2 \leq 5 \rho\frac{d_{n0}}{d_n^2}\|\vu_n\|_2 + \frac{3d_{n0}L\rho}{2d_n^2\mu^2}  \|\vu_n\|_2,
		\end{align}
and
		\begin{align}\label{eq:grad-Gx-zw-z}
		\| \nabla \tilde{G}_{\vx_n} (\vw+\vz) - \nabla \tilde{G}_{\vx_n}(\vw)\|_2 \leq 5 \rho\frac{d_{n0}}{d_n^2}\|\mathbf{v}_n\|_2 + \frac{3d_{n0}Q\rho}{2d_n^2\nu^2}  \|\mathbf{v}_n\|_2.
		\end{align}
So, \eqref{eq:grad-main-Gh-zw-z} becomes
		\begin{align}\label{eq:grad-main-simplified-Gx-zw-z}
		\| \nabla \tilde{G} (\vw+\vz) - \nabla \tilde{G} (\vw)\|_2 \leq \max \{ 5 \rho\frac{d_{n0}}{d_n^2} + \frac{3d_{n0}L\rho}{2d_n^2\mu^2}\}  \|\vu\|_2 + 
		  \max \{ 5 \rho\frac{d_{n0}}{d_n^2} + \frac{3d_{n0}Q\rho}{2d_n^2\nu^2}\}  \|\mathbf{v}\|_2.
		\end{align}				
	 Utilising $\|\vu\|_2+\|\mathbf{v}\|_2 \leq \sqrt{2}\|\vz\|_2$ as well as $\nabla F(\vw) = (\nabla \tilde{F}_{\vh} (\vw) + \nabla \tilde{G}_{\vh}(\vw), \nabla \tilde{F}_{\vx} (\vw) + \nabla \tilde{G}_{\vx}(\vw)),$ by plugging values of \eqref{eq:grad-Fh-zw-z}, \eqref{eq:grad-Fx-zw-z}, \eqref{eq:grad-Gh-zw-z}, and \eqref{eq:grad-Gx-zw-z}, gives	
	 \begin{align*}
	 \|\nabla F(\vw+\vz)-\nabla F(\vw)\|_2 \leq  \sqrt{2}\Big[10 d_0 \|\setA\|_{2 \rightarrow 2}^2+\frac{\rho}{\min_n d_n}\Big( 5+ \frac{3L}{2\mu^2} + \frac{3Q}{2\nu^2} \Big)\Big]\|\vz\|_2.
	 \end{align*}

\subsection{Local-RIP}\label{sec:local_RIP}
To proof Lemma \ref{lem:local-regularity-F}, the local-RIP that is described by the following two lemmas is used. 
\begin{lem}\label{lem:local-RIP}
	For all $(\vh_n,\vx_n) \in \setN_{d_0} \cap \setN_{\mu}\cap \setN_{\nu}$, and $\delta d_0 = \sqrt {\sum_{n=1}^N \|\vh_n\vx_n^*-\vh_{n0}\vx_{n0}^*\|_F^2}$, for a where $\xi = \tfrac{\varepsilon}{50\sqrt{N}\kappa} \in (0,1)$ the following local RIP holds:
	\begin{align}\label{eq:Local-RIP}
	 \Bigg|  \Big\|\sum_{n=1}^N  \setA_n(  \mZ_n(\vh_n,\vx_n)  -\mZ_n(\vh_{n0},\vx_{n0}))\Big\|_2^2 - \sum_{n=1}^N  \|\mZ_n(\vh_n,\vx_n)-\mZ(\vh_{n0},\vx_{n0})\|_{F}^2 \Bigg| \leq \xi \delta^2 d_0^2
	\end{align}
	 with probability at least $1-2\exp(-c\xi^2\delta^2 QN / \mu^2\nu^2\kappa^2)$
	  whenever 
	\begin{align}\label{eq:sample-complexity}
	Q \geq  \frac{c \kappa^2 N}{\xi^2 \delta^2 } \left(\mu^2 \nu^2_{\max} K + \nu^2 M \right) \log^4(LN).
	\end{align}

\end{lem}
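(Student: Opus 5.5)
The plan follows the local-RIP argument of \cite{ahmed2018ModBD} for $N=1$ and extends it to the demixing sum by treating the $N$ components as one block-structured linear map.

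\textbf{Step 1: reduce to a suprema of a second-order chaos.} Write $\mX_n := \mZ_n(\vh_n,\vx_n)-\mZ_n(\vh_{n0},\vx_{n0})$. Since each $\hat{\vc}_{n\ell}$ is linear in $\vr_n$, the quantity
\[
\Big\|\sum_{n}\setA_n(\mX_n)\Big\|_2^2=\Big\|\sum_n \mF_Q\mR_n\mC_n \text{-type terms}\Big\|_2^2
\]
is a quadratic form in the concatenated Rademacher vector $\vr=\vctr[\vr_n]\in\{\pm1\}^{QN}$. I will write it as $\|\mtx{V}_{\mX}\vr\|_2^2$ for a matrix $\mtx{V}_{\mX}=[\mtx{V}_{\mX_1},\ldots,\mtx{V}_{\mX_N}]$. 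Here the $\ell$th row of $\mtx{V}_{\mX_n}$ collects $\sqrt{L}\,\vf_\ell^*\mX_n\mC_n^*\mathrm{diag}(\cdot)$ against the $\ell$th row of $\mF_Q$.

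\textbf{Step 2: compute the expectation.} Because $\E\vr\vr^*=\mI_{QN}$ and the $\vr_n$ are independent, the cross terms $n\neq n'$ vanish in expectation. Orthonormality of the DFT columns then gives $\E\|\mtx{V}_{\mX}\vr\|_2^2=\sum_n\|\mX_n\|_F^2$, exactly as in the single-channel computation. Thus the left side of \eqref{eq:Local-RIP} equals
\[
\sup\big|\|\mtx{V}_{\mX}\vr\|_2^2-\E\|\mtx{V}_{\mX}\vr\|_2^2\big|
\]
over the set $\mathcal{V}$ of admissible $\mX=(\mX_n)$, restricted to $\sqrt{\sum\|\mX_n\|_F^2}=\delta d_0$.

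\textbf{Step 3: control the suprema with Krahmer--Mendelson--Rauhut.} I will apply the Krahmer--Mendelson--Rauhut bound for suprema of chaos processes, with its tail inequality. This requires three quantities: $d_F(\mathcal{V})$, $d_{2\to2}(\mathcal{V})$, and the Talagrand functional $\gamma_2(\mathcal{V},\|\cdot\|_{2\to2})$.

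For $d_F$, $\|\mtx{V}_\mX\|_F=(\sum_n\|\mX_n\|_F^2)^{1/2}=\delta d_0$ by Step 2.

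For $d_{2\to2}$, I will use the incoherence of $\setN_\mu\cap\setN_\nu$ together with the bounds $\sqrt{L}\|\mF_M\Delta\vh_n\|_\infty\le 6\mu\sqrt{d_{n0}}$ and $\sqrt{Q}\|\mC_n\Delta\vx_n\|_\infty\le 6\nu\sqrt{d_{n0}}$ from Lemma \ref{lem:local-regulaity-Del-norm-bounds}. These should give $d_{2\to2}\lesssim \mu\nu\kappa\,\delta d_0/\sqrt{QN}$ or similar; the factor $\kappa$ enters when converting $d_{n0}$ to $d_0$. Combined with $d_F$, this yields the exponent $\xi^2\delta^2QN/(\mu^2\nu^2\kappa^2)$ in the probability bound.

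For $\gamma_2$, I will bound it by Dudley's entropy integral. The covering of $\mathcal{V}$ splits into coverings of the $\vh_n$ parts in the norm $\sqrt{L}\|\mF_M\cdot\|_\infty$ (dimension $NM$) and of the $\vx_n$ parts in $\sqrt{Q}\|\mC_n\cdot\|_\infty$ (dimension $NK$), weighted by $\nu$ and $\mu\nu_{\max}$ respectively. I will use volumetric bounds at coarse scales and a Maurey empirical-method bound at fine scales. This should give
\[
\gamma_2\lesssim \kappa\sqrt{N}\,\sqrt{(\mu^2\nu_{\max}^2K+\nu^2M)/Q}\;\delta d_0\log^2(LN).
\]

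\textbf{Step 4: conclude.} Requiring $\gamma_2^2+\gamma_2 d_F+d_F d_{2\to2}\le \xi\delta^2d_0^2/2$ reproduces \eqref{eq:sample-complexity}. The tail term then gives the stated probability.

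\textbf{Main obstacle.} The hardest part is the $\gamma_2$ entropy estimate, because the $N$ blocks interact through the sum and carry unequal scales $d_{n0}$. The difficulty is keeping the $N$ and $\kappa$ dependence down to the stated $\kappa^2N$, and handling the nonconvex rank-one structure of each $\mX_n$, since $\mX_n$ is a difference of rank-one matrices. My first attempt would bound each $\mX_n$ via the expansion \eqref{eq:difference-expansion} as a sum of three rank-one pieces that each lie in incoherent sets. I would then cover each piece separately and sum the entropies over $n$ by the triangle inequality, accepting a factor of $\sqrt{N}$ from Cauchy--Schwarz.
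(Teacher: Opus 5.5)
Your route is the paper's. You write the sum as a second-order chaos in the concatenated Rademacher vector, identify its mean as $\sum_n\|\vh_n\vx_n^*-\vh_{n0}\vx_{n0}^*\|_F^2$, and apply the Krahmer--Mendelson--Rauhut tail bound. For that you use $d_F=\delta d_0$, an incoherence bound on $d_{2\to2}$, and Dudley's integral with covers in $\sqrt{L}\|\mF_M\cdot\|_\infty$ and $\sqrt{Q}\|\mC_n\cdot\|_\infty$. The paper passes to $B_1$ and quotes the volumetric/Maurey entropy estimates of Rauhut, Sec.~8.4. However, Step 3 contains a concrete error that makes Step 4 inconsistent.

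The factor $\delta$ in your bounds on $d_{2\to2}$ and $\gamma_2$ cannot be obtained.
\begin{itemize}
\item Incoherence of $\vh_n$ and $\vh_{n0}$ separately only gives $\sqrt{L}\|\mF_M(\vh_n-\vh_{n0})\|_\infty\lesssim\mu\sqrt{d_{n0}}$, and likewise for $\vx$. This does not improve as $\delta\to 0$.
\item The covers in Dudley's integral are covers of the parameter sets $\setH,\setX$: balls of radius $2\sqrt{d_{n0}}$ intersected with incoherence constraints. These sets do not shrink with $\delta$.
\end{itemize}
What you can prove, and what the paper proves (using $\max_n d_{n0}\le\kappa d_0/\sqrt N$), is
\[
d_{2\to2}(\setG)\le\max_n\frac{17\mu\nu d_{n0}}{\sqrt Q}\le\frac{17\mu\nu\kappa d_0}{\sqrt{QN}},\qquad \gamma_2\lesssim\kappa d_0\sqrt{\frac{N(\mu^2\nu_{\max}^2K+\nu^2M)\log^4(LN)}{Q}}.
\]
If your $\delta$-scaled bounds held, Step 4 would give a sample complexity with no $\delta^{-2}$ and a tail exponent with no $\delta^2$. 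That is strictly stronger than the lemma, and is the sign that something is off. It is exactly the $\delta$-free bounds that yield the stated result:
\begin{itemize}
\item $\gamma_2 d_F\le\xi\delta^2d_0^2$ gives \eqref{eq:sample-complexity}.
\item With $t=\xi\delta^2d_0^2/2$ and KMR parameter $V\lesssim\delta d_0\,d_{2\to2}$, the term $t^2/V^2$ gives the exponent $\xi^2\delta^2QN/(\mu^2\nu^2\kappa^2)$.
\end{itemize}

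Two smaller points. First, the bounds on $\Delta\vh_n,\Delta\vx_n$ in Lemma~\ref{lem:local-regulaity-Del-norm-bounds} require $\setN_\varepsilon$. This lemma does not assume $\setN_\varepsilon$, and the initialization proof applies it with $(\vh,\vx)=(\mathbf{0},\mathbf{0})$, $\delta=1$. Instead, bound $\|\mH_{\vh_n}\mX_{\vx_n}\|_{2\to2}+\|\mH_{\vh_{n0}}\mX_{\vx_{n0}}\|_{2\to2}$ directly, where $\mH_{\vh_n}=\mathrm{circ}(\vh_n)$ and $\mX_{\vx_n}=\mathrm{diag}(\mC_n\vx_n)$.

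Second, your ``main obstacle'' is not one. Because the ground-truth term is fixed,
\[
\|\mH_{\tilde\vh}\mX_{\tilde\vx}-\mH_{\vh}\mX_{\vx}\|_{2\to2}\le\|\mH_{\tilde\vh}\|_{2\to2}\|\mX_{\tilde\vx}-\mX_{\vx}\|_{2\to2}+\|\mX_{\vx}\|_{2\to2}\|\mH_{\tilde\vh}-\mH_{\vh}\|_{2\to2}.
\]
So the covering number of $\setG$ is at most the product of the covering numbers of the parameter sets. You need neither the rank-one expansion of the differences nor a per-$n$ triangle inequality with Cauchy--Schwarz. The $N$-dependence enters only through the dimensions $KN$, $MN$ and the radius $2\sqrt{N\max_n d_{n0}}$ of the concatenated ball.
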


\textbf{Proof of Lemma \ref{lem:local-RIP}}\label{sec:local-RIP}

In Fourier domain, we can write noiseless measurements in the form of linear map, $\setA_n$, defined in (4) as
\begin{align*}
\setA(\mZ) = \sum_{n=1}^N \vh_{n0}\circledast \mR_n\mC_n\vx_{n0} = \sum_{n=1}^N \text{circ}(\vh_{n0})\text{diag}(\mC_n\vx_{n0})\vr_n,
\end{align*}
where $\mR_n = \text{diag}(\vr_n)$. Let the matrices $\mH_{\vh},\mX_{\vx}$ are defined as
$$\mH_{\vh_n}:= [\text{circ}(\vh_n)], \ \text{and} \ \mX_{\vx_n} :=  [\text{diag}\left(\mC_n\vx_{n}\right)].$$ We write the error in terms of $l_2$-norm below
\begin{align}\label{eq:2nd-order-chaos-process}
\|\setA(\mZ-\mZ_0)\|_2^2 
  = \bigg\| \sum_{n=1}^N (\mH_{\vh_n}\mX_{\vx_n}-\mH_{\vh_{n0}}\mX_{\vx_{n0}})\vr_n \bigg\|_2^2.
\end{align}
The expected value of \eqref{eq:2nd-order-chaos-process} can be given as   
\begin{align}\label{eq:expected-value}
\E  \bigg\|\sum_{n=1}^N (\mH_{\vh_n}\mX_{\vx_n}- \mH_{\vh_{n0}}\mX_{\vx_{n0}})\vr_n \bigg\|_2^2  =   \sum_{n=1}^N \|  \mH_{\vh_n}\mX_{\vx_n}-\mH_{\vh_{n0}}\mX_{\vx_{n0}} \|_F^2,
\end{align}
Above equality is obtained as $\mH_{\vh_n}\mX_{\vx_n}-\mH_{\vh_{n0}}\mX_{\vx_{n0}}$ is deterministic and $\vr_n$ is a $Q$-length standard Rademacher vector. We can write $\text{circ}(\vh_n) = \mF^*\text{diag}(\hat{\vh}_n)\mF_Q \in \C^{L \times Q}$, where $\hat{\vh}_n = \sqrt{L}\mF_M\vh_n$ and $\mF$ is L$\times$L normalized DFT matrix. So, we can rewrite \eqref{eq:expected-value} as
\begin{align}\label{eq:distance-vectors-to-matrices}
\sum_{n=1}^N \|\mH_{\vh_n}\mX_{\vx_n}-\mH_{\vh_{n0}}\mX_{\vx_{n0}}\|_F^2 &= \sum_{n=1}^N  \|\mF^*\text{diag}(\hat{\vh}_n)\mF_Q \text{diag}(\mC_n\vx_{n})- \mF^*\text{diag}(\hat{\vh}_{n0})\mF_Q\text{diag}(\mC_n\vx_{n0})\|_F^2\notag\\
& = \sum_{n=1}^N \|\text{diag}(\hat{\vh}_n)\mF_Q \text{diag}(\mC_n\vx_{n})- \text{diag}(\hat{\vh}_{n0})\mF_Q\text{diag}(\mC_n\vx_{n0})\|_F^2\notag\\
& = \sum_{n=1}^N \| \mF_Q\odot( \hat{\vh}_n \vx_n^*- \hat{\vh}_{n0} \vx_{n0}^*)\|_F^2 = \sum_{n=1}^N \|\vh_n\vx_n^*-\vh_{n0}\vx_{n0}^*\|_F^2,
\end{align}
where $\mC_n^*\mC_n = \mI_K$,  $(\mF_M)^*\mF_M = \mI_M$, and $\sqrt{L}\mF_Q$ have normalized entries.

Now, we define the sets $\setH$, and $\setX$ as
\begin{align}\label{eq:setX-setH}
\setH:= \{\{\mH_{\vh_n}\}_{n=1}^N  \vert  \vh_n \in \setN_{d_0} \cap \setN_{\mu}\}, \notag \\
\ \setX: = \{ \{\mX_{\vx_n}\}_{n=1}^N  \vert    \vx_n \in \setN_{d_0} \cap \setN_\nu\}. 
\end{align}

By using \eqref{eq:2nd-order-chaos-process}, \eqref{eq:expected-value}, and \eqref{eq:distance-vectors-to-matrices}, we can write
\begin{align*}
 \sup_{\mH_{\vh} \in \setH}\sup_{\mX_{\vx} \in \setX}\bigg| &  \Big\|\sum_{n=1}^N (\mH_{\vh_n}\mX_{\vx_n}-\mH_{\vh_{n0}}\mX_{\vx_{n0}})\vr_n \Big\|_2^2 -  \E  \Big\| \sum_{n=1}^N (\mH_{\vh_n}\mX_{\vx_n}-\mH_{\vh_{n0}}\mX_{\vx_{n0}})\vr_n \Big\|_2^2 \bigg| \notag \\ &\leq \xi \sum_{n=1}^N\|\mH_{\vh_n}\mX_{\vx_n}-\mH_{\vh_{n0}}\mX_{\vx_{n0}}\|_F^2
\end{align*}
that is equivalent to local-RIP statement as stated in lemma \ref{lem:local-RIP} for a $\xi \in (0,1)$.   

Local-RIP proof follows a result from \cite{krahmer2014suprema}, as stated below, to see how much a second-order chaos process, $\sum_{n=1}^N \|(\mH_{\vh_n}\mX_{\vx_n}-\mH_{\vh_{n0}}\mX_{\vx_{n0}})\vr_n\|_2^2$, deviate from its mean.

\begin{thm}[In \cite{krahmer2014suprema}:Theorem 3.1]\label{thm:Mendelson}
	Suppose $\setG$ be a collection of matrices, and let a random vector $\vr$ whose components $r_j$ are independent random variables, have a mean of zero, a variance of $1$, and are $\alpha$-subgaussian. Following sets are defined as 
	\begin{align*}
	W &= (\gamma_2(\setG,\|\cdot\|_{2 \rightarrow 2})+d_{F}(\setG))\gamma_2(\setG,\|\cdot\|_{2 \rightarrow 2})  + d_{F}(\setG) d_{2\rightarrow 2}(\setG)\\
	V &= (\gamma_2(\setG, \|\cdot\|_{2\rightarrow 2})+d_F(\setG))d_{2 \rightarrow 2}(\setG), \ \text{and} \ U = d_{2 \rightarrow 2}^2 (\setG).
	\end{align*}
so, for $t \geq 0$, 
	\begin{align*}
	&\PP\big(\sup_{\mG \in \setG}\left| \|\mG\vr\|_2^2 - \E \|\mG\vr\|_2^2 \right| \geq c_1 W +t  \big) \leq \\
	& \qquad \qquad  2 \exp\Big( -c_2\min\Big\{ \tfrac{t^2}{V^2},\tfrac{t}{U}\Big\}\Big).
	\end{align*}
	where $c_1$, and $c_2$ constants are dependent on $\alpha$. 
\end{thm}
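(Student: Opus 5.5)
Since this is Theorem 3.1 of \cite{krahmer2014suprema}, quoted as a tool, the paper does not need to reprove it. The plan below describes how I would prove it from scratch. The idea is to obtain moment bounds for the supremum of the chaos process and convert them into the stated tail bound.

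\textbf{Step 1: reduce to an off-diagonal chaos.} Write
\[
\|\mG\vr\|_2^2-\E\|\mG\vr\|_2^2=\sum_{j\neq k} r_jr_k(\mG^*\mG)_{jk}+\sum_j (r_j^2-1)(\mG^*\mG)_{jj}.
\]
For Rademacher entries (our case) the diagonal sum vanishes. For general $\alpha$-subgaussian entries it is a sum of independent centered subexponential terms. Its supremum over $\setG$ can be controlled by Bernstein's inequality together with chaining in the metric $d_{2\rightarrow2}$, and it contributes only terms already dominated by $W$, $V$ and $U$.

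\textbf{Step 2: decouple and chain.} For the off-diagonal part I would use the standard decoupling inequality for $L_p$ norms of suprema. This bounds the $p$-th moment by that of $\sup_{\mG}|\<\mG\vr,\mG\vr'\>|$, where $\vr'$ is an independent copy of $\vr$. Fixing an admissible sequence $(\setG_s)$ nearly attaining $\gamma_2(\setG,\|\cdot\|_{2\rightarrow2})$, I telescope
\[
\<\mG\vr,\mG\vr'\>=\sum_s\big(\<\pi_s\mG\,\vr,\pi_s\mG\,\vr'\>-\<\pi_{s-1}\mG\,\vr,\pi_{s-1}\mG\,\vr'\>\big).
\]
Each increment splits into two terms of the form $\<(\pi_s\mG-\pi_{s-1}\mG)\vr,\pi_s\mG\,\vr'\>$. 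Conditionally on $\vr'$, such a term is subgaussian in $\vr$ with parameter at most $\|\pi_s\mG-\pi_{s-1}\mG\|_{2\rightarrow2}\cdot\sup_{\mG}\|\mG\vr'\|_2$. A union bound over the at most $2^{2^{s+1}}$ links at level $s$, summed over $s$, then gives
\[
\Big\|\sup_{\mG}|\<\mG\vr,\mG\vr'\>|\Big\|_{L_p}\lesssim \gamma_2\,\Big\|\sup_{\mG}\|\mG\vr\|_2\Big\|_{L_p}+\sqrt p\, d_{2\rightarrow2}\,\Big\|\sup_{\mG}\|\mG\vr\|_2\Big\|_{L_p}+p\,d_{2\rightarrow2}^2 .
\]
The last term comes from the starting level of the chain, where $\<\pi_0\mG\,\vr,\pi_0\mG\,\vr'\>$ is controlled by Hanson--Wright.

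\textbf{Step 3 (main obstacle): bound $\sup_{\mG}\|\mG\vr\|_2$ in $L_p$.} I expect the key estimate to be
\[
\Big\|\sup_{\mG}\|\mG\vr\|_2\Big\|_{L_p}\lesssim \gamma_2(\setG,\|\cdot\|_{2\rightarrow2})+d_F(\setG)+\sqrt p\, d_{2\rightarrow2}(\setG).
\]
I would attempt this by the same chaining, applied now to the vector-valued process $\mG\vr$. Gaussian-type concentration of $\|\mA\vr\|_2$ around $\|\mA\|_F$, at scale $\|\mA\|_{2\rightarrow2}$, handles the starting point of the chain. For the links I would use the sharp deviation bound
\[
\|(\mA-\mB)\vr\|_2\lesssim \|\mA-\mB\|_F+\sqrt u\,\|\mA-\mB\|_{2\rightarrow2}.
\]
The delicate point is that the Frobenius distance must not be summed over the chain. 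This is handled by comparing $\|\mG\vr\|_2$ with $\|\pi_{s}\mG\,\vr\|_2$ only along the $\|\cdot\|_{2\rightarrow2}$ increments, which is the circular part of the argument. I would resolve it by first proving a weaker a priori bound and then bootstrapping.

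\textbf{Step 4: tail bound.} Substituting the Step 3 bound into Step 2 gives
\[
\Big\|\sup_{\mG}\big|\|\mG\vr\|_2^2-\E\|\mG\vr\|_2^2\big|\Big\|_{L_p}\lesssim W+\sqrt p\,V+p\,U .
\]
Markov's inequality applied with $p=\min\{t^2/V^2,\,t/U\}$ then yields the stated probability bound.
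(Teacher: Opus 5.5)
The paper gives no proof of this statement. It is quoted from \cite{krahmer2014suprema} and only ever applied with Rademacher $\vr_n$, so there is no internal argument to compare against. Your outline has the right ingredients: decoupling, chaining the decoupled form conditionally on one factor, an $L_p$ bound on $\sup_{\mG}\|\mG\vr\|_2$, and then moments to tails. There are, however, two genuine gaps.

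\textbf{Step 3.} You rightly call this step the crux, but it is not proved, and the route you sketch cannot work. If you chain $\|\mG\vr\|_2$ using $\|(\mtx{A}-\mtx{B})\vr\|_2\lesssim\|\mtx{A}-\mtx{B}\|_F+\sqrt u\,\|\mtx{A}-\mtx{B}\|_{2\rightarrow2}$, every link pays its Frobenius length. The sum $\sum_s\|\pi_s\mG-\pi_{s-1}\mG\|_F$ is not controlled by $\gamma_2:=\gamma_2(\setG,\|\cdot\|_{2\rightarrow2})$ and $d_F$. Centring does not help, because $\|\mG\vr\|_2-\E\|\mG\vr\|_2$ does not have $\|\cdot\|_{2\rightarrow2}$-subgaussian increments. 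For example, take $\mG=\boldsymbol{e}_1(\boldsymbol{e}_1+\boldsymbol{e}_2)^T/\sqrt2$ and $\mG'=\mG+\epsilon\mI_n$ with $\epsilon\to0$ and $\epsilon\sqrt n\to\infty$. The increment stays of order one while $\|\mG-\mG'\|_{2\rightarrow2}=\epsilon$. Saying you will ``compare only along $2\rightarrow2$ increments'' names the difficulty but gives no mechanism.

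The missing idea is to read Step 3 off from Step 2 rather than prove it by a separate chaining. For Rademacher $\vr$ you have $\|\mG\vr\|_2^2=\|\mG\|_F^2+\sum_{j\ne k}r_jr_k(\mG^*\mG)_{jk}$. Set $X_p:=\|\sup_{\mG}\|\mG\vr\|_2\|_{L_p}$. It is finite a priori, since $\sup_{\mG}\|\mG\vr\|_2\le d_{2\rightarrow2}\|\vr\|_2$, and it satisfies $X_p^2\le d_F^2+\|\sup_{\mG}|\sum_{j\ne k}r_jr_k(\mG^*\mG)_{jk}|\|_{L_p}$. Apply decoupling and then your Step 2 bound, in which $X_p$ enters only linearly. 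This gives
\[ X_p^2\lesssim d_F^2+\gamma_2^2+(\gamma_2+\sqrt p\,d_{2\rightarrow2})X_p+p\,d_{2\rightarrow2}^2 , \]
and hence $X_p\lesssim\gamma_2+d_F+\sqrt p\,d_{2\rightarrow2}$. The decoupled diagonal $\sum_j r_jr_j'\|\mG_j\|_2^2$, with $\mG_j$ the $j$th column, is a Rademacher process with $2d_F\|\cdot\|_{2\rightarrow2}$-subgaussian increments, so it is absorbed. One further point on Step 2 itself: to get $\gamma_2X_p$ rather than $\sqrt p\,\gamma_2X_p$, start the chain at a level $\ell$ with $2^\ell\approx p$. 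There the union bound over $|\setG_\ell|\le 2^{2^\ell}\le e^{p}$ points costs only a constant in $L_p$.

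\textbf{Diagonal term.} Your claim in Step 1 for general $\alpha$-subgaussian entries is not justified by the argument you give. In Bernstein's bound for the increment $\sum_j(r_j^2-1)(\|\mG_j\|_2^2-\|\mG'_j\|_2^2)$, the subexponential part is governed by $\max_j|\|\mG_j\|_2^2-\|\mG'_j\|_2^2|\le 2d_{2\rightarrow2}\|\mG-\mG'\|_{2\rightarrow2}$. Mixed-tail chaining therefore yields $d_F\gamma_2+d_{2\rightarrow2}\,\gamma_1(\setG,\|\cdot\|_{2\rightarrow2})$, and the $\gamma_1$ term is not dominated by $W$ in general.

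The same issue also breaks the repair of Step 3. Without $r_j^2\equiv1$, closing the quadratic inequality needs a bound on $\sup_{\mG}\sum_j r_j^2\|\mG_j\|_2^2=\sup_{\mG}\|\mathrm{diag}(\|\mG_1\|_2,\|\mG_2\|_2,\ldots)\vr\|_2^2$. That is Step 3 itself for a family of diagonal matrices, where the off-diagonal chaos is empty and the bootstrap gives nothing. For the Rademacher vectors the paper actually uses, the diagonal issue disappears, and the self-bounding argument above completes your outline in that case.
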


In above Theorem $\gamma_2$ is Talagrands functional \cite{talagrand2005generic} that defines how precisely, at different levels, a set $\setG$ can be approximated. For a set $\setG$, $d_{F}(\setG):= \sup_{\mG \in \setG} \|\mG\|_F$ defines the diameter w.r.t Frobenius norm, and  $d_{2\rightarrow 2}(\setG) := \sup_{\mG \in \setG} \|\mG\|_{2 \rightarrow 2}$ is the diameter w.r.t operator norm. We define the set of matrices as $\setG :=\{ [\mH_{\vh_n}\mX_{\vx_n}-\mH_{\vh_{n0}}\mX_{\vx_{n0}} ]^{\otimes N}| \mH_{\vh_n} \in \setH, \ \mX_{\vx_n} \in \setX \}$ where $\mH_{\vh_{n0}} \in \setH$, and $\mX_{\vx_{n0}} \in \setX$ are fixed. Using \eqref{eq:setX-setH}, we can see that $ \sup_{\mH_{\vh_n} \in \setH}\|  \mH_{\vh_n}^{\otimes N} \|_{2 \rightarrow 2} = \|[\text{circ}(\vh_n)]^{\otimes N}\|_\infty=\max_n \sqrt{L}\|\mF_M \vh_n\|_\infty \leq \max_n 4\mu \sqrt{d_{n0}}$, and $\|\mH_{\vh_0}^{\otimes N}\|_{2 \rightarrow 2}=\max_n \mu \sqrt{d_{n0}}$. For $\mX_{\vx_n} = \text{diag}( \mC_n\vx_{n})$, we have
\begin{align*}
\sup_{\mX_{\vx_n} \in \setX} \|\mX_{\vx_n}^{\otimes N}\|_{2 \rightarrow 2} = & \|\mC_n^{\otimes N}\vx\|_\infty \\ = & \max_{1\leq n \leq N}  \|\mC_n\vx_n\|_\infty \leq \tfrac{\max_n 4\nu \sqrt{d_{n0}}}{\sqrt{Q}}, \\
 \|\mX_{\vx_0}^{\otimes N}\|_{2 \rightarrow 2} =& \|\mC_n^{\otimes N}\vx_0\|_\infty \\ = & \max_{1\leq n \leq N} \|\mC_n\vx_{n0}\|_\infty = \tfrac{\max_n \nu \sqrt{d_{n0}}}{\sqrt{Q}}.
\end{align*}
The diameter of $\setG$ can be expressed using the Frobenius norm as 
\begin{align*}
 d_F(\setG) = \sup_{\mH_{\vh_n} \in \setH} \sup_{\mX_{\vx_n} \in \setX} \sqrt{ \sum_{n=1}^N \|\mH_{\vh_n}\mX_{\vx_n}-\mH_{\vh_{n0}}\mX_{\vx_{n0}}\|_F^2 } = \delta d_0.
\end{align*}

In the same way, diameter related to operator norm,
\begin{align*}
 d_{2 \rightarrow 2}(\setG) \leq  \sup_{\mH_{\vh_n}\in \setH} & \|\mH_{\vh_n}^{\otimes N}\|_{2 \rightarrow 2}  \cdot \sup_{\mX_{\vx_n}\in \setX} \|\mX_{\vx_n}^{\otimes N}\|_{2 \rightarrow 2}  + \|\mH_{\vh_{n0}}^{\otimes N}\|_{2 \rightarrow 2}\|\mX_{\vx_{n0}}^{\otimes N}\|_{2 \rightarrow 2} \\
\leq \frac{1}{\sqrt{Q}}\max_n & \left(4\mu\sqrt{d_{n0}}  \cdot 4\nu\sqrt{d_{n0}}+\mu\sqrt{d_{n0}}\cdot \nu \sqrt{d_{n0}}\right)\\ 
= \max_n & \frac{17\mu\nu d_{n0}}{\sqrt{Q}}. 
\end{align*}

Similar to \cite{ahmed2018ModBD}, we define $\epsilon$-cover of set $\setG$ as set $\setC$ defined below
\begin{align*}
\sup_{\mG\in \setG} \sup_{\mC \in \setC} \|\mC-\mG\| \leq \epsilon.
\end{align*}

Using Dudley integral, $\gamma_2$-functional can be written as \cite{dudley1967sizes}
\begin{align}\label{eq:Dudley-Integral}
\gamma_2(\setG,\|\cdot\|_{2 \rightarrow 2}) \leq c\int_0^{d_{2\rightarrow 2}(\setG)} \sqrt{ \log N(\setG,\|\cdot\|_{2 \rightarrow 2},\epsilon)d\epsilon },
\end{align}
where $N(\setG, \|\cdot\|, \epsilon)$ represents covering number of $\setG$ with smallest $\epsilon$-cover size, $d_{2 \rightarrow 2}(\setG)$ is the diameter w.r.t operator norm $\setG$, and $c$ is a known constant.

We define another set of matrices as $[\mH_{\tilde{\vh}_n}\mX_{\tilde{\vx}_n}-\mH_{\vh_{n0}}\mX_{\vx_{n0}}]^{\otimes N} \in \setC \subseteq \setG$, the distance between $\setC$, and $\setG$ can be given as 
\begin{align}\label{eq:distance}
& \|(\mH_{\tilde{\vh}_n}\mX_{\tilde{\vx}_n}-\mH_{\vh_{n0}}\mX_{\vx_{n0}})^{\otimes N} - (\mH_{\vh_n}\mX_{\vx_n}-\mH_{\vh_{n0}}\mX_{\vx_{n0}})^{\otimes N}\|_{2 \rightarrow 2} \notag  = \|(\mH_{\tilde{\vh}_n}\mX_{\tilde{\vx}_n}-\mH_{\vh_n}\mX_{\vx_n})^{\otimes N}\|_{2 \rightarrow 2}\notag \\
&\leq \|\mH_{\tilde{\vh}_n}^{\otimes N}\|_{2 \rightarrow 2}\|(\mX_{\tilde{\vx}_n}-\mX_{\vx_n})^{\otimes N}\|_{2 \rightarrow 2} \notag  + \|\mX_{\vx_n}^{\otimes N}\|_{2 \rightarrow 2}\|(\mH_{\tilde{\vh}_n}-\mH_{\vh_n})^{\otimes N}\|_{2 \rightarrow 2} \notag\\
&= \sqrt{L} \|\mF_{M}^{\otimes N} \tilde{\vh}\|_\infty \|\mC_n^{\otimes N}( \tilde{\vx}-\vx)\|_\infty   + \sqrt{L}  \|\mC_n^{\otimes N}\vx\|_\infty\|\mF_{M}^{\otimes N} (\tilde{\vh}-\vh)\|_\infty \notag \\
& \leq \max_n \frac{4\mu\sqrt{d_{n0}}}{\sqrt{Q}} \|\tilde{\vx}-\vx\|_c + \max_n \frac{4\nu \sqrt{d_{n0}} }{\sqrt{Q}} \|\tilde{\vh}-\vh\|_f.
\end{align}

To get $\|\mH_{\tilde{\vh}}\mX_{\tilde{\vx}}-\mH_{\vh}\mX_{\vx}\|_{2 \rightarrow 2} \leq \epsilon,$ we set the norms as
\begin{align}\label{eq:exotic-norms}
&\|\tilde{\vx}-\vx\|_{c}:= \sqrt{Q}\|\mC^{\otimes N} (\tilde{\vx}-\vx)\|_\infty \leq \frac{\epsilon}{2} \cdot \frac{\sqrt{Q}}{\max_n 4\mu\sqrt{d_{n0}}}, \notag\\
&\|\tilde{\vh}-\vh\|_{f} :=  \sqrt{L} \|\mF_{M}^{\otimes N} (\tilde{\vh} -\vh)\|_\infty \leq \frac{\epsilon}{2}\cdot \frac{\sqrt{Q}}{\max_n 4\nu \sqrt{d_{n0}}}.
\end{align}

If we plug the above choice of norms in \eqref{eq:distance}, then the point $[\mH_{\tilde{\vh}_n}\mX_{\tilde{\vx}_n}-\mH_{\vh_{n0}}\mX_{\vx_{n0}}]^{\otimes N}$ obeys $\| (\mH_{\tilde{\vh}_n}\mX_{\tilde{\vx}_n}-\mH_{\vh_{n0}}\mX_{\vx_{n0}})^{\otimes N} - (\mH_{\vh_n}\mX_{\vx_n}-\mH_{\vh_{n0}}\mX_{\vx_{n0}})^{\otimes N}\|_{2 \rightarrow 2}\leq \epsilon$. We conclude that an $\epsilon$-cover of $\setG$ is $\setC := \{[\mH_{\tilde{\vh}_n}\mX_{\tilde{\vx}_n}-\mH_{\vh_{n0}}\mX_{\vx_{n0}}]^{\otimes N} :  \mH_{\tilde{\vh}_n} \in \setC_{\setH}, \mX_{\tilde{\vx}_n} \in \setC_{\setX}\}$ in operator norm. 

Before calculating Dudley integral, we evaluate the covering number as follows
\begin{align*}
& N(\setG,\|\cdot\|_{2 \rightarrow 2}, \epsilon) \leq 
  N\bigg(\setX, \|\cdot\|_c,\frac{\epsilon \sqrt{Q}}{\max_n 8\mu\sqrt{d_{n0}}}\bigg)\cdot N\bigg(\setH, \|\cdot\|_f,\frac{\epsilon\sqrt{Q}}{\max_n 8\nu\sqrt{d_{n0}}}\bigg) \\ & \leq 
 N\bigg(2\sqrt{N d_{n0,\max}}B_2^{KN}, \|\cdot\|_c,\frac{\epsilon \sqrt{Q}}{\max_n 8\mu\sqrt{d_{n0}}}\bigg)\cdot  N\bigg(2\sqrt{N d_{n0,\max}}B_2^{MN}, \|\cdot\|_f,\frac{\epsilon\sqrt{Q}}{\max_n 8\nu\sqrt{d_{n0}}}\bigg) \\
& = N\bigg(B_2^{KN}, \|\cdot\|_c,\frac{\epsilon \sqrt{Q}}{\max_n 16\mu  \sqrt{N} d_{n0}}\bigg)\cdot  N\bigg(B_2^{MN}, \|\cdot\|_f,\frac{\epsilon\sqrt{Q}}{\max_n 16\nu  \sqrt{N} d_{n0}}\bigg)
\end{align*}
Now we calculate Dudley integral,
\begin{align*}
\int_0^{d_{2\rightarrow 2}(\setG)} & \sqrt{\log N(\setG,\|\cdot\|_{2 \rightarrow 2}, \epsilon)}d\epsilon \leq
\int_0^{\tfrac{\max_n 17\mu\nu d_{n0}}{\sqrt{Q}}} \Bigg(\sqrt{\log N\bigg(B_2^{KN}, \|\cdot\|_c,\frac{\epsilon \sqrt{Q}}{\max_n 16\mu \sqrt{N} d_{n0}}\bigg)} \\
&   + \sqrt{\log N\bigg(B_2^{MN}, \|\cdot\|_f,\frac{\epsilon\sqrt{Q}}{\max_n 16\nu \sqrt{N} d_{n0}}\bigg)}\Bigg) d\epsilon \\
&= \frac{\max_n 16\mu \sqrt{N} d_{n0}}{\sqrt{Q}} \int_{0}^{\frac{17\nu \sqrt{N}}{16}} \sqrt{\log N(B_2^{KN},\|\cdot\|_c,\epsilon)}d\epsilon + \\
& \qquad \qquad \frac{\max_n 16\nu \sqrt{N} d_{n0}}{\sqrt{Q}} \int_0^{\frac{17}{16}\mu \sqrt{N}} \sqrt{\log N(B_2^{MN},\|\cdot\|_f,\epsilon)}d\epsilon
\end{align*}
\begin{align*}
&\leq \frac{\max_n 16 \mu \sqrt{N} d_{n0}}{\sqrt{Q}} \sqrt{KN}\int_{0}^{ 2\nu \sqrt{N}} \sqrt{\log N(B_1^{KN},\|\cdot\|_c,\epsilon)}d\epsilon + \\
& \qquad  \frac{\max_n 16\nu \sqrt{N} d_{n0}}{\sqrt{Q}} \sqrt{MN}\int_0^{ 2\mu \sqrt{N}} \sqrt{\log N(B_1^{MN},\|\cdot\|_f,\epsilon)}d\epsilon\\
&  \lesssim \frac{\max_n \mu \nu_{\max}  \sqrt{N} d_{n0}}{\sqrt{Q}} \sqrt{KN \log^4 (QN)} +  \frac{\max_n \nu \sqrt{N} d_{n0}}{\sqrt{Q}} \sqrt{MN \log^4 (LN)},\\
&  \lesssim \frac{\kappa \mu \nu_{\max}   d_{0}}{\sqrt{Q}} \sqrt{KN \log^4 (QN)} +  \frac{\kappa \nu  d_{0}}{\sqrt{Q}} \sqrt{MN \log^4 (LN)},
\end{align*}
where we applied the knowledge of $B_2^{KN} \subseteq \sqrt{KN} B_1^{KN}$, and $B_2^{MN} \subseteq \sqrt{MN} B_1^{MN}$, in the third last inequality, and in the second last inequality, standard entropy is calculated (as given in Section 8.4 \cite{rauhut2010compressive}). In last inequality we used $\kappa = \frac{d_{\max}}{d_{\min}}$ and $d_{\min} \leq \frac{d_{0}}{\sqrt{N}}$. Using the result obtained above in \eqref{eq:Dudley-Integral} to get a bound on the $\gamma_2$ functional. Recall that $\delta = \tfrac{\sqrt{ \sum_{n=1}^N \|\vh_n\vx_n^*-\vh_{n0}\vx_{n0}\|^2_F}}{d_0} = \tfrac{\|\mX_{\vx}\mH_{\vh}-\mX_{\vx_0}\mH_{\vh_0}\|_F}{d_0}$. Now, we have all the required elements, i.e., diameters, and $\gamma_2$-functional, in Theorem \ref{thm:Mendelson}.   
\begin{align*}
 W \lesssim  d_0^2  \Bigg( \frac{\kappa^2}{Q} \left(\mu^2 \nu_{\max}^2 KN + \nu^2MN \right) \log^4(LN) + \frac{\delta \kappa}{\sqrt{Q}} \sqrt{\left(\mu^2 \nu_{\max}^2 KN + \nu^2MN \right) \log^4(LN)} +   \frac{ \mu\nu \delta  \kappa }{\sqrt{Q}}\Bigg).
\end{align*}
\begin{align*}
& U \lesssim \frac{\max_n \mu^2\nu^2 d_{n0}^2}{Q},  \\
&  V  \lesssim \frac{\max_n \mu\nu d_{n0}}{\sqrt{Q}} \Bigg( \frac{\kappa \mu \nu_{\max} d_{0}}{\sqrt{Q}} \sqrt{KN\log^4 (QN)}  + \frac{ \kappa \nu d_{0}}{\sqrt{Q}} \sqrt{MN \log^4 (LN)} + \delta d_0 \Bigg).
\end{align*}
Choosing $Q$ as in \eqref{eq:sample-complexity}, $L \geq Q$, $d^2_0 \geq Nd^2_{n0,min} $ and $t = \frac{1}{2} \xi d_0^2\delta^2$, the tail bound in Theorem \ref{thm:Mendelson} gives
\begin{align*}
\mathbb{P} \Bigg(\sup_{\mH_{\vh}\in \setH}\sup_{\mX_{\vx}\in \setX}\Bigg| \|(\mH_{\vh}\mX_{\vx}-\mH_{\vh_0}\mX_{\vx_0})\vr\|_2^2 -\|\mH_{\vh}\mX_{\vx}-\mH_{\vh_0}\mX_{\vx_0}\|_{F}^2 \Bigg| \geq  \xi\delta^2\d_0^2 \Bigg)\leq 2 \exp\left(-c \xi^2 \delta^2 \frac{QN}{\mu^2\nu^2 \kappa^2}\right).
\end{align*}

\begin{lem}\label{lem:local-Delh-Delx-RIP}
	For all $(\vh_n,\vx_n) \in \setN_{d_0} \cap \setN_{\mu}\cap \setN_{\nu}\cap \setN_{\varepsilon}$, and $\delta d_0 = \sqrt{ \sum_{n=1}^N \|\vh\vx^*-\vh_0\vx_0^*\|_F^2}$, where $\delta \leq \varepsilon \leq 1/15$, for a where $\xi = \tfrac{\varepsilon}{50\sqrt{N}\kappa} \in (0,1)$ the following local RIP holds:
	\begin{align*}
\bigg|  \Big\| \sum_{n=1}^N   \setA_n(\mZ_n(\Delta \vh_n, \vx_n) + \mZ_n(\vh_n, & \Delta \vx_n))\Big\|_2^2  
   - \sum_{n=1}^N \|\mZ_n(\Delta \vh_n,  \vx_n) + \mZ_n(\vh_n, \Delta \vx_n)\|_{F}^2\bigg| 
 \leq  \xi \delta^2 d_0^2
	\end{align*}
	with probability at least $1-2\exp(-c\xi^2\delta^2 QN / \mu^2\nu^2\kappa^2)$ whenever \eqref{eq:sample-complexity} holds.
\end{lem}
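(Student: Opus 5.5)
The plan is to rerun the chaos-process argument of Lemma \ref{lem:local-RIP}, applied now to the matrix family generated by the first-order perturbation $\mZ_n(\Delta\vh_n,\vx_n)+\mZ_n(\vh_n,\Delta\vx_n)$ rather than by the difference $\mZ_n(\vh_n,\vx_n)-\mZ_n(\vh_{n0},\vx_{n0})$.

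\textbf{Step 1: rewrite as a Rademacher chaos.} With $\mH_{\vh_n}=\text{circ}(\vh_n)$ and $\mX_{\vx_n}=\text{diag}(\mC_n\vx_n)$ as before, and by bilinearity, the measurement is
\[
\sum_{n=1}^N \setA_n\bigl(\mZ_n(\Delta\vh_n,\vx_n)+\mZ_n(\vh_n,\Delta\vx_n)\bigr)=\sum_{n=1}^N \bigl(\mH_{\Delta\vh_n}\mX_{\vx_n}+\mH_{\vh_n}\mX_{\Delta\vx_n}\bigr)\vr_n .
\]
The $\vr_n$ are independent, standard Rademacher, and the matrices are deterministic. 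Exactly as in \eqref{eq:expected-value}--\eqref{eq:distance-vectors-to-matrices}, the expectation of the squared norm equals $\sum_n\|\mZ_n(\Delta\vh_n,\vx_n)+\mZ_n(\vh_n,\Delta\vx_n)\|_F^2$. This uses $\mC_n^*\mC_n=\mI_K$, $\mF_M^*\mF_M=\mI_M$, and the unimodular entries of $\sqrt{L}\mF_Q$. The claim is therefore a uniform deviation bound for the second-order chaos over the set
\[
\setG':=\bigl\{[\mH_{\Delta\vh_n}\mX_{\vx_n}+\mH_{\vh_n}\mX_{\Delta\vx_n}]^{\otimes N}\bigr\},
\]
and Theorem \ref{thm:Mendelson} applies.

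\textbf{Step 2: diameters.} Lemma \ref{lem:local-regulaity-Del-norm-bounds} gives $\sqrt{L}\|\mF_M\Delta\vh_n\|_\infty\le 6\mu\sqrt{d_{n0}}$ and $\sqrt{Q}\|\mC_n\Delta\vx_n\|_\infty\le 6\nu\sqrt{d_{n0}}$. Membership in $\setN_\mu\cap\setN_\nu$ gives the bounds $4\mu\sqrt{d_{n0}}$ and $4\nu\sqrt{d_{n0}}$ for $\vh_n$ and $\vx_n$. Together these yield
\[
d_{2\to2}(\setG')\le \max_n 48\mu\nu d_{n0}/\sqrt{Q}.
\]
For the Frobenius diameter, the triangle-inequality argument in the proof of Lemma \ref{lem:local-regularity-F} (using the $\Delta$-norm bounds) gives $d_F(\setG')\le 1.2\,\delta d_0$. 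So both diameters have the same order as in Lemma \ref{lem:local-RIP}, up to constants.

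\textbf{Step 3: the $\gamma_2$ functional.} This is the step I expect to be the main obstacle. The set $\setG'$ is parametrized by four vectors $(\vh,\vx,\Delta\vh,\Delta\vx)$, which are coupled through $\alpha_n$. I would avoid that coupling by enlarging $\setG'$ to the decoupled set in which $\Delta\vh$ and $\Delta\vx$ range independently over the incoherent balls given by the bounds of Lemma \ref{lem:local-regulaity-Del-norm-bounds}. Since $\gamma_2$ is monotone under inclusion, this is harmless.

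The map $(\vh,\vx,\Delta\vh,\Delta\vx)\mapsto \mH_{\Delta\vh}\mX_{\vx}+\mH_{\vh}\mX_{\Delta\vx}$ is bilinear in each of its two summands. So the perturbation estimate \eqref{eq:distance} splits into four terms, each of the form (sup-norm bound of one factor)$\times$(exotic-norm distance of the other factor), with the exotic norms $\|\cdot\|_c$ and $\|\cdot\|_f$ of \eqref{eq:exotic-norms}. The covering number of $\setG'$ at scale $\epsilon$ is then bounded by a product of four covering numbers, each at scale $\epsilon\sqrt{Q}/(c\mu\sqrt{N}d_{n0})$ or $\epsilon\sqrt{Q}/(c\nu\sqrt{N}d_{n0})$, of $B_2^{KN}$ or $B_2^{MN}$. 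The Dudley integral is then handled exactly as before, via $B_2\subseteq\sqrt{\dim}\,B_1$ and the entropy bounds of \cite{rauhut2010compressive}. This gives at most twice the previous bound:
\[
\gamma_2(\setG',\|\cdot\|_{2\to2})\lesssim \frac{\kappa d_0}{\sqrt{Q}}\sqrt{(\mu^2\nu_{\max}^2K+\nu^2M)N\log^4(LN)} .
\]

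\textbf{Step 4: conclude.} I will plug these bounds into $W$, $V$, $U$ of Theorem \ref{thm:Mendelson} and take $t=\tfrac12\xi\delta^2d_0^2$. Under \eqref{eq:sample-complexity}, the term $c_1W$ is at most $\tfrac12\xi\delta^2d_0^2$, and the tail term is $2\exp(-c\xi^2\delta^2QN/\mu^2\nu^2\kappa^2)$. The only changes relative to Lemma \ref{lem:local-RIP} are absolute constants, which are absorbed into $c$. This gives the stated deviation $\xi\delta^2d_0^2$ with the stated probability.
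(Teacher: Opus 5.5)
Your proposal follows the paper's proof in all essentials. It uses the same reduction to a Rademacher chaos over the matrices $\mG_n:=\mH_{\Delta\vh_n}\mX_{\vx_n}+\mH_{\vh_n}\mX_{\Delta\vx_n}$ and the same operator-norm diameter from the $6\mu\sqrt{d_{n0}}$ and $6\nu\sqrt{d_{n0}}$ bounds of Lemma~\ref{lem:local-regulaity-Del-norm-bounds}. It also uses the same exotic-norm covering and Dudley integral, and applies Theorem~\ref{thm:Mendelson} with the same $t=\tfrac12\xi\delta^2d_0^2$.

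The one genuine difference is in the covering step, and there your version is the sounder one. The paper bounds the distance between two elements of $\setG$ only through $\|\vh-\tilde{\vh}\|_f$ and $\|\vx-\tilde{\vx}\|_c$. This silently replaces $\Delta\mH_{\vh_n}-\Delta\mH_{\tilde{\vh}_n}$ by $\mH_{\vh_n}-\mH_{\tilde{\vh}_n}$, and likewise for $\Delta\mX$. That replacement is valid only if both points share the same $\alpha_n$. But $\alpha_n$ depends on $(\vh_n,\vx_n)$ and even switches formula according to whether $\|\vh_n\|_2\ge\|\vx_n\|_2$. So the discarded terms $(\alpha_n-\tilde{\alpha}_n)\mH_{\vh_{n0}}$ and $(\bar{\alpha}_n^{-1}-\bar{\tilde{\alpha}}_n^{-1})\mX_{\vx_{n0}}$ admit no Lipschitz control in $(\vh,\vx)$.

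Your enlargement lets $\vh,\vx,\Delta\vh,\Delta\vx$ range independently over the $\ell_2$/incoherence balls of Lemma~\ref{lem:local-regulaity-Del-norm-bounds}. This makes the parametrization bilinear with bounded factors, hence Lipschitz in the exotic norms. You combine it with monotonicity of $\gamma_2$ under inclusion, and you compute $d_F\le 1.2\,\delta d_0$ and $d_{2\to2}$ on the original set. The price is only four covering numbers instead of two, i.e.\ an absolute constant.

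One caveat, which you inherit from the paper's proof of Lemma~\ref{lem:local-RIP} rather than introduce. $\|\sum_n\mG_n\vr_n\|_2^2$ is the chaos of the row-concatenation $[\mG_1\ \cdots\ \mG_N]$ acting on the stacked sign vector. It is not the chaos of the block-diagonal $[\mG_n]^{\otimes N}$, which gives $\sum_n\|\mG_n\vr_n\|_2^2$ and ignores the cross terms $2\,\mathrm{Re}\langle\mG_n\vr_n,\mG_m\vr_m\rangle$, $n\neq m$.

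For the concatenation, the operator norm (and the operator-norm distance) is in general only bounded by $\sqrt{N}\max_n\|\mG_n\|_{2\to2}$. So the argument as written justifies your $d_{2\to2}$ and $\gamma_2$ bounds only up to an extra factor $\sqrt{N}$. Propagated through Theorem~\ref{thm:Mendelson}, that would cost a factor $N$ in the required $Q$ and remove the $N$ from the tail exponent. Neither your proposal nor the paper addresses this point.
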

\textbf{Proof of Lemma \ref{lem:local-Delh-Delx-RIP}}\label{sec:local-Delh-Delx-RIP}

Already $\Delta \vh_n$, and $\Delta \vx_n$ are defined in \eqref{eq:Deltah-Deltax}, we also define $\Delta \mH_{\vh_n} = \mH_{\vh_n} - \alpha \mH_{\vh_{n0}}$, and $\Delta \mX_{\vx_n} = \mX_{\vx_n}-\bar{\alpha}^{-1}\mX_{\vx_{n0}}$. It is easy to show that $\sum_{n=1}^N \|\Delta \vh_n \vx_n^* + \vh_n \Delta \vx_n^*\|_F^2 = \sum_{n=1}^N \|\Delta \mH_{\vh_n} \mX_{\vx_n} + \mH_{\vh_n} \Delta \mX_{\vx_n}\|_F^2$ by using same steps as in \eqref{eq:distance-vectors-to-matrices}. Similar to the proof of Lemma \ref{lem:local-RIP}, the local-RIP in Lemma \ref{lem:local-Delh-Delx-RIP} holds with high probability and reduces to 
\begin{align*}
\sup_{\mH_{\vh_n} \in \setH}  \sup_{\mX_{\vx_n} \in \setX}\bigg| \Big \| \sum_{n=1}^N (\Delta \mH_{\vh_n} \mX_{\vx_n} + \mH_{\vh_n} \Delta \mX_{\vx_n})\vr_n \Big\|_2^2  & - \E  \Big\|\sum_{n=1}^N (\Delta \mH_{\vh_n} \mX_{\vx_n} + \mH_{\vh_n} \Delta \mX_{\vx_n})\vr_n \Big\|_2^2\bigg| \notag \\
& \leq \xi \sum_{n=1}^N \|\Delta \mH_{\vh_n} \mX_{\vx_n} + \mH_{\vh_n} \Delta \mX_{\vx_n}\|_{F}^2
\end{align*}
where a $0 < \xi < 1$. Just as $\setH$, and $\setX$ are defined in \eqref{eq:setX-setH}, define a set 
\begin{align}\label{eq:setS-2}
\setG =\{(\Delta \mH_{\vh_n} \mX_{\vx_n}  + \mH_{\vh_n} \Delta \mX_{\vx_n})^{\otimes N} \ |  \  \mH_{\vh_n}\in \setH, \mX_{\vx_n}\in \setX, (\vh_n, \vx_n) \in \setN_{\varepsilon}\},
\end{align}
 Let $(\Delta \mH_{\vh_n} \mX_{\vx_n} + \mH_{\vh_n} \Delta \mX_{\vx_n})^{\otimes N},$ and  $(\Delta \mH_{\tilde{\vh}_n} \mX_{\tilde{\vx}_n} + \mH_{\tilde{\vh}_n} \Delta \mX_{\tilde{\vx}_n})^{\otimes N}$ belongs to  $\setG$, and $(\Delta \mH_{\vh_n} \mX_{\vx_n} + \mH_{\vh_n} \Delta \mX_{\vx_n})^{\otimes N}-(\Delta \mH_{\tilde{\vh}_n} \mX_{\tilde{\vx}_n} + \mH_{\tilde{\vh}_n} \Delta \mX_{\tilde{\vx}_n})^{\otimes N} = (\Delta \mH_{\vh_n} - \Delta \mH_{\tilde{\vh}_n})^{\otimes N}\mX_{\tilde{\vx}_n}^{\otimes N} + \Delta \mH_{\vh_n}^{\otimes N} (\mX_{\vx_n}-\mX_{\tilde{\vx}_n})^{\otimes N} + (\mH_{\vh_n}-\mH_{\tilde{\vh}_n})^{\otimes N} \Delta \mX_{\vx_n}^{\otimes N} + \mH_{\tilde{\vh_n}}^{\otimes N}(\Delta\mX_{\vx_n}-\Delta\mX_{\tilde{\vx}_n})^{\otimes N}$, which gives 
\begin{align*}
\|(\Delta \mH_{\vh_n} & \mX_{\vx_n} + \mH_{\vh_n} \Delta \mX_{\vx_n})^{\otimes N} - (\Delta \mH_{\tilde{\vh}_n} \mX_{\tilde{\vh}_n} + \mH_{\tilde{\vh}_n} \Delta \mX_{\tilde{\vx}_n})^{\otimes N}\|_{2 \rightarrow 2} \\
&\leq \|(\mH_{\vh_n}-\mH_{\tilde{\vh}_n})^{\otimes N}\|_{2 \rightarrow 2} \|\mX_{\tilde{\vx}_n}^{\otimes N}\|_{2 \rightarrow 2}    + \|\Delta \mH_{\vh_n}^{\otimes N}\|_{2 \rightarrow 2}  \|(\mX_{\vx_n}-\mX_{\tilde{\vx}_n})^{\otimes N}\|_{2 \rightarrow 2}\\
& \qquad \qquad  +\|(\mH_{\vh_n}-\mH_{\tilde{\vh}_n})^{\otimes N}\|_{2 \rightarrow 2} \|\Delta \mX_{\vx_n}^{\otimes N}\|_{2 \rightarrow 2}  + \|\mH_{\tilde{\vh}_n}^{\otimes N}\|_{2 \rightarrow 2}  \|(\mX_{\vx_n}-\mX_{\tilde{\vx}_n})^{\otimes N}\|_{2 \rightarrow 2}  \\
&= \sqrt{L}\big(\|\mF_{M}^{\otimes N}(\vh-\tilde{\vh})\|_\infty \|\mC_n^{\otimes N}\tilde{\vx}\|_\infty   + \|\mF_{M}^{\otimes N}(\Delta \vh)\|_\infty  \|\mC_n^{\otimes N}(\vx-\tilde{\vx})\|_\infty  +\|\mF_{M}^{\otimes N}(\vh-\tilde{\vh})\|_\infty \|\mC_n^{\otimes N} (\Delta\vx)\|_\infty \notag \\ & \qquad \qquad  +\|\mF_{M}^{\otimes N}\tilde{\vh}\|_\infty  \|\mC_n^{\otimes N}(\vx-\tilde{\vx})\|_\infty\big),\\ &
\leq  \max_n  \Big(  \frac{4\nu\sqrt{d_{n0}}}{\sqrt{Q}} \sqrt{L}\|\mF_{M}^{\otimes N}(\vh-\tilde{\vh})\|_\infty +   6\mu\sqrt{d_{n0}}\|\mC_n^{\otimes N} (\vx-\tilde{\vx})\|_\infty     +  \frac{6\nu\sqrt{d_{n0}}}{\sqrt{Q}}  \sqrt{L}\|\mF_{M}^{\otimes N}(\vh-\tilde{\vh})\|_\infty  \\
& \qquad +  4\mu\sqrt{d_{n0}} \|\mC_n^{\otimes N}(\vx-\tilde{\vx})\|_\infty \Big) \\
&=  \max_n \frac{10\nu\sqrt{d_{n0}}}{\sqrt{Q}}  \|\vh-\tilde{\vh}\|_f+ \max_n \frac{10\mu\sqrt{d_{n0}}}{\sqrt{Q}} \|\vx-\tilde{\vx}\|_c,
\end{align*}

where to obtain second-last inequality we assumed $\varepsilon \leq 1/15$, and the elements $(\vh_n,\vx_n)$, $(\tilde{\vh}_n,\tilde{\vx}_n)$ of $\setG$  belongs to  $\setN_{d_0} \cap \setN_\mu \cap \setN_{\nu} \cap \setN_{\varepsilon}$, so using Lemma \ref{lem:local-regulaity-Del-norm-bounds} gives $\sqrt{L}\|\mF_{M}^{\otimes N}\Delta\vh\|_\infty \leq \max_n 6 \mu \sqrt{d_{n0}}$, and $\sqrt{Q}\|\mC_n^{\otimes N} \Delta \vx\|_\infty \leq \max_n 6 \nu \sqrt{d_{n0}}$. In the last equality, we used the $\|\cdot\|_c$, $\|\cdot\|_f$ norms as defined in \eqref{eq:exotic-norms}. The remaining proof follows precisely the same steps as Lemma \ref{lem:local-RIP}.


\subsection{Noise robustness}

A bound on $\|\setA^*(\ve)\|_{2 \rightarrow 2}$ that appear in Lemma \ref{lem:local-regularity} is given by noise robustness condition as stated in lemma below. 

\begin{lem}\label{lem:noise-stability}
	For fix $\alpha \geq 1$, $\|\setA\|_{2 \rightarrow 2} \leq c_\alpha\sqrt{ NK \log (L)}$ holds with minimum probability $1-\setO(L^{-\alpha})$. Additionally, suppose additive noise $\ve \sim \text{Normal}(\boldsymbol{0},\frac{\sigma^2d_0^2}{2L} \mI_{L}) + \iota \text{Normal}(\boldsymbol{0},\frac{\sigma^2d_0^2}{2L} \mI_{L})\in \C^{L}$, with minimum probability $1-\setO(L^{-\alpha})$
	\begin{align}\label{eq:A(e)-bound}
	\|\setA^*(\ve) \|_{2 \rightarrow 2} \leq \tfrac{2\varepsilon}{50N \kappa} d_0
	\end{align}
	holds whenever $L \geq  \frac{\sigma^2 N^2 \kappa^2}{\varepsilon^2} c_\alpha^\prime \max (M,K\log(L))\log(L)$ where $c_\alpha$, and $c^\prime_\alpha$ are absolute constants that depends on $\alpha$. 
\end{lem}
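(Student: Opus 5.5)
The plan splits Lemma \ref{lem:noise-stability} into two parts: a bound on $\|\setA\|_{2\rightarrow 2}$, and a bound on $\|\setA^*(\ve)\|_{2\rightarrow 2}$ conditional on the realization of the Rademacher sequences $\{\vr_n\}$. For the first part, view $\setA:\C^{N\times(M\times K)}\to\C^L$ as the concatenation $[\setA_1,\dots,\setA_N]$. Then $\|\setA\|_{2\rightarrow 2}^2=\|\sum_n \setA_n\setA_n^*\|$, and $\setA_n\setA_n^*$ acts on $\vz\in\C^L$ by $\ell\mapsto \sum_{\ell'}\vf_\ell^*\vf_{\ell'}\,\hat{\vc}_{n\ell'}^*\hat{\vc}_{n\ell}\, z[\ell']$. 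Bounding crudely,
\[
\|\setA_n\|_{2\rightarrow 2}\le \max_\ell\|\hat{\vc}_{n\ell}\|_2\,\|\mF_M\|_{2\rightarrow2}\le \max_\ell\|\hat{\vc}_{n\ell}\|_2 .
\]
Since $\hat{\vc}_{n\ell}^*=\sqrt{L}\,\vf_{Q,\ell}^*\mR_n\mC_n$ with $\vf_{Q,\ell}$ having entries of modulus $1/\sqrt{L}$, each coordinate $\hat{\vc}_{n\ell}[k]=\sum_q e^{\iota\theta_{\ell q}}r_n[q]C_n[q,k]$ is a Rademacher sum with variance $\|\mC_n\vct{e}_k\|_2^2=1$. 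Hoeffding's inequality plus a union bound over $(\ell,k,n)$ then gives $\|\hat{\vc}_{n\ell}\|_2^2\lesssim K\log(LN)$ simultaneously with probability $1-\setO(L^{-\alpha})$. I will also absorb the sum over $n$: $\|\setA\|^2\le\sum_n\|\setA_n\|^2\lesssim NK\log L$. This yields $\|\setA\|_{2\rightarrow2}\le c_\alpha\sqrt{NK\log L}$.

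For the noise term, note $\setA^*(\ve)=\mathrm{blockdiag}\big(\setA_n^*(\ve)\big)$, so $\|\setA^*(\ve)\|_{2\rightarrow2}=\max_n\|\setA_n^*(\ve)\|$ with $\setA_n^*(\ve)=\sum_\ell e[\ell]\,\vf_\ell\hat{\vc}_{n\ell}^*$. Conditioned on $\{\vr_n\}$, this is a Gaussian matrix series $\sum_\ell g_\ell \mtx{B}_\ell$ with $\mtx{B}_\ell=\frac{\sigma d_0}{\sqrt{2L}}\vf_\ell\hat{\vc}_{n\ell}^*$ (real and imaginary parts treated separately). I will apply the matrix Gaussian series (noncommutative Khintchine / matrix Bernstein) inequality, which requires the variance parameter $\max\{\|\sum_\ell \mtx{B}_\ell\mtx{B}_\ell^*\|,\|\sum_\ell\mtx{B}_\ell^*\mtx{B}_\ell\|\}$.

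\begin{itemize}
\item First term: $\sum_\ell\|\hat{\vc}_{n\ell}\|^2\vf_\ell\vf_\ell^*\preceq \max_\ell\|\hat{\vc}_{n\ell}\|^2\,\mF_M^*\mF_M$, giving a bound of order $\frac{\sigma^2d_0^2}{L}K\log L$.
\item Second term: $\sum_\ell\|\vf_\ell\|^2\hat{\vc}_{n\ell}\hat{\vc}_{n\ell}^*=\frac{M}{L}\,\mC_n^*\mR_n\mF_Q^*\mF_Q\mR_n\mC_n\cdot L$. Using $\mF_Q^*\mF_Q=\mI_Q$, this equals $M\,\mC_n^*\mC_n=M\mI_K$, so the contribution is $\frac{\sigma^2d_0^2M}{2L}$.
\end{itemize}

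This leads to
\[
\|\setA_n^*(\ve)\|\lesssim \sigma d_0\sqrt{\frac{\max(M,K\log L)\,\log(L)}{L}}
\]
with probability $1-\setO(L^{-\alpha})$. A union bound over $n\le N$ costs only a $\log N\lesssim\log L$ factor. Requiring the right side to be at most $\frac{2\varepsilon d_0}{50N\kappa}$ gives exactly the stated condition $L\ge c'_\alpha\sigma^2N^2\kappa^2\varepsilon^{-2}\max(M,K\log L)\log L$.

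The main obstacle is the uniform control of $\max_\ell\|\hat{\vc}_{n\ell}\|_2$ over the random modulations. The rows of $\sqrt{L}\mF_Q\mR_n\mC_n$ are dependent across $\ell$, so I must rely on the union bound rather than independence, and make sure the resulting $\log(LN)$ factor is consistent with the stated $K\log L$. If the coordinatewise Hoeffding plus union bound proves too lossy, I will instead bound $\|\hat{\vc}_{n\ell}\|_2$ via a Rademacher-chaos/Hanson–Wright tail for $\|\mC_n^*\mR_n\vf_{Q,\ell}\|^2$, whose mean is $K$.
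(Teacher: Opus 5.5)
Your proposal is correct and follows the paper's proof. The paper also first bounds $\max_{\ell,n}\|\hat{\vc}_{\ell,n}\|_2\lesssim\sqrt{K\log L}$ over the random signs to get $\|\setA\|_{2\rightarrow 2}\lesssim\sqrt{NK\log L}$. It then conditions on $\{\vr_n\}$, applies the matrix Gaussian series inequality of Proposition~\ref{prop:conc_ineq}, and computes the two variance terms exactly as you do (using $\|\vf_\ell\|_2^2=M/L$, $\sum_\ell\vf_\ell\vf_\ell^*=\mI_M$, $\sum_\ell\hat{\vc}_{\ell,n}\hat{\vc}_{\ell,n}^*=L\mI_K$) to reach $\sigma_Z^2\lesssim\frac{\sigma^2d_0^2}{L}\max(M,K\log L)$.

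The differences are minor and confined to the first part. The paper bounds $\|\hat{\vc}_{\ell,n}\|_2$ by applying the same matrix Rademacher inequality to the vector series $\sum_q r_n[q]\sqrt{L}f_\ell[q]\vc_{q,n}$, rather than by coordinatewise Hoeffding plus a union bound. It then passes to $\|\setA\|_{2\rightarrow 2}$ via an orthogonality claim across $\ell$ that does not hold in general when $M<L$ (rows of $\mF_M$ are not orthogonal). Your Cauchy--Schwarz step $\|\setA_n(\mZ)\|_2\le\max_\ell\|\hat{\vc}_{n\ell}\|_2\,\|\mF_M\mZ\|_F$ is therefore the sounder way to make that passage.
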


\textbf{Proof of Lemma \ref{lem:noise-stability}}\label{sec:noise-stability}
Proof of this lemma consists of two parts: first we find a bound on operator norm of the linear map $\|\setA\|_{2 \rightarrow 2}$, and after that we prove the claim stated in the second part of lemma.

So we start by proving the first claim of the lemma. Let $\vf_\ell$ are the columns of $\mF_M^*$, and $\hat{\vc}_{\ell,n}^*$ are rows of $\sqrt{L}(\mF_Q\mR_n \mC_n)$ then $\< \sum_{n=1}^N \vf_\ell\hat{\vc}_{\ell,n}^*, \sum_{n=1}^N \vf_{\ell^\prime}\hat{\vc}_{\ell^\prime,n^\prime}^*\> = 0$ once $\ell \neq \ell^\prime$ so $\|\setA\|_{2 \rightarrow 2} = \max_{\ell} \| \sum_{n=1}^N \vf_\ell\hat{\vc}_{\ell,n}^*\|_F $. We see that $\| \sum_{n=1}^N \vf_\ell\hat{\vc}_{\ell,n}^*\|_F =  \| \sum_{n=1}^N \hat{\vc}_{\ell,n}\|_2$ all that's needed is an upper constraint on $\max_{\ell} \|\hat{\vc}_{\ell,n}\|_2$.

As already defined, Rademacher random variables of length $Q$ represented as $\vr_n$ so we can write a diagonal matrix $\mR_n = \text{diag}(\vr_n)$. Let $\vc_{q,n}^*$ are the rows of $\mC_n$ and $q$th entry of $\vr_n$ is $r_n[q]$ so we can write
	\begin{align*}
	\hat{\vc}_{\ell,n} = \sqrt{L}\sum_{q=1}^Q f_{\ell}[q]r_n[q]\vc_{q,n}.
	\end{align*}
By using Proposition \ref{prop:conc_ineq}, we obtain an upper bound on $\|\hat{\vc}_{\ell,n}\|_2$ assuming $\sqrt{L}\{f_{\ell}[q]r_n[q]\vc_{q,n}\}_{q=1}^Q$ is sequence $\{\mG_k\}$. We know that $\mC_n$ is orthonormal matrix of dimension $Q \times K$ so $\sum_{q=1}^Q \|\vc_{q,n}\|_2^2 = K$ and $\sum_{q=1}^Q \vc_{q,n}\vc_{q,n}^* = \mI$. By using \eqref{eq:variance-conc-ineq}, calculated variance is $\sigma_Z^2 \leq K+1$. Using \eqref{eq:conc-ineq-bound} we can write that with minimum probability $1-\setO(L^{-\alpha})$
	\begin{align}\label{eq:hatbln-2norm-bound}
	\max_{\ell,n}\|\hat{\vc}_{\ell,n}\|_2  \leq \sqrt{\alpha K\log(L)}
	\end{align}
where $ t^2 = \alpha K \log(L)$ and $\alpha \geq 1$. 
 This leads to 
	\begin{align}\label{eq:hatb_ln-2norm-bound}
	\|\setA\|  \leq \sqrt{\alpha NK \log(L)}
	\end{align}
	with probability at least $1-\setO(L^{-\alpha})$.
	
	Now we prove the second claim, by using the normal Gaussian random variables $p[\ell] \sim \text{Normal}(0,\tfrac{1}{2}) + \iota \text{Normal}(0,\tfrac{1}{2})$ to rewrite the Gaussian random variables $\hat{e}[\ell]$. So, 
	\begin{align*}
	\setA_n^*(\ve) =  \sum_{\ell=1}^L   \hat{e}[\ell]\hat{\vc}_{\ell,n}\vf^*_\ell =   \frac{\sigma d_0}{\sqrt{L}} \sum_{\ell=1}^L g[\ell] \hat{\vc}_{\ell,n}\vf^*_\ell,
	\end{align*}

To control $\|\setA_n^*(\ve)\|_{2 \rightarrow 2}$ we use matrix concentration inequality in Proposition \ref{prop:conc_ineq}. Matrices  $\{\hat{\vc}_{\ell,n}\vf^*_\ell\}_{\ell,n}$ are represented as $\{\mZ_k\}$  in Proposition \ref{prop:conc_ineq}. Now, variance in \eqref{eq:variance-conc-ineq} is 
	\begin{align*}
	\sigma^2_Z = \frac{\sigma^2 d_0^2}{L} \max_n \Bigg\{\left\|\sum_{\ell=1}^L  \|\vf_\ell\|_2^2 \hat{\vc}_{\ell,n} \hat{\vc}_{\ell,n}^*\right\|_{2 \rightarrow 2},  \left\|\sum_{\ell=1}^L  \|\hat{\vc}_{\ell,n}\|_2^2 \vf_{\ell} \vf_{\ell}^*\right\|_{2 \rightarrow 2} \Bigg\}.
	\end{align*}
 
	As we assumed that $\vf_\ell$ are the columns of $\mF_M^*$, rows of $\sqrt{L}(\mF_Q\mR_n \mC_n)$ are denoted as $\hat{\vc}_{\ell,n}^*$, so 
	\begin{align*}
	&\|\vf_\ell\|_2^2 = \frac{M}{L}, \ \sum_{\ell=1}^L \vf_\ell\vf_\ell^* = \mI,\\
	&\sum_{\ell=1}^L \hat{\vc}_{\ell,n} \hat{\vc}_{\ell,n}^* = L (\mC_n^*\mR_n^*(\mF_Q)^*\mF_Q \mR_n\mC_n) = L \mI_{K\times K}. 
	\end{align*}
	Using the above equations with with \eqref{eq:hatbln-2norm-bound}, upper bound on the variance is 
	\begin{align*}
	\sigma^2_Z \leq \frac{\sigma^2d_0^2}{L} \max \left( M, \alpha K \log(L) \right). 
	\end{align*}
	Using  inequality \eqref{eq:conc-ineq-bound} in Proposition \ref{prop:conc_ineq} and $t = \tfrac{2\varepsilon d_0}{50N\kappa}$, gives 
	\[
	L \geq \frac{\sigma^2 N^2 \kappa^2}{\varepsilon^2} c^\prime_{\alpha} \max(M,\alpha K \log(L))\log(L)
	\]
	 which brings the proof to a completion.

\begin{prop}
	[In \cite{tropp12us}: Corollary 4.2 ]\label{prop:conc_ineq} Let $\{\mG_i\}$ be a $d_1 \times d_2$ dimensional finite series of fixed matrices, and consider a finite series $\{p_i\}$ of independent random variables, which can be either Gaussian or Rademacher. We define the variance as

	\begin{align}\label{eq:variance-conc-ineq}
	\sigma_G^2 := \max \left\{ \left\|\sum_i \mG_i\mG_i^*\right\|_{2 \rightarrow 2}, \left\|\sum_i \mG_i^*\mG_i\right\|_{2 \rightarrow 2} \right\}. 
	\end{align}
	Subsequently, for all $t \geq 0$
	\begin{align}\label{eq:conc-ineq-bound}
	\mathbb{P} \left(\left\| \sum_i p_i \mG_i\right\|_{2 \rightarrow 2} \geq \ \ t \right) \leq (d_1+d_2)\mathrm{e}^{-t^2/2\sigma_G^2}.
	\end{align}
\end{prop}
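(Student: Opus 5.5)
The statement to prove is Proposition \ref{prop:conc_ineq}, the matrix Gaussian/Rademacher series bound. I would prove it with the matrix Laplace transform method (Ahlswede--Winter/Tropp), after a Hermitian dilation that reduces the rectangular case to the self-adjoint one.

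\textbf{Step 1: Hermitian dilation.} For each $\mG_i\in\C^{d_1\times d_2}$, set
\[
\mathcal{D}(\mG_i)=\begin{bmatrix}\mathbf{0} & \mG_i\\ \mG_i^* & \mathbf{0}\end{bmatrix}\in\C^{(d_1+d_2)\times(d_1+d_2)}.
\]
This matrix is Hermitian, and $\lambda_{\max}(\mathcal{D}(\mtx{A}))=\|\mtx{A}\|_{2\rightarrow 2}$. Its square is $\mathcal{D}(\mG_i)^2=\mathrm{diag}(\mG_i\mG_i^*,\mG_i^*\mG_i)$, so $\|\sum_i\mathcal{D}(\mG_i)^2\|_{2\rightarrow 2}=\sigma_G^2$ as defined in \eqref{eq:variance-conc-ineq}. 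By linearity $\mathcal{D}(\sum_i p_i\mG_i)=\sum_i p_i\mathcal{D}(\mG_i)$. It therefore suffices to show, for Hermitian $\mtx{A}_i$ of size $d=d_1+d_2$ with $\sigma^2=\|\sum_i\mtx{A}_i^2\|$, that
\[
\PP\Big(\lambda_{\max}\big(\textstyle\sum_i p_i\mtx{A}_i\big)\geq t\Big)\leq d\,e^{-t^2/2\sigma^2}.
\]

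\textbf{Step 2: Laplace transform bound.} For $\theta>0$, Markov's inequality and $e^{\theta\lambda_{\max}(\mtx{Y})}=\lambda_{\max}(e^{\theta\mtx{Y}})\leq\mathrm{tr}\,e^{\theta\mtx{Y}}$ give
\[
\PP\big(\lambda_{\max}(\mtx{Y})\geq t\big)\leq e^{-\theta t}\,\E\,\mathrm{tr}\exp(\theta\mtx{Y}).
\]

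\textbf{Step 3: Subadditivity of cumulants (the main obstacle).} This is where noncommutativity bites: the product rule $\E e^{X+Y}=\E e^X\E e^Y$ fails for matrices. I would invoke Lieb's concavity theorem, namely that $\mtx{A}\mapsto\mathrm{tr}\exp(\mtx{H}+\log\mtx{A})$ is concave on positive definite matrices. Applying it with Jensen's inequality, conditioning on one summand at a time, yields
\[
\E\,\mathrm{tr}\exp\Big(\textstyle\sum_i\theta p_i\mtx{A}_i\Big)\leq\mathrm{tr}\exp\Big(\textstyle\sum_i\log\E\, e^{\theta p_i\mtx{A}_i}\Big).
\]

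\textbf{Step 4: Scalar mgf bound lifted to matrices.} Each $p_i$ is either standard Gaussian or Rademacher, so its moments are dominated termwise by the Gaussian ones. Expanding the power series then gives $\E e^{\theta p_i\mtx{A}_i}\preceq e^{\theta^2\mtx{A}_i^2/2}$; in the Gaussian case this is equality, and in the Rademacher case it follows from $\cosh x\leq e^{x^2/2}$ applied through the spectral decomposition. Because the matrix logarithm is operator monotone, $\log\E e^{\theta p_i\mtx{A}_i}\preceq\theta^2\mtx{A}_i^2/2$. Monotonicity of $\mtx{H}\mapsto\mathrm{tr}\exp(\mtx{H})$ in the semidefinite order then gives
\[
\mathrm{tr}\exp\Big(\tfrac{\theta^2}{2}\textstyle\sum_i\mtx{A}_i^2\Big)\leq d\,e^{\theta^2\sigma^2/2}.
\]

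\textbf{Step 5: Optimize $\theta$.} Combining Steps 2--4 gives the bound $d\,e^{-\theta t+\theta^2\sigma^2/2}$. Choosing $\theta=t/\sigma^2$ minimizes the exponent and produces $d\,e^{-t^2/2\sigma^2}$. With $d=d_1+d_2$ and the dilation of Step 1, this is exactly \eqref{eq:conc-ineq-bound}.
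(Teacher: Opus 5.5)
Your proof is correct and follows the argument behind the cited result. Tropp proves the Hermitian case (his Theorem 4.1) using the matrix Laplace transform, Lieb's concavity theorem for subadditivity of matrix cumulant generating functions, and the bound $\E e^{\theta p\mtx{A}}\preceq e^{\theta^2\mtx{A}^2/2}$ for Gaussian or Rademacher $p$; he then gets Corollary 4.2 from the Hermitian dilation of dimension $d_1+d_2$, exactly as you do. The paper gives no proof of its own and simply imports the statement from \cite{tropp12us}, so your route is the one the paper relies on.
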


We now define the bounds on $F(\vh,\vx)$, given in (8), using local-RIP and noise robustness conditions. Using triangle inequality on $F(\vh,\vx)$ gives
\begin{align*}
 - 2 \|\setA^*(\ve)\|_{2 \rightarrow 2}  \|\mZ(\vh,\vx)  -\mZ(\vh_0,\vx_0)\|_* & \leq  
 F(\vh,\vx)  -\|\ve\|_2^2  - \|\setA(\mZ(\vh,\vx) - \mZ(\vh_0,\vx_0))\|_2^2 \notag\\ & \leq  2 \|\setA^*(\ve)\|_{2 \rightarrow 2}\|\mZ(\vh,\vx)-\mZ(\vh_0,\vx_0)\|_*.
\end{align*}
where $\|\mZ(\vh,\vx)-\mZ(\vh_0,\vx_0)\|_F : = \delta d_0$. We get $\|\mZ(\vh,\vx)-\mZ(\vh_0,\vx_0)\|_* \leq \sqrt{2N} \|\mZ(\vh,\vx)-\mZ(\vh_0,\vx_0)\|_F = \sqrt{2N}\delta d_0$ by using that matrix has rank $2N$ and $\|\vh_{n0}\|_2 = \|\vx_{n0}\|_2 = \sqrt{d_{n0}}$ from Lemma \ref{lem:local-regulaity-Del-norm-bounds}.
Using Lemma \ref{lem:local-RIP} and \ref{lem:noise-stability},  with $\xi = \frac{1}{4}$, gives
\begin{align}\label{eq:F(h,x)-lower-upper-bound}
\|\ve\|_2^2 +\frac{3}{4} \delta^2 d_0^2 - \frac{\varepsilon\delta d_0^2}{5\sqrt{N}\kappa} \leq F(\vh,\vx) \leq \|\ve\|_2^2 +\frac{5}{4} \delta^2 d_0^2 + \frac{\varepsilon\delta d_0^2}{5\sqrt{N}\kappa}.
\end{align}


\section{Proof of Theorem 1}\label{sec:convergence}
We have stated all the conditions required to prove theorems. Now we prove Theorem 1 stated in section IV of the paper. 
\begin{proof}
At the $t$-th iteration of gradient descent algorithm, $\{\vw_n^t\}_{n=1}^N = (\vu_n^t,\mathbf{v}_n^t)$ where $\vw_n = \vu_n \mathbf{v}_n^*$ and $\delta_n(\vw_n^t) = \|\vw_n^t-\vh_{n0}\vx_{n0}^*\|_F/d_{n0}$.
Initial guess $\{\vw_n^0 \}_{n=1}^N :=(\vu_n^0,\mathbf{v}_n^0) \in \tfrac{1}{\sqrt{3}} \setN_{d_0} \cap \tfrac{1}{\sqrt{3}} \setN_{\mu} \cap \tfrac{1}{\sqrt{3}} \setN_{\nu} \cap \setN_{\frac{2\varepsilon}{5\sqrt{N}\kappa}}$, so $G(\vu^0,\mathbf{v}^0) = 0$. To check this, let $(\vu_n^0,\mathbf{v}_n^0) \in \frac{1}{\sqrt{3}}\setN_{\nu}$ then 
\begin{align*}
\frac{Q|\vc_{q,n}^*\vx_n|^2}{8d_n\nu^2} \leq \frac{Q}{8d_n\nu^2} \cdot \frac{16d_{n0}\nu^2}{3Q} = \frac{2d_{n0}}{3 d_n} < 1,
\end{align*}
which shows the last term of $G_n(\vu_n^0,\mathbf{v}_n^0)$ in (10) becomes zero and similarly, for this initialization all the other terms of $G_n(\vu_n^0,\mathbf{v}_n^0)$ becomes zero. As a result, $G(\vu^0,\mathbf{v}^0)$ in (9) becomes zero. The remaining portion of the proof is the same as that of Theorem 3.3 in \cite{li2018rapid}. For step size $\eta \leq 1/C_L$ in algorithm 1, using Lemma \ref{lem:main-lemma}, and \ref{lem:local-regularity} we get the following inequality
\begin{align*}
\|\mZ(\vu^t,\mathbf{v}^t)-\mZ(\vh_0,\vx_0)\|_F \leq  \tfrac{\varepsilon d_0}{\sqrt{2N\kappa}} (1-\eta\omega)^{t/2} \varepsilon d_0  + 60 \sqrt{N}\|\setA^*(\ve)\|_{2 \rightarrow 2}.
\end{align*}

\end{proof}
\section{Proof of Theorem 2}\label{sec:Theorem2-Proof}
To obtain a good initial guess: $\{\vu_{n0},\mathbf{v}_{n0}\}_{n=1}^N \in \frac{1}{\sqrt{3}}\setN_{d_0} \cap \frac{1}{\sqrt{3}}\setN_{\mu}  \cap \frac{1}{\sqrt{3}}\setN_{\nu}\cap \setN_{\frac{2\varepsilon}{5\sqrt{N}\kappa}}$, from observation $\vy$, and the model $\setA$, we now proof the Theorem 2 stated in section IV of the paper. 

\begin{proof} 
	For a choice $(\vh,\vx) = (\boldsymbol{0},\boldsymbol{0})$, $\delta d_0 = \|\mZ(\vh,\vx)-\mZ(\vh_0,\vx_0)\|_F = \|\mZ(\vh_0,\vx_0)\|_F = d_0$ gives $\delta = 1$. For $\delta = 1$, Lemma \ref{lem:local-RIP} gives $(1-\xi) \|\mZ(\vh_0,\vx_0)\|_F^2 \leq \|\setA(\mZ(\vh_0,\vx_0)\|_2^2 \leq (1+\xi) \|\mZ(\vh_0,\vx_0)\|_F^2$ with minimum probability $ 1-2\exp\left(-c\xi^2QN/\mu^2\nu^2\kappa^2\right),$ and sample complexity, for $\xi$-RIP to hold,  
	\begin{align}\label{eq:sample-complexity-initialization}
	Q \geq \frac{c\kappa^2 N}{\xi^2}\left( \mu^2 \nu^2_{\max} K + \nu^2 M\right)\log^4(LN).
	\end{align}
    We can restate the $\xi$-RIP condition as 
\begin{align*}
|\< (\setA_n^*\setA_n-\setI)(\vh_{n0}\vx_{n0}^*), \vh_{n0}\vx_{n0}^* \>| 
&\leq \xi \|\vh_{n0}\vx_{n0}^*\|_F^2,
\end{align*}
and hence $\|\setA_n^*\setA_n(\vh_{n0}\vx_{n0}^*) - \vh_{n0}\vx_{n0}^*\|_{2 \rightarrow 2} \leq \xi d_{n0}.$  By applying triangle inequality gives,
\begin{align}\label{eq:A*(y)-h0x0-operator-norm}
\|\setA_n^*(\hat{\vy})-\vh_{n0}\vx_{n0}^*\|_{2\rightarrow 2} \leq   \|\setA_n^*\setA_n(\vh_{n0}\vx_{n0}^*) -\vh_{n0}\vx_{n0}^*\|_{2 \rightarrow 2} + \|\setA_n^*(\vw_n)\|_{2 \rightarrow 2}
\end{align}
where $\vw_n = \sum_{m \neq n} \setA_m(\vh_{m0}\vx_{m0}^*) +\ve$. Choosing $L \geq \frac{(\mu_h^2  + \sigma^2) N^2 \kappa^4}{\varepsilon^2} c^\prime_{\alpha} \max(M,\alpha K \log(L))\log(L)$, we can write \eqref{eq:A*(y)-h0x0-operator-norm} as
\begin{align}\label{eq:A*(y)-h0x0-operator-norm-simplified}
\|\setA_n^*(\hat{\vy})-\vh_{n0}\vx_{n0}^*\|_{2\rightarrow 2}
 \leq \xi d_{n0} +  \frac{2\varepsilon d_{n0}}{50\sqrt{N}\kappa} \leq  \frac{3\varepsilon d_{n0}}{50\sqrt{N}\kappa} := \gamma d_{n0},
\end{align}
by using $t=\frac{2\epsilon d_{n0}}{50\sqrt{N}\kappa}$ from Lemma \ref{lem:noise-stability},  and choosing $\xi = \tfrac{\varepsilon}{50\sqrt{N}\kappa}$ we get the last inequality. Recall that the leading singular value of $\setA_n^*(\hat{\vy})$ is denoted as $d_n$, $\hat{\vx}_{n0}$, and $\hat{\vh}_{n0}$ shows the corresponding right, and left singular vectors, respectively. Thus, we can write $|d_n- d_{n0}| \leq \tfrac{3\varepsilon}{50\sqrt{N}\kappa} d_{n0}$ and we conclude that $0.9 d_{n0} \leq d_n \leq 1.1d_{n0}$ with $\varepsilon \leq \tfrac{1}{15}$. 

In Algorithm 2, 
$\sqrt{d_n} \vx_{n0}$ is projected onto the convex set $\setZ = \{\vz | \sqrt{Q} \|\mC_n\vz\|_\infty \leq 2 \sqrt{d_n} \nu \}$ to get initialization $\mathbf{v}_{n0}$ of $\vx_{n0}$.
Now $\mathbf{v}_{n0} \in \setZ$ implies that $\sqrt{Q}\|\mC_n \mathbf{v}_{n0}\|_\infty \leq 2 \sqrt{d_n} \nu \leq \frac{4\nu}{\sqrt{3}}$, and hence $\{\mathbf{v}_{n0}\}_{n=1}^N \in \frac{1}{\sqrt{3}} \setN_{\nu}$ for all $1\leq n \leq N$. We can write it as,
\begin{align}\label{eq:projection-lemma-result}
\|\sqrt{d_n}\hat{\vx}_{n0}- \vp\|_2^2 & = 
\|\sqrt{d_n}\hat{\vx}_{n0} - \mathbf{v}_{n0}\|_2^2 + 2\Re{\<\hat{\vx}_{n0}-\mathbf{v}_{n0},\mathbf{v}_{n0}-\vp\>}+\|\mathbf{v}_{n0}-\vp\|_2^2\notag\\
&\geq \|\sqrt{d_n}\hat{\vx}_{n0} - \mathbf{v}_{n0}\|_2^2+\|\mathbf{v}_{n0}-\vp\|_2^2
\end{align}
for all $\vp \in \setZ$, where Lemma \ref{lem:convex-set-projection} is used on the inner product. Above inequality gives $\|\mathbf{v}_{n0}\|_2 \leq \sqrt{d_n} \leq \frac{2}{\sqrt{3}}$ by choosing $\vp = \boldsymbol{0} \in \setZ$, and hence $\{\mathbf{v}_{n0}\}_{n=1}^N \in \frac{1}{\sqrt{3}}\setN_{d_0}$. We have thus shown that $\{\mathbf{v}_{n0}\}_{n=1}^N \in \frac{1}{\sqrt{3}}\setN_{d_0} \cap \frac{1}{\sqrt{3}}\setN_{\nu}$. Similarly, we can show that $\{\vu_{n0}\}_{n=1}^N \in \frac{1}{\sqrt{3}}\setN_{d_0} \cap \frac{1}{\sqrt{3}}\setN_{\mu}$. 

Now, we show that $\{\vu_{n0},\mathbf{v}_{n0}\}_{n=1}^N \in \setN_{\frac{2\varepsilon}{5\sqrt{N}\kappa}}$ by starting with $\|\setA_n^*(\hat{\vy})-\vh_{n0}\vx_{n0}^*\|_{2 \rightarrow 2} \leq d_{n0} \gamma$, let $i$th leading singular value of $\setA_n^*(\hat{\vy})$ is represented as $\sigma_i(\setA_n^*(\hat{\vy}))$, this indicates that $\sigma_i(\setA_n^*(\hat{\vy})) \leq \gamma$ for $i \geq 2$. Applying \eqref{eq:A*(y)-h0x0-operator-norm} and utilizing the triangle inequality, we obtain

\begin{align}\label{eq:distnace-initialization}
\|d_n \hat{\vh}_{n0}\hat{\vx}_{n0}^* - \vh_{n0} \vx_{n0}^*\|_{2 \rightarrow 2} \leq \|\setA_n^*(\hat{\vy}) - d_n \hat{\vh}_{n0}\hat{\vx}_{n0}^*\|_{2 \rightarrow 2} +  \|\setA_n^*(\hat{\vy}) - \vh_{n0}\vx_{n0}^*\|_{2 \rightarrow 2} \leq 2 d_{n0}\gamma,
\end{align}
We also have 
\begin{align*}
 \left\|\hat{\vx}_{n0}^*(\mI-\frac{\vx_{n0}\vx_{n0}^* }{d_{n0}})\right\|_2 = & \left\|(\hat{\vx}_{n0}\hat{\vh}_{n0}^* \hat{\vh}_{n0} \hat{\vx}_{n0}^*)(\mI-\frac{\vx_{n0}\vx_{n0}^*}{d_{n0}}) \right\|_F \\
 = & \left\| \frac{\hat{\vx}_{n0} \hat{\vh}_{n0}^*}{d_{n0}}(\setA_n^*(\vy) - d_n \hat{\vh}_{n0}\hat{\vx}_{n0}^* + \hat{\vh}_{n0}\hat{\vx}_{n0}^* - \vh_{n0}\vx_{n0}^*) (\mI-\frac{\vx_{n0}\vx_{n0}^*}{d_{n0}})\right\|_F\\
 \leq & \left\|\hat{\vx}_{n0}\hat{\vh}_{n0}^* \frac{1}{d_{n0}}(\setA_n^*(\vy)-\vh_{n0}\vx_{n0}^*)(\mI-\frac{\vx_{n0}\vx_{n0}^*}{d_{n0}})\right\|_F + |\frac{d_n}{d_{n0}}-1| \\  \leq & 2\gamma,
\end{align*}
where we used $\vh_{n0}\vx_{n0}^*(\mI-\frac{\vx_{n0}\vx_{n0}^*}{d_{n0}}) = \mathbf{0}$ to get the second equality, and $\hat{\vx}_{n0}\hat{\vh}_{n0}^*(\setA_n^*(\vy)-d_n\hat{\vh}_{n0}\hat{\vx}_{n0}^*) = \mathbf{0}$. Let $\beta_0 = \frac{\sqrt{d_n}}{d_{n0}}\hat{\vx}_{n0}^*\vx_{n0}$, we can write the above inequality as 
\begin{align}\label{eq:initialization-ineq}
\|\sqrt{d_n} \hat{\vx}_{n0} - \beta_0 \vx_{n0}\|_2 \leq 2 \sqrt{d_n} \gamma.
\end{align}
Note that $\vp = \beta_0 \vx_0 \in \setZ$, which follows because $\sqrt{Q} |\beta_0| \|\mC_n \vx_{n0}\|_\infty = |\beta_0| \nu \leq \sqrt{d_n} \nu < 2 \sqrt{d_n} \nu$. Thus, employing $\vp = \beta_0 \vx_0 \in \setZ$ in \eqref{eq:projection-lemma-result}, we obtain $\|\sqrt{d_n} \hat{\vx}_{n0} - \beta_0 \vx_{n0}\|_2 \geq \|\mathbf{v}_{n0} - \beta_0 \vx_{n0}\|_2$. Combining this with \eqref{eq:initialization-ineq} results in
 
\begin{align}\label{eq:initialization-ineq-final}
\|\mathbf{v}_{n0}-\beta_0\vx_{n0}\|_2 \leq 2 \sqrt{d_n} \gamma. 
\end{align}
Similarly, we can show that 
\begin{align}
\|\vu_{n0} - \alpha_{0} \vh_{n0}\|_2 \leq 2 \sqrt{d_n} \gamma,
\end{align}
where $\alpha_{0} = \frac{ \sqrt{d_n}}{d_{n0}} \vh_{n0}^*\hat{\vh}_{n0}$. Finally, 
\begin{align*}
\|\vu_{n0}\mathbf{v}_{n0}^* - \vh_{n0} \vx_{n0}^*\|_F & \leq \|\vu_{n0}\mathbf{v}_{n0}^* - \beta_{0} \vu_{n0} \vx_{n0}^*\| _F  + \|\beta_{0} \vu_{n0} \vx_{n0}^*  - \alpha_{0}\beta_0 \vh_{n0} \vx_{n0}^*\|_F + \|\alpha_{0}\beta_0 \vh_{n0}\vx_{n0}^* - \vh_{n0}\vx_{n0}^*\|_F \\
&\leq \|\vu_{n0}\|_2\|\mathbf{v}_{n0}-\beta_{0}\vx_{n0}\|_2 + |\beta_{0}|\|\vx_{n0}\|_2 \| \vu_{n0} - \alpha_0 \vh_{n0}\|_2 \\
&\qquad + \|(d_n / d_{n0}) \vh_{n0}\vh_{n0}^*\hat{\vh}_{n0}\hat{\vx}_{n0}^* \vx_{n0}\vx_{n0}^*  - \vh_{n0} \vx_{n0}^*\|_F\\
& = \|\vu_{n0}\|_2 \|\mathbf{v}_{n0}-\beta_{0}\vx_{n0}\|_2 +  |\beta_{0}| \| \vu_{n0} - \alpha_0 \vh_{n0}\|_2   + \|d_n\hat{\vh}_{n0}\hat{\vx}_{n0}^* - \vh_{n0} \vx_{n0}^*\|_F \\
& \leq \frac{2}{\sqrt{3}} \cdot 2\sqrt{d_n} \gamma + \sqrt{d_n} \cdot 2 \sqrt{d_n} \gamma + 2d_{n0}\gamma < \frac{20}{3}d_{n0} \gamma,
\end{align*}
which shows that $\|\vu_{n0}\mathbf{v}_{n0}^*-\vh_{n0}\vx_{n0}^*\|_F \leq \frac{2\varepsilon}{5\sqrt{N}\kappa}d_{n0}$ using $\gamma$ defined in \eqref{eq:A*(y)-h0x0-operator-norm}. This shows that $\{\vu_{n0},\mathbf{v}_{n0}\}_{n=1}^N \in \setN_{\frac{2\varepsilon}{5\sqrt{N}\kappa}}$. Plugging $\xi = \tfrac{\varepsilon}{50\sqrt{N}\kappa}$ in \eqref{eq:sample-complexity-initialization} gives the claimed sample complexity and  $1-2\exp\left(-c\xi^2QN/\mu^2\nu^2 \kappa^2 \right)$ gives the probability. 
\end{proof}
\begin{lem}[Theorem 2.8 in \cite{escalante2011alternating}]\label{lem:convex-set-projection} Consider a closed convex set that is not empty is denoted by $\setW$, so we can write
	\begin{align*}
	\Re{\<\vq-\setP_{\setW}(\vq),\vw - \setP_{\setW}(\vq)\>} \leq 0, \ \forall \vw \in \setW, \vq \in \C^Z,
	\end{align*}
	where the projection of $\vq$ onto $\setW$ is represented as $\setP_{\setW}(\vq)$. 
\end{lem}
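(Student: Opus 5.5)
The plan is the standard variational argument: compare the distance from $\vq$ to $\setP_{\setW}(\vq)$ with its distance to points on the segment joining $\setP_{\setW}(\vq)$ to an arbitrary $\vw \in \setW$. Since $\C^Z$ with $\<\vu,\vw\> = \vw^*\vu$ is a finite-dimensional Hilbert space over $\R$ with real inner product $\Re{\<\cdot,\cdot\>}$, no Euclidean-specific facts are needed.

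\textbf{Step 1: the projection is well defined.} Set $\vp := \setP_{\setW}(\vq)$, meaning a minimizer of $\|\vq - \vz\|_2$ over $\vz \in \setW$.
\begin{itemize}
\item \emph{Existence.} The minimum can be restricted to $\setW \cap \{\vz : \|\vq-\vz\|_2 \leq \|\vq - \vz_1\|_2\}$ for any fixed $\vz_1 \in \setW$. This set is nonempty, closed and bounded in $\C^Z$, hence compact, and the norm is continuous, so a minimizer exists.
\item \emph{Uniqueness.} Suppose $\vp_1,\vp_2$ both attain the minimal distance $d$. Convexity gives $(\vp_1+\vp_2)/2 \in \setW$. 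The parallelogram law gives
\[
\Big\|\vq - \tfrac{\vp_1+\vp_2}{2}\Big\|_2^2 = d^2 - \tfrac14\|\vp_1-\vp_2\|_2^2 .
\]
Minimality then forces $\vp_1=\vp_2$.
\end{itemize}

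\textbf{Step 2: perturb along a segment.} Fix $\vw \in \setW$ and $t \in (0,1]$, and set $\vw_t := \vp + t(\vw-\vp)$. By convexity $\vw_t \in \setW$, so minimality gives $\|\vq - \vw_t\|_2^2 \geq \|\vq-\vp\|_2^2$. Expanding $\vq - \vw_t = (\vq-\vp) - t(\vw-\vp)$ turns this into
\[
-2t\,\Re{\<\vq-\vp,\vw-\vp\>} + t^2\|\vw-\vp\|_2^2 \geq 0 .
\]
Dividing by $t>0$ and letting $t \downarrow 0$ yields $\Re{\<\vq-\vp,\vw-\vp\>} \leq 0$, which is the claim.

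The only point needing care is this expansion with complex inner products. The cross term is $2\Re{\<\vq-\vp,\vw-\vp\>}$ because $\<\va,\vb\> + \<\vb,\va\> = 2\Re{\<\va,\vb\>}$. This is exactly why the statement involves the real part. Otherwise there is no real obstacle.

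This is also the form in which the lemma is used in \eqref{eq:projection-lemma-result}: with $\vq = \sqrt{d_n}\hat{\vx}_{n0}$ and $\setW = \setZ$, the inequality controls the cross term. This yields $\|\sqrt{d_n}\hat{\vx}_{n0} - \vp\|_2^2 \geq \|\sqrt{d_n}\hat{\vx}_{n0} - \mathbf{v}_{n0}\|_2^2 + \|\mathbf{v}_{n0}-\vp\|_2^2$ for all $\vp\in\setZ$.
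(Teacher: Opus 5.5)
Your proof is correct. The paper does not prove this lemma itself but only cites Theorem 2.8 of \cite{escalante2011alternating}, and your argument is the standard proof of that cited characterization: well-definedness of $\setP_{\setW}$ via compactness and the parallelogram law, then expanding $\|\vq-\vw_t\|_2^2$ along the segment, dividing by $t$ and letting $t\downarrow 0$. Your reading of its use in \eqref{eq:projection-lemma-result} is also right, and it tacitly fixes a typo there: the cross term should be $2\Re{\<\sqrt{d_n}\hat{\vx}_{n0}-\mathbf{v}_{n0},\mathbf{v}_{n0}-\vp\>}$, which is nonnegative because it equals $-2\Re{\<\vq-\setP_{\setW}(\vq),\vp-\setP_{\setW}(\vq)\>}$ with $\vq=\sqrt{d_n}\hat{\vx}_{n0}$ and $\setW=\setZ$.
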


In this appendix, we have proved the Theorems 1 and 2 stated in our paper.

\bibliographystyle{IEEEtran}
\bibliography{references}

\end{document}